\documentclass[submission,copyright,creativecommons]{eptcs}
\providecommand{\event}{AiML 2026} 

\usepackage{aiml26}
\usepackage{iftex}

\ifpdf
  \usepackage{underscore}         % Only needed if you use pdflatex.
  \usepackage[T1]{fontenc}        % Recommended with pdflatex
\else
  \usepackage{breakurl}           % Not needed if you use pdflatex only.
\fi

\usepackage{tikz}
\usepackage{graphicx}
\usepackage{amssymb}
\usepackage{amsmath}
\usepackage{xcolor}
\usepackage{relsize}
\usepackage{mdframed}
\usepackage{linguex}
\tikzset{
  opaque/.style={
    fill=gray,
    fill opacity=.1
}}

\title{Inquisitive Action Logic
}
\author{Ivano Ciardelli
\institute{University of Padua, Italy}
\email{ivano.ciardelli@unipd.it}}

\newcommand{\titlerunning}{Inquisitive Action Logic}
\newcommand{\authorrunning}{I. Ciardelli}

\hypersetup{
  bookmarksnumbered,
  pdftitle    = {\titlerunning},
  pdfauthor   = {\authorrunning},
  pdfsubject  = {EPTCS},               % Consider adding a more appropriate subject or description
  pdfkeywords = {inquisitive modal logic, logic of action, determination, actual effectivity functions} % Uncomment and enter keywords specific to your paper
}

\newcommand{\inqal}{\textsf{InqAL}}
\newcommand{\inqnl}{\textsf{InqNL}}
\newcommand{\inqnla}{\ensuremath{\textsf{InqNL}_{\A}}}

\newcommand{\To}{\Rrightarrow}
\newcommand{\Toa}{\Rrightarrow_{a}\!}

\newcommand{\A}{\ensuremath{\mathcal{A}}}
\renewcommand{\L}{\ensuremath{\mathcal{L}}}
\newcommand{\K}{\ensuremath{\mathcal{K}}}
\newcommand{\La}{\ensuremath{\mathcal{L}_{\mathcal{A}}}}
\newcommand{\Lan}{\ensuremath{\mathcal{L}_{\mathcal{A}}^n}}
\newcommand{\Lad}{\ensuremath{\mathcal{L}_{\mathcal{A}}^!}}
\newcommand{\Land}{\ensuremath{\mathcal{L}_{\mathcal{A}}^{!n}}}
\renewcommand{\P}{\ensuremath{\mathcal{P}}}
\newcommand{\R}{\ensuremath{\mathcal{R}}}

\newcommand{\lori}{\mathbin{\rotatebox[origin=c]{-90}{$\geqslant$}}}
\def\DISJI{\ensuremath{\mathlarger{\mathbin{{\setminus}\mspace{-5mu}{\setminus}}\hspace{-0.33ex}/}}\xspace}
\newcommand{\Lori}{\DISJI}
 
\newcommand{\quotes}[1]{``#1''}
\renewcommand{\phi}{\varphi}
\newcommand{\md}{\text{md}}

\DeclareFontFamily{U}{MnSymbolC}{}
\DeclareSymbolFont{MnSyC}{U}{MnSymbolC}{m}{n}
\DeclareMathSymbol{\diamondplus}{\mathbin}{MnSyC}{"7C}
\DeclareFontShape{U}{MnSymbolC}{m}{n}{
    <-6>  MnSymbolC5
   <6-7>  MnSymbolC6
   <7-8>  MnSymbolC7
   <8-9>  MnSymbolC8
   <9-10> MnSymbolC9
  <10-12> MnSymbolC10
  <12->   MnSymbolC12}{}

\newcommand{\ibox}{\boxplus}
\newcommand{\wbox}{\boxtimes}
\newcommand{\idia}{\diamondplus}

\renewcommand{\aa}{\alpha}
\newcommand{\bb}{\beta}
\renewcommand{\gg}{\gamma}

\newcommand{\ActP}{\textsf{ActP}}
\newcommand{\act}{\textsf{act}}
\newcommand{\out}{\textsf{out}}

\let\Sec\S
\renewcommand{\S}{\ensuremath{\mathcal{S}}}

\newcommand{\F}{F}
\newcommand{\M}{\ensuremath{\mathcal{M}}}

\newcommand{\vdashn}{\ensuremath{\vdash_{\textsf{N}}}}
\newcommand{\vdasha}{\ensuremath{\vdash_{\textsf{A}}}}

\newcommand{\preqn}{\ensuremath{\dashv\vdash_{\textsf{N}}}}
\newcommand{\preqa}{\ensuremath{\dashv\vdash_{\textsf{A}}}}

\newcommand{\Togamma}{\Rrightarrow_\Gamma^a}

\newcommand{\Left}{\textsf{L}}
\newcommand{\Right}{\textsf{R}}

\begin{document}
\maketitle

\begin{abstract} We introduce \emph{inquisitive action logic}, \inqal, a multi-agent modal logic for reasoning about action. While traditional approaches focus on what \emph{properties} of the outcome an agent can \emph{force}, \inqal\ also captures what \emph{aspects} of the outcome an agent \emph{determines} through their actions. As we argue, such claims of agentive determination are naturally analyzed as modal claims involving questions. 

Technically, \inqal\ is a multi-agent extension of inquisitive neighborhood logic \cite{Ciardelli:25neighborhood} based on concurrent game structures. With respect to statements, it is expressively equivalent to the individual-agent fragment of the socially friendly coalition logic recently proposed by Goranko and Enqvist~\cite{GorankoEnqvist:18}. 

We present an axiomatization of \inqal\ and prove completeness and decidability via the finite model property. Along the way, we establish a representation theorem for \emph{actual effectivity functions}, associating to an agent the sets of outcomes corresponding to their possible actions; we give exact conditions under which a multi-agent neighborhood frame arises from a concurrent game structure.
\end{abstract}

%%%%%%%%%%%%%%%%%%%%%%%%%%%%%%%%%%%%%%%
%%%%%%%%%%%%%%%%%%%%%%%%%%%%%%%%%%%%%%%

\section{Introduction}
\label{sec:intro}

There is a long tradition of using modal logics to reason about actions, outcomes, and abilities (for surveys, see \cite{Agotnes:15,Segerberg:16}). This enterprise is motivated by philosophical, juridical, and computational concerns: logics of agency have been used, e.g., %to 
%sharpen debates about the metaphysics of time \citep[see, e.g.,][]{Correia:12}, 
to spell out precise notions of responsibility \cite{Lorini:14,Baltag:21}, and for the automated verification of software properties \cite{Demri:16}.
Logics in this tradition typically focus on the powers of agents (or coalitions) in an interactive situation, i.e., on the facts they are able to force through their actions. Something which is just as important, however, is what aspects of the world an agent \emph{determines}, or \emph{influences}, through their actions.
For an illustration, imagine that Alice and Bob are creating clay animals together; Alice moulds an animal with the clay, and Bob paints it. While Alice works, Bob picks a color, without looking at what Alice is moulding. In this scenario, it is natural to say the following:

\ex. 
\a. Alice determines the shape of the animal (but not its color).
\b. Bob determines the color of the animal (but not its shape).

These facts matter, among other things, for ascribing responsibility: suppose little Charlie asks his siblings to make a red cat for him; if what he gets is a blue cat, he should complain to Bob, not to Alice.

How can we analyze a statement like \Last[a]? More precisely, what is the \emph{thing} that Alice determines? Standardly, in modal logic, modalities apply to propositions. However, ``the shape of the animal'' does not denote a proposition; rather, it is associated with a set of propositions: that the animal is a dog, that the animal is a cat, etcetera. What Alice determines is not a specific one of these propositions, but \emph{which} among these propositions is going to be actualized. In other words, what Alice determines is the answer to a question, namely, the question ``what the shape is''. Determining, then, can be seen as a relation between an agent and a question. What does this relation amount to? Here is a natural idea: an agent can be said to determine a question if the answer to the question is not settled \emph{a priori}, but it becomes settled once we fix the agent's action. Implementing this idea takes us into the realm of \emph{inquisitive modal logic}, a quickly growing research area that combines inquisitive logic and modal logic.

Inquisitive logic \cite{Ciardelli:23book} is a framework that allows one to build logical systems including not only, as usual, formulas expressing propositions (e.g., \emph{that the animal is a cat}), but also formulas expressing questions (e.g., \emph{what the shape of the animal is}). %or \quotes{what the color is}). 
One can then define inquisitive modalities that apply to such formulas (see, a.o., \cite{Ciardelli:14aiml,CiardelliRoelofsen:15idel,Ciardelli:18aiml,Gessel:20action,Gessel:21,PuncocharSedlar:21epistemic,PuncocharSedlar:21pdl,CiardelliOtto:21,MeissnerOtto:22,MaricPerkov:24,Ciardelli:25neighborhood}). 
In our case, the idea is to capture agentive determination in terms of an inquisitive modality $\wbox$ indexed by agents %(e.g., $a$ for Alice and $b$ for Bob) 
which, when applied to formulas \textsf{shape} and \textsf{color} formalizing the questions \emph{what is the shape} and \emph{what the color}, delivers modal statements $\wbox_{\text{Alice}}\textsf{shape}$ and $\wbox_{\text{Bob}}\textsf{color}$ capturing, respectively, the claims in \Last[a] and \Last[b].

In this paper, we will develop this idea in detail. We will define an \emph{inquisitive action logic}, \inqal, which allows us to analyze statements like \Last[a] and \Last[b] along the lines we described, but which also recovers the modal account of ability familiar from the classical work of Brown \cite{Brown:88} and more recent work on coalition logic, \textsf{CL} \cite{Pauly:02}. Our logic is a multi-agent version of a recently developed system, \emph{inquisitive neighborhood logic}, \inqnl\ \cite{Ciardelli:22aiml,Ciardelli:25neighborhood}. \inqnl\ is based on a binary modality $\To$, acting semantically as a strict conditional quantifying over neighborhoods; analogously, our multi-agent extension \inqal\ contains one such modality $\Toa$ for each agent $a$; in terms of this modality, some derivative unary modalities (including the modality $\wbox_a$ mentioned above) can be defined. Besides being multi-agent, \inqal\ is geared specifically towards an interpretation in terms of action; as a consequence, while \inqnl\ is interpreted over arbitrary neighborhood models, \inqal\ starts with more concrete structures known as \emph{concurrent game structures} or \emph{multi-agent transition systems} \cite{Alur:02,Pauly:02,GorankoDrimmelen:06,Bulling:16}, modeling processes in which a system transitions from a state to another based on the joint actions of multiple agents. Now, from such models one can extract a corresponding neighborhood model, where the neighborhoods for a given agent correspond to the actions available to them, each neighborhood encoding the range of outcomes that may possibly result if the agent performs that action. The neighborhood maps induced in this way are known in the literature as \emph{actual effectivity functions} \cite{Bulling:16} (in contrast with the \emph{effectivity functions} studied in coalition logic \cite{Pauly:02,Goranko:13}, which are upward-closed sets of neighborhoods). The problem of characterizing which families of neighborhood maps are the actual effectivity functions of some concurrent game structure is currently open. In this paper, we settle this problem for the case in which we only consider single agents, and not proper coalitions. This result is interesting in its own right, but it also plays a key role in our study of \inqal, giving us a characterization of the relevant class of neighborhood models.

Our logic \inqal\ is also connected to \emph{socially friendly coalition logic} (\textsf{SFCL}), an extension of coalition logic recently proposed by Goranko and Enqvist \cite{GorankoEnqvist:18}. Whereas standard coalition logic focuses on what outcomes agents can \emph{force} through their actions, \textsf{SFCL} also captures what outcomes agents can \emph{enable} for others, while still forcing their own goals. These properties are crucial for the possibility of cooperation between agents (hence the name \emph{socially friendly}). Technically, \textsf{SFCL} is a multi-agent extension of \emph{instantial neighborhood logic}, \textsf{INL} \cite{Benthem:17}. As proved in \cite{Ciardelli:25neighborhood}, inquisitive neighborhood logic has, with respect to statements, the same expressive power as \textsf{INL}. In the same vein, our inquisitive action logic has, with respect to statements, the same expressive power as \textsf{SFCL}. Still, as we will see, the way modal claims are expressed in these logics is very different; in particular, the translation from \inqal\ to \textsf{SFCL} can lead to an exponential increase in the size of the formula.

The paper is structured as follows. In \Sec2 we provide a characterization of those neighborhood frames that arise as actual effectivity functions from some concurrent game structure. In \Sec3 we introduce \inqnla, a multi-agent version of inquisitive neighborhood logic. In \Sec4 we introduce \inqal, which interprets \inqnla\ over concurrent game models, and discuss how this logic allows us to express interesting properties relating to what agents can force, enable, or determine in an interactive situation. In \Sec5 we give an axiomatization of \inqal\ and prove completeness and decidability. In \Sec6 we compare the expressive power of our logic with that of Socially Friendly Coalition Logic. \Sec7 concludes.

%%%%%%%%%%%%%%%%%%%%%%%%%%%%%%%%%%%%%%%
%%%%%%%%%%%%%%%%%%%%%%%%%%%%%%%%%%%%%%%

\section{Characterizing actual effectivity functions of agents}

Concurrent game structures \cite{Alur:02,Pauly:02,Bulling:16} model processes in which the transition from one stage to the next is determined by the simultaneous actions of multiple agents. The formal definition is as follows. %This idea is captured formally by the following definition. %The formal definition is as follows. 

\begin{definition} Let $\A=\{a_1,\dots,a_k\}$ be a finite set of agents.  A \emph{concurrent game structure} (or CGS for short) for $\A$ is a triple $\S=(W,\act,\out)$ where:
\begin{itemize}
%\item $\Agt=\{1,\dots,k\}$ for some $k$ is a set of agents
\item $W$ is a non-empty set of \emph{worlds}, representing stages of the process;
%and $\Act$ a non-empty set of \emph{actions}.
\item $\act$ is a function which assigns to each agent $a$ and world $w$ a non-empty set $\act(a,w)$ of actions.
%:\A\times W\to\wp(\Act)-\{\emptyset\}$ assigns to each agent and world a nonempty set of available actions. 

A function $\tau$ assigning to each agent $a_i$ a corresponding action $\tau(a_i)\in\act(a_i,w)$ is called an \emph{action profile} at $w$; the set of action profiles at $w$ is denoted $\ActP(w)$.
\item $\out$ is a function that, given a world $w$ and an action profile $\tau$ at $w$, returns a world $\out(w,\tau)$; intuitively, $\out(w,\tau)$ is the world that results from $w$ when each agent $a_i$ performs the action $\tau(a_i)$.
%\item $V:\P\to\wp(W)$ is a valuation function, assigning to each atom a set of worlds where it is true.
\end{itemize}
\end{definition}
\noindent
At a world $w$ in a CGS, a range of outcomes are possible, depending on the choices of the agents, namely: $$O(w)=\{\out(w,\tau)\mid\tau\in\ActP(w)\}$$ 
By picking a particular action, an agent typically narrows down the set of possible outcomes. The set of outcomes which are possible from $w$ given that agent $a$ performs action $\tau_a\in\act(a,w)$ is:
$$O_{a}(w,\tau_a)=\{\out(w,\tau)\mid\tau\in\ActP,\tau(a)=\tau_a\}$$
We call this the \emph{output set} of action $\tau_a$ for $a$ at $w$. 
By collecting the output sets of all actions available to an agent, we obtain their \emph{actual effectivity function}. More precisely, the actual effectivity function of agent $a$ in a CGS $\S$ is the map $\Sigma_a^\S:W\to\wp\wp(W)$ given by:
$$\Sigma_a^\S(w)=\{O_a(w,\tau_a)\mid \tau_a\in\act(a,w)\}$$
In this way, a CGS determines a corresponding multi-agent neighborhood frame $\F_\S=(W,(\Sigma_a^\S)_{a\in\A})$. But not just \emph{any} neighborhood frame $F=(W,(\Sigma_a)_{a\in\A})$, where $\Sigma_a:W\to\wp\wp(W)$, can be a system of actual effectivity functions of some CGS. This raises a natural question: \emph{which} neighborhood frames arise from some CGS? The following theorem provides an answer to this question.\footnote{A reviewer asks why it is difficult to extend this result from individual agents to coalitions. The reason is that coalitions can be related to each other by inclusion or overlap; in these cases, there are complex interplays between their effectivity functions.}

\begin{theorem}\label{theor:characterization} Let $\A=\{a_1,\dots,a_k\}$ be a finite set of agents with $k>1$.\footnote{\label{fn:single agent characterization}When $k=1$, i.e., in the single-agent case, the characterization is trivial: as the reader can readily verify, a neighborhood frame $\F=(W,\Sigma)$ is induced by a CGS if and only if $\Sigma(w)$ is a non-empty set of singletons for every $w\in W$. We therefore focus on the interesting multi-agent case with $k>1$.} 
For a multi-agent neighborhood frame $\F=(W,(\Sigma_a)_{a\in\A})$ the following are equivalent:
\begin{enumerate}
\item $\F=\F_\S$ for some concurrent game structure $\S$.
\item The following three conditions are satisfied for each world $w\in W$:
\begin{enumerate}
\item Existence of actions: for all $a\in\A$, $\Sigma_a(w)\neq\emptyset$;
%\item Non-empty range: for all $a\in\Agt$, $\bigcup\Sigma_a(w)\neq\emptyset$
\item Uniform range: for all $a,b\in\A$, $\bigcup\Sigma_a(w)=\bigcup\Sigma_b(w)$;
\item Independence: for all $s_1\in\Sigma_{a_1}(w),\dots,s_k\in\Sigma_{a_k}(w): s_1\cap\dots\cap s_k\neq\emptyset$.
\end{enumerate}
\end{enumerate}
\end{theorem}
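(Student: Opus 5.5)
The direction from 1 to 2 is a direct check, so the plan is to verify it quickly and then spend most of the effort on a construction for the converse.

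\emph{From 1 to 2.} Suppose $\F=\F_\S$. Existence of actions holds because each $\act(a,w)$ is non-empty, and each output set $O_a(w,\tau_a)$ is non-empty as well. Uniform range holds because $\bigcup\Sigma_a(w)=O(w)$ for every agent $a$: every action profile assigns some action to $a$. For independence, take $s_i=O_{a_i}(w,\tau_{a_i})$. The profile $\tau$ with $\tau(a_i)=\tau_{a_i}$ for every $i$ yields $\out(w,\tau)\in s_1\cap\dots\cap s_k$.

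\emph{From 2 to 1.} We build the CGS world by world, so fix $w$. A first attempt would take $\act(a,w)=\Sigma_a(w)$ and let $\out(w,(s_1,\dots,s_k))$ be some element of $s_1\cap\dots\cap s_k$, which is non-empty by independence. The problem is that the output set of $s_i$ might then be a proper subset of $s_i$, since some points of $s_i$ may never be chosen. We need every $v\in s$, for every $s\in\Sigma_a(w)$, to be reached by some profile in which $a$ plays $s$. The remedy is to enlarge the action sets so that the other agents carry ``pointers''. Let $U=\bigcup\Sigma_a(w)$; this set is the same for all $a$ by uniform range. For each agent set
\[
\act(a,w)=\Sigma_a(w)\times U .
\]
An action $(s,u)$ is understood as ``play $s$ and vote for outcome $u$''. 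Given a profile $((s_1,u_1),\dots,(s_k,u_k))$, we define the outcome with a modular-sum trick. Identify the agents with $\{0,\dots,k-1\}$, and fix an arbitrary (well-)ordering of $U$ together with a surjection from a group onto the pointers. Simpler still, since $k>1$: let the votes of the agents other than $a_i$ pick out a ``designated'' agent $j$ and a target. Concretely, let $j=\sum_i n_i \bmod k$, where $n_i$ is an index component added to each action, so that actions become $\Sigma_a(w)\times U\times\{0,\dots,k-1\}$. If $u_j\in s_1\cap\dots\cap s_k$, the outcome is $u_j$. Otherwise it is a fixed choice $c(s_1,\dots,s_k)\in s_1\cap\dots\cap s_k$, which exists by independence.

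Next we check that the output sets come out exactly right. Every outcome of a profile in which $a$ plays $s$ lies in $s$, so $O_a(w,(s,u,n))\subseteq s$. For the reverse inclusion, take $v\in s$. Since $k>1$, pick another agent $b\neq a$. By uniform range there is $t\in\Sigma_b(w)$ with $v\in t$. We still need $v$ in the intersection of the sets chosen by all the other agents, and this is the crux. Letting each other agent $c$ choose some $s_c\ni v$ (possible by uniform range) does \emph{not} guarantee that $v$ lies in the full intersection together with $s$. Actually it does: $v\in s$, and each $s_c\ni v$, so $v\in s\cap\bigcap_{c}s_c$. Then $b$ votes $u_b=v$, and $b$ adjusts its index so that the modular sum equals $b$. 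The outcome is then $v$. Hence $O_a(w,(s,u,n))=s$, and so $\Sigma_a^\S(w)=\Sigma_a(w)$. Existence of actions guarantees that every action set is non-empty.

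The main obstacle is this last step: arranging that the other agents can always steer the outcome to any chosen point of $a$'s neighborhood, whatever the rest of $a$'s action (its vote and index) happens to be. The modular index sum handles this, because a single other agent can always complete the sum to select itself, and this is exactly where $k>1$ is used.
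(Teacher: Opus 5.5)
Your proof is correct. The direction from 1 to 2 matches the paper's, but for the converse you use a genuinely different steering mechanism. The paper puts a group structure on all of $W$ and takes actions $(s,v)\in\Sigma_a(w)\times W$. The candidate outcome is the product $v_1*\dots*v_k$ of all votes, with a fallback to the least element of $s_1\cap\dots\cap s_k$ in a fixed well-ordering. Any single other agent can then hit a target $u$ by voting $(v_1*\dots*v_{k-1})^{-1}*u$. You instead leave the votes unstructured and put the group on a finite auxiliary index: $\sum_i n_i \bmod k$ designates the agent whose vote is adopted, and any $b\neq a$ can make itself the designated agent and vote for $v$. Both arguments use $k>1$ in the same way, since some agent other than $a$ moves ``last'' to complete a group equation. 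Both also need a choice of fallback element in each intersection.

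What your route buys is elementarity. The only group you need is $\mathbb{Z}_k$, so you never invoke the non-trivial set-theoretic fact that an arbitrary set carries a group structure. The paper's route is more economical to state: actions are pairs, and there is no designated-agent bookkeeping.

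Two small repairs for the write-up. First, non-emptiness of $\Sigma_a(w)\times U\times\{0,\dots,k-1\}$ requires $U\neq\emptyset$. This follows from (a) together with (c), since independence forces every neighborhood to be non-empty; (a) alone does not give it. Alternatively, let votes range over $W$ as the paper does. Second, delete the abandoned sentence about a surjection from a group onto the pointers, and the ``does \emph{not} guarantee \dots\ Actually it does'' detour, since the final construction uses neither.
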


\begin{proof} $1\Rightarrow 2$ amounts to the claim that the actual effectivity functions of a CGS satisfy conditions (a)-(c) in 2. Condition (a) follows from the fact that $\act(a,w)$ is non-empty by definition. Condition (b) is due to the observation that for every $a$, $\bigcup\Sigma_a^\S(w)$ coincides with the set of all possible outcomes at $w$, $O(w)$. Condition (c) is due to the fact that whenever we take neighborhoods $O_{a_i}(w,\tau_{a_i})\in\Sigma_{a_i}^\S(w)$ for $i=1,\dots,k$, the function $\tau$ defined by $\tau(a_i)=\tau_{a_i}$ is an action profile, and we have $\out(w,\tau)\in O_{a_i}(w,\tau_{i})$ for each $i$. 

To show $2\Rightarrow 1$, suppose $\F=(W,(\Sigma_a)_{a\in\A})$ is a neighborhood frame satisfying (a)-(c). Our aim is to define a CGS $\S=(W,\act,\out)$ with $\F_\S=\F$, i.e., such that for each agent $a$, $\Sigma_a$ coincides with the actual effectivity function $\Sigma_a^\S$ of agent $a$ in $\S$. 

For this, equip the set $W$ with a group structure, so that $(W,e,*,(\cdot)^{-1})$ is a group. Also, fix a well-ordering of $W$, and for any non-empty $s\subseteq W$, let $\min(s)$ denote the least element of $s$ according to this well-ordering.\footnote{Given the axiom of choice, it is well-known that any set admits both a well-ordering and a group structure. No appeal to the axiom of choice is needed if $W$ is finite or countably infinite, as will be the case in our completeness proof in Section 5.} We define the actions available to an agent $a$ at a world $w$ as pairs consisting of a neighborhood for $a$, and a world:
$$\act(a,w)=\{(s,v)\mid s\in\Sigma_a(w), v\in W\}$$
Condition (a) guarantees that this is a non-empty set, as required by the definition of a CGS.

Now given a world $w$, an action profile at $w$ can be identified with a sequence $\tau=((s_1,v_1),\dots,(s_k,v_k))$ where $s_i\in\Sigma_{a_i}(w)$ for $i\le k$. We define the outcome of $\tau$ at $w$ in the following way:
$$\out(w,\tau)=\left\{
\begin{array}{ll}
v_1*\dots*v_k &\text{if }(v_1*\dots*v_k)\in (s_1\cap\dots\cap s_k)\\
\min(s_1\cap\dots\cap s_k) & \text{otherwise}
\end{array}
\right.$$
Condition (c) guarantees that $s_1\cap\dots\cap s_k$ is non-empty, so that $\min(s_1\cap\dots\cap s_k)$ exists, which ensures that $\out(w,\tau)$ is well-defined. Note further that, by definition, $\out(w,\tau)\in s_1\cap\dots \cap s_k$.

We claim that the CGS so defined induces $\F$. To show this, the key step is to prove that for each agent $a$ and action $(s,v)\in\act(a,w)$, the set of outputs that may result from $a$ performing $(s,v)$ is simply~$s$:
\begin{equation}\label{eq:key step}
O_a(w,(s,v))=s
\end{equation}
The inclusion $\subseteq$ is clear: whenever $\tau$ is an action profile that includes the action $(s,v)$, the corresponding outcome $\out(w,\tau)$ is bound to be in $s$ by definition of $\out$. To show the inclusion $\supseteq$, consider an arbitrary world $u\in s$. We need to show that there is an action profile $\tau$ including $(s,v)$ whose outcome is $u$. 

For simplicity, suppose $a=a_1$ and let us write $(s_1,v_1)$ for $(s,v)$. Since $u\in s_1$ and $s_1\in\Sigma_{a_1}(w)$, we have $u\in\bigcup\Sigma_{a_1}(w)$. By condition (b), for each $i>1$ we have $u\in\bigcup\Sigma_{a_i}(w)$, so we can pick a neighborhood $s_i\in\Sigma_{a_i}(w)$ with $u\in s_i$. Now for $1<i<k$, pick some worlds $v_i\in W$ arbitrarily. Finally, let $v_k$ be the element $(v_1*\dots*v_{k-1})^{-1}*u$. Now consider the action profile $\tau=((s_1,v_1),\dots,(s_k,v_k))$.
We have $v_1*\dots*v_k=(v_1*\dots*v_{k-1})*(v_1*\dots*v_{k-1})^{-1}*u=u$. By the way the sets $s_i$ were chosen we know that $u\in s_1\cap\dots\cap s_k$, and so by definition of $\out$ we have $\out(w,\tau)=u$. Since the profile $\tau$ includes the action $(s,v)$, this shows that $u\in O_a(w,(s,v))$, as required to complete the proof of (\ref{eq:key step}).

Finally, using (\ref{eq:key step}), the actual effectivity function for agent $a$ turns out to be
$$\Sigma_a^\S(w)=
\{O_a(w,\tau_a)\mid\tau_a\in\act(a,w)\}=
\{O_a(w,(s,v))\mid s\in \Sigma_a(w),v\in W\}=
\{s\mid s\in \Sigma_a(w)\}=
\Sigma_a(w)$$
which shows that indeed, $\F_\S=\F$, as desired.
\end{proof}

%%%%%%%%%%%%%%%%%%%%%%%%%%%%%%%%%%%%%%%
%%%%%%%%%%%%%%%%%%%%%%%%%%%%%%%%%%%%%%%

\section{Inquisitive multi-agent neighborhood logic}
\label{sec:inqnl}

In this section we present a multi-agent version of inquisitive neighborhood logic, \inqnl\ \cite{Ciardelli:25neighborhood}. We call this logic $\inqnla$ where $\A=\{a_1,\dots,a_k\}$ is a finite set of agents. The proofs of the results in this section are the same as for \inqnl; we do not repeat them here, but include pointers to the results in \cite{Ciardelli:25neighborhood}. 

\medskip\noindent\textbf{Syntax.} Given a finite set $\P$ of atoms, the set $\La$ of \inqnla-formulas is given~by:
$$\phi\;::=\;p\mid\bot\mid(\phi\land\phi)\mid(\phi\to\phi)\mid(\phi\lori\phi)\mid(\phi\Toa\phi)$$
where $p\in\P$ and $a\in\A$. As standard in inquisitive logic, we also use the following abbreviations: $\neg\phi:=(\phi\to\bot)$; $\top:=\neg\bot$; $(\phi\lor\psi):=\neg(\neg\phi\land\neg\psi)$; ${?\phi}:=(\phi\lori\neg\phi)$.
The connective $\lori$, called \emph{inquisitive disjunction}, is regarded as a question-forming disjunction. Thus, e.g., the formula ${?p}$ (short for $p\lori\neg p$) is regarded as the question \emph{whether or not $p$}. The operators $\To_a$ are indexed versions of the binary modality $\To$ \inqnl. Two unary modalities (read \quotes{window} and \quotes{kite}) are defined in terms of $\Toa$ as follows:
$$\ibox_a\phi:=(\top\Toa\phi)\qquad \idia_a\phi:=\neg(\phi\Toa\bot)$$
An important fragment of the language is given by \emph{declarative formulas} (or \emph{declaratives}, for short), where inquisitive disjunction occurs only in the scope of a modal operator. More formally, the set \Lad\ of declaratives is defined as follows, where $\phi$ can be rewritten as any formula from $\La$:
$$\aa\;::=\;p\mid\bot\mid(\aa\land\aa)\mid(\aa\to\aa)\mid(\phi\Toa\phi)$$
Intuitively, declaratives stand for statements (as we will see, this is reflected in a key semantic property). We will use the meta-variables $\phi,\psi,\chi$ for arbitrary formulas, and $\aa,\bb,\gg$ for declaratives.

The \emph{modal depth} of a formula is defined as usual as the maximum number of nestings of modal operators in it. More precisely, we set: $\md(p)=\md(\bot)=0$; $\md(\phi\land\psi)=\md(\phi\to\psi)=\md(\phi\lori\psi)=\max(\md(\phi),\md(\psi))$;  $\md(\phi\Toa\psi)=\max(\md(\phi),\md(\psi))+1$. The set of formulas in $\La$ with modal depth up to $n$ is denoted $\Lan$; the set of declaratives with modal depth up to $n$ is denoted $\Land$.

\medskip\noindent\textbf{Semantics.} We interpret formulas relative to \emph{multi-agent inhabited neighborhood models} (main-models, for short), which are triples $M=(W,(\Sigma_a)_{a\in\A},V)$ where $W\neq\emptyset$ is a set of worlds, $\Sigma_a:W\to\wp\wp_{0}(W)$ is a map assigning to each $w\in W$ a set $\Sigma_a(w)$ of non-empty subsets of $W$ (called the \emph{neighborhoods} of $w$) and $V:\P\to\wp(W)$ is a valuation function. 
As customary in inquisitive logic, the semantics of \inqnla\ is not given in terms of truth relative to possible worlds, but instead in terms of \emph{support} relative to sets of possible worlds, called \emph{information states} (or simply \emph{states}).

\begin{definition}[Support in main-models] Let $M=(W,(\Sigma_a)_{a\in\A},V)$ be a main-model. The relation of support between information states $s\subseteq W$ and formulas $\phi\in\La$ is defined as follows:
\begin{itemize}
\item $M,s\models p\iff s\subseteq V(p)$
\item $M,s\models\bot\iff s=\emptyset$
\item $M,s\models\phi\land\psi\iff M,s\models\phi$ and $M,s\models\psi$
\item $M,s\models\phi\lori\psi\iff M,s\models\phi$ or $M,s\models\psi$
\item $M,s\models\phi\to\psi\iff \forall t\subseteq s: M,t\models\phi$ implies $M,t\models\psi$
\item $M,s\models\phi\Toa\psi\iff \forall w\in s\forall t\in\Sigma_a(w):M,t\models\phi$ implies $M,t\models\psi$
\end{itemize}
\end{definition}

\noindent
All the clauses except for the last one are standard in inquisitive logic (see \cite{Ciardelli:23book} for discussion), while the clause for the modality $\Toa$ is a direct multi-agent adaptation of the one for $\To$ in \inqnl\ \cite{Ciardelli:25neighborhood}. %it says that for all worlds in the evaluation state, all neighborhoods that support the antecedent also support the consequent. %The only difference is that, whereas in \inqnl\ the neighborhoods are given directly by the model, here they are determined by the game structure $\S$ as different output sets corresponding to the different actions available to $a$ in $w$. In other words, the above semantics interprets the modality $\Toa$ via the particular neighborhood model $\M=(\F_\S,V)$ induced by our concurrent game model.

Entailment and equivalence are defined in the obvious way: a set of formulas $\Phi$ entails a formula $\psi$, denoted $\Phi\models_{\inqnla}\psi$, if for every main-model $M$ and state $s$ that supports all formulas in $\Phi$, $s$ also supports $\psi$. Two formulas $\phi,\psi$ are equivalent, denoted $\phi\equiv_{\inqnla}\psi$, if they are supported by the same states in every main-model. To ease notation, throughout this section we drop the subscript \inqnla.

As usual in inquisitive logic, support is persistent (if $M,s\models\phi$ and $t\subseteq s$, then $M,t\models\phi$), and the empty state trivially supports every formula. Moreover, we retrieve a notion of truth at a world $w$ by defining it as support relative to the corresponding singleton $\{w\}$.

\begin{definition}[Truth] $\phi$ is \emph{true} at a world $w$ of a main-model $M$, denoted $M,w\models\phi$, in case $M,\{w\}\models\phi$. The set of worlds at which $\phi$ is true in a model $M$ is denoted $|\phi|_M$.
\end{definition}

\noindent
Some simple calculations show that the Boolean connectives have their usual truth-functional behavior (for instance, $M,w\models\neg\phi\iff M,w\not\models\phi$), and that modal formulas have the following truth conditions:%\footnote{Note that the clause for $\idia$ makes crucial use of the fact that neighborhoods are assumed to be non-empty.}
\begin{itemize}
\item $M,w\models(\phi\Toa\psi)\iff \forall s\in\Sigma_a(w):M,s\models\phi$ implies $M,s\models\psi$
\item $M,w\models\ibox\phi\iff \forall s\in\Sigma_a(w):M,s\models\phi$
\item $M,w\models\idia\phi\iff \exists s\in\Sigma_a(w):M,s\models\phi$
\end{itemize}

%$M,w\models{\phi\land\psi}\iff M,w\models\phi$ and $M,w\models\psi$. 
%The supportSupport conditions determine truth conditions, but the converse is not the case in general.  not the other way around. 
\noindent
Declaratives have a special semantic property: for them, support boils down to truth at each world. %Thus, the semantics of a declarative is fully determined by its truth conditions. 
In fact, up to equivalence, declaratives are \emph{exactly} the formulas for which this holds. 
%The inductive proof of this fact is standard and, thus, omitted (see \cite{Ciardelli:25neighborhood} for the details in the case of \inqnl).

\begin{definition}[Truth-conditionality] A formula $\phi\in\La$ is \emph{truth-conditional} if for every model $M$ and state $s$ we have $M,s\models\phi\iff \forall w\in s: M,w\models\phi$.
\end{definition}

\begin{proposition}[cf.\ Prop.\ 2.7 in \cite{Ciardelli:25neighborhood}] Every declarative $\aa\in\Lad$ is truth-conditional. Conversely, every truth-conditional formula of $\La$ is equivalent to a declarative.
\end{proposition}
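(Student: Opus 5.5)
The first half of the statement will be proved by induction on the construction of declaratives $\aa\in\Lad$. The second half will use a normal form: every formula of $\La$ is equivalent to an inquisitive disjunction of declaratives.

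\emph{Every declarative is truth-conditional.} The left-to-right direction of truth-conditionality holds for every formula, by persistence. So the induction only needs the right-to-left direction: if $M,w\models\aa$ for all $w\in s$, then $M,s\models\aa$. The cases are as follows.
\begin{itemize}
\item Atoms: immediate, since $s\subseteq V(p)$ iff every $\{w\}\subseteq V(p)$.
\item $\bot$: no singleton supports $\bot$, so the hypothesis forces $s=\emptyset$.
\item $\land$: immediate from the induction hypothesis.
\item Modal formulas $\phi\Toa\psi$: the support clause already quantifies world by world over $w\in s$. Hence $M,s\models\phi\Toa\psi$ iff $M,\{w\}\models\phi\Toa\psi$ for every $w\in s$, and no induction hypothesis is needed. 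This is exactly why arbitrary $\phi,\psi$ are allowed under $\Toa$.
\item Implication $\aa\to\bb$: assume every world of $s$ makes $\aa\to\bb$ true. Take $t\subseteq s$ with $M,t\models\aa$. By persistence each $w\in t$ makes $\aa$ true. Since $\{w\}$ supports $\aa\to\bb$, each such $w$ makes $\bb$ true. By the induction hypothesis for $\bb$, $t$ supports $\bb$.
\end{itemize}
Only truth-conditionality of the consequent $\bb$ is used in the implication case. That is fine here, since declaratives are closed under $\to$ anyway.

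\emph{Truth-conditional formulas are equivalent to declaratives.} I would first prove, by induction on $\phi$, a normal form: each $\phi\in\La$ is equivalent to $\aa_1\lori\dots\lori\aa_m$ with all $\aa_i\in\Lad$ (its resolutions). The cases are as follows.
\begin{itemize}
\item Atoms, $\bot$ and $\Toa$-formulas are their own single resolution.
\item $\lori$: concatenate the two lists of resolutions.
\item $\land$: distribute, taking $\aa_i\land\bb_j$ over all pairs $i,j$.
\item $\to$ (the main step): use the standard inquisitive equivalence
$$\textstyle(\Lori_i\aa_i)\to(\Lori_j\bb_j)\equiv\Lori_{f}\bigwedge_i(\aa_i\to\bb_{f(i)}),$$
where $f$ ranges over functions from antecedent indices to consequent indices.
\end{itemize}
The $\to$ equivalence relies on the antecedent disjuncts being truth-conditional, so their supporting states are exactly the subsets of $|\aa_i|$. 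To prove it, I would check that $s$ supports the left side iff for each $i$, the state $s\cap|\aa_i|$ supports some $\bb_j$. Choosing such a $j$ for each $i$ yields a function $f$, and by truth-conditionality of $\aa_i$ this is the same as $s$ supporting $\bigwedge_i(\aa_i\to\bb_{f(i)})$. This case is the only real obstacle, and it is the standard argument from propositional inquisitive logic; the modal operator plays no role.

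Given the normal form, let $\phi\equiv\Lori_i\aa_i$ be truth-conditional, and put $\aa:=\aa_1\lor\dots\lor\aa_m$, the classical disjunction (a declarative). I claim $\phi\equiv\aa$. Truth at a world coincides for the two formulas: $\{w\}$ supports $\phi$ iff it supports some $\aa_i$ iff $w$ makes $\aa$ true. Both formulas are truth-conditional, $\phi$ by assumption and $\aa$ by the first part. Two truth-conditional formulas with the same truth conditions are supported by the same states, so $\phi\equiv\aa$.
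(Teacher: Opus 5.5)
Your proof is correct and follows the standard argument that the paper defers to the cited InqNL source. The first half is an induction over the declarative grammar. The converse uses the resolution normal form $\phi\equiv\Lori\R(\phi)$, which the paper states right after this proposition and which is not circular, since its $\to$ case uses only the first half, and then replaces the inquisitive disjunction of the resolutions by their classical disjunction.
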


\noindent
Not all formulas of  \inqnla\ are truth-conditional. As a simple example, consider the formula $?p$ (which abbreviates $p\lori\neg p$). A simple calculation yields the following support conditions:
$$M,s\models{?p}\iff p\text{ has the same truth value in all worlds in }s$$
Intuitively, ${?p}$ is supported at $s$ if it is \emph{settled} in $s$ whether $p$ is true or false.
Clearly, ${?p}$ is supported at every singleton state, and so true at each world, in every model; yet, it is not supported by every state.  

Obviously, formulas that are not truth-conditional cannot be equivalent to a declarative. However, any formula $\phi$ is equivalent to a finite inquisitive disjunction of declaratives, called the \emph{resolutions} of $\phi$.

\begin{definition}[Resolutions] The set $\R(\phi)$ of \emph{resolutions} of a formula $\phi\in\La$ is defined as follows: 
\begin{itemize}
\item $\R(\aa)=\{\aa\}$ if $\aa$ is an atom, $\bot$, or a modal formula $(\phi\Toa\psi)$;
\item $\R(\phi\land\psi)=\{\aa\land\bb\mid\aa\in\R(\phi),\bb\in\R(\psi)\}$;
\item $\R(\phi\lori\psi)=\R(\phi)\cup\R(\psi)$;
\item $\R(\phi\to\psi)=\{\bigwedge_{\aa\in\R(\phi)}(\aa\to f(\aa))\mid f:\R(\phi)\to\R(\psi)\}$.
\end{itemize}
\end{definition}

\noindent
Note that $\R(\phi)$ is always a finite non-empty set of declaratives, and for any declarative $\aa$, $\R(\aa)=\{\aa\}$.

\begin{proposition}[Normal form, cf.\ Prop.\ 2.9 in \cite{Ciardelli:25neighborhood}] \label{prop:normal form} For any $\phi\in\La$, $\phi\equiv\Lori\R(\phi)$.
\end{proposition}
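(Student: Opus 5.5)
We argue by induction on the structure of $\phi$, showing that for every main-model $M$ and every state $s$,
$$M,s\models\phi\iff M,s\models\aa\text{ for some }\aa\in\R(\phi).$$
Since support for $\Lori$ of a finite set is existential over its members, this is exactly $\phi\equiv\Lori\R(\phi)$. We use two facts stated above: declaratives are truth-conditional, and support is persistent.

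\emph{Atomic and modal cases.} If $\phi$ is an atom, $\bot$, or a modal formula $\psi\Toa\chi$, then $\R(\phi)=\{\phi\}$, and the claim is trivial.

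\emph{Inquisitive disjunction.} For $\phi\lori\psi$, we have $\R(\phi\lori\psi)=\R(\phi)\cup\R(\psi)$. The claim follows directly from the support clause for $\lori$ together with the induction hypotheses for $\phi$ and $\psi$.

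\emph{Conjunction.} For $\phi\land\psi$, $s$ supports $\phi\land\psi$ iff it supports $\phi$ and $\psi$. By the induction hypotheses, this holds iff $s$ supports some $\aa\in\R(\phi)$ and some $\bb\in\R(\psi)$, i.e., iff $s$ supports $\aa\land\bb$ for some such pair. These conjunctions are exactly the members of $\R(\phi\land\psi)$.

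\emph{Implication.} This is the main step. Suppose first that $s\models\bigwedge_{\aa\in\R(\phi)}(\aa\to f(\aa))$ for some $f$, and let $t\subseteq s$ support $\phi$.
\begin{itemize}
\item By the induction hypothesis, $t$ supports some $\aa\in\R(\phi)$.
\item Hence $t$ supports $f(\aa)\in\R(\psi)$.
\item By the induction hypothesis again, $t$ supports $\psi$.
\end{itemize}
So $s\models\phi\to\psi$.

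Conversely, suppose $s\models\phi\to\psi$. For each $\aa\in\R(\phi)$, consider the maximal state $t_\aa=s\cap|\aa|_M$.
\begin{itemize}
\item Since $\aa$ is declarative, hence truth-conditional, $t_\aa$ supports $\aa$.
\item Therefore $t_\aa$ supports $\phi$, by the induction hypothesis applied to the single disjunct $\aa$.
\item Hence $t_\aa$ supports $\psi$, and so, by the induction hypothesis, $t_\aa$ supports some $\bb\in\R(\psi)$. Choose one such $\bb$ and set $f(\aa):=\bb$.
\end{itemize}
We now check that $s\models\aa\to f(\aa)$ for every $\aa$. Let $u\subseteq s$ support $\aa$. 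By truth-conditionality, $u\subseteq|\aa|_M$, so $u\subseteq t_\aa$. By persistence, $u$ supports $f(\aa)$. Thus $s$ supports the conjunction indexed by $f$, which is a member of $\R(\phi\to\psi)$.

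The only delicate point is this converse direction for implication. A single uniform choice function $f$ must work for all substates of $s$ at once. The maximal substates $t_\aa$ make this possible, and they rely essentially on the truth-conditionality of the resolutions.
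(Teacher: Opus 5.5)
Your proof is correct and is essentially the standard argument behind the cited Prop.~2.9 (the paper itself gives no proof, only the pointer to \cite{Ciardelli:25neighborhood}): induction on $\phi$, with the implication case as the only substantive step. There you build the choice function $f$ directly from the maximal substates $s\cap|\aa|_M$, which is exactly the truth-conditionality argument behind the usual split equivalence $\aa\to\Lori\R(\psi)\equiv\Lori\{\aa\to\bb\mid\bb\in\R(\psi)\}$ for declarative $\aa$, followed by distribution.
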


\noindent
In terms of resolutions we may also define a further modal operator, $\Box_a$, which will play a role below:
$$\Box_a\phi\::=\;\bigvee\{\ibox_a\aa\mid\aa\in\R(\phi)\}$$
This operator has been considered in previous work on inquisitive modal logic \cite{CiardelliRoelofsen:15idel,Ciardelli:14aiml,Ciardelli:25}.\footnote{In this previous work, $\Box$ is taken as primitive, while here we take it as a defined operator. Either choice has advantages. The advantage of our present choice is that having fewer primitives simplifies the completeness proof below.}
In particular, it has been argued to give a natural generalization of the standard knowledge modality of epistemic logic,  allowing for a uniform analysis of knowledge ascriptions involving declarative complements (\emph{knowing that}) and interrogative complements (\emph{knowing whether/who/what}, etc).

Note that $\Box_a\phi$ is a declarative. A simple calculation yields the following truth conditions:
\begin{itemize}
\item $M,w\models\Box_a\phi\iff M,\bigcup\Sigma_a(w)\models\phi$
\end{itemize}

\noindent
To conclude, we mention a fact that will be useful later on: given that we start with a finite set of atoms, there are only finitely many formulas of a given modal depth, up to logical equivalence. 

\begin{proposition}[cf.\ Corollary 5.17 in \cite{Ciardelli:25neighborhood}]\label{prop:finiteness} The quotient of $\Lan$ under equivalence is finite.
\end{proposition}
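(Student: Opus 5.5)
The plan is to prove, by induction on $n$, the stronger claim that for each $n$ there is a finite set $D_n\subseteq\Land$ of declaratives such that every declarative of modal depth at most $n$ is equivalent to a Boolean combination of members of $D_n$. I then transfer this to arbitrary formulas using the normal form of Proposition~\ref{prop:normal form}. Throughout I only use that every declarative is truth-conditional, so equivalence of declaratives reduces to having the same truth set in every main-model. This means classical propositional reasoning about truth values applies to them.

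\emph{Step 1 (declaratives).} For $n=0$, a declarative of depth $0$ is built from atoms and $\bot$ using $\land,\to$. Since it is truth-conditional and these connectives behave classically on truth, it is equivalent to a classical propositional formula over the finite set $\P$; I take $D_0=\P$. For the inductive step, a declarative of depth $\le n+1$ is a $\land,\to$-combination of atoms and modal formulas $\phi\Toa\psi$ with $\phi,\psi\in\Lan$. The key claim is that, up to equivalence, there are only finitely many such modal formulas. If $\phi\equiv\phi'$ and $\psi\equiv\psi'$, then $\phi\Toa\psi\equiv\phi'\Toa\psi'$, since the support clause for $\Toa$ mentions only which states $t\in\Sigma_a(w)$ support $\phi$ and $\psi$. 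Hence it suffices that $\Lan$ is finite modulo equivalence. I then let $D_{n+1}$ consist of $\P$ together with one representative $\phi\Toa\psi$ for each agent $a\in\A$ and each pair of equivalence classes. Boolean combinations of a finite set are finite modulo classical equivalence, which yields finiteness of $\Land$.

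\emph{Step 2 (all formulas).} By Proposition~\ref{prop:normal form}, each $\phi\in\Lan$ is equivalent to $\Lori\R(\phi)$. The resolutions in $\R(\phi)$ are declaratives of depth $\le n$; this has to be checked by a routine induction on the clauses defining $\R$, none of which adds modal depth. Modulo equivalence, $\R(\phi)$ is then a subset of the finite quotient of $\Land$. Two inquisitive disjunctions whose sets of disjuncts coincide up to equivalence are equivalent. So the number of classes is at most $2^{|\Land/\!\equiv|}$.

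\emph{Circularity and the main obstacle.} Step 1 at level $n+1$ needs finiteness of all of $\Lan$, and Step 2 at level $n$ needs finiteness of $\Land$. I therefore run a single induction that proves both at each level: from $\Land$ finite I get $\Lan$ finite by Step 2, and from that I get $\Land[n+1]$-style finiteness, i.e.\ finiteness of $\mathcal{L}_{\mathcal{A}}^{!(n+1)}$, by Step 1. The main obstacle is the congruence of $\Toa$ and of the connectives under equivalence, together with the depth bound on resolutions. Both are straightforward from the support clauses, but they must be stated carefully so that the induction closes.
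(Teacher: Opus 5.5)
Your proof is correct, but there is no in-paper argument to compare it with. The paper only imports the fact from Corollary 5.17 of \cite{Ciardelli:25neighborhood}, relying on its remark that the \inqnl\ proofs carry over unchanged to several agents. In that source the fact is stated as a corollary of further machinery. The customary route in this literature goes through characteristic formulas for bounded bisimulation. One shows there are finitely many characteristic formulas of depth $n$, and that every formula of depth $n$ is equivalent to the inquisitive disjunction of the characteristic formulas of the states supporting it.

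Your simultaneous induction instead uses only facts already available in Section~\ref{sec:inqnl}:
\begin{itemize}
\item Finiteness of $\Land$ modulo equivalence gives finiteness of $\Lan$, via the normal form of Proposition~\ref{prop:normal form} and the depth bound on resolutions.
\item Finiteness of $\Lan$ gives finiteness of $\mathcal{L}_{\mathcal{A}}^{!(n+1)}$, via the congruence of $\Toa$, the truth-conditionality of declaratives, and the finiteness of $\P$ and $\A$.
\end{itemize}
This is more elementary and self-contained, and it yields an explicit iterated-exponential bound on the number of classes. The bisimulation route buys more: a characterization of depth-$n$ equivalence as $n$-bisimilarity, which matters for expressive-completeness results. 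That extra information is not needed for how the proposition is used here, namely to get finiteness of $\K_n$ in Lemma~\ref{lemma:finiteness}.
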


%%%%%%%%%%%%%%%%%%%%%%%%%%%%%%%%%%%%%%
%%%%%%%%%%%%%%%%%%%%%%%%%%%%%%%%%%%%%%
%%%%%%%%%%%%%%%%%%%%%%%%%%%%%%%%%%%%%%

\section{Inquisitive action logic}

In this section we turn to the logic which is the focus of this paper: \emph{inquisitive action logic}, \inqal. This logic is a special case of the multi-agent inquisitive neighborhood logic \inqnla\ introduced in the previous section, obtained by focusing on a particular class of neighborhood models: those whose neighborhood functions represent actual effectivity functions of some concurrent game structure. For simplicity, we focus on the multi-agent case, i.e.,  we assume from now on that $\A$ contains at least two agents; for some comments on the trivial, but somewhat special, single-agent case, see Footnote \ref{fn:single agent axiomatization}. 

\begin{definition} A \emph{concurrent game model} (or \emph{cg-model}, for short) is a pair $\M=(\S,V)$ consisting of a concurrent game structure $\S=(W,\act,\out)$ and a valuation function $V:\P\to\wp(W)$. 
\end{definition}

\noindent
A concurrent game model $\M=(\S,V)$ with $\S=(W,\act,\out)$ determines a corresponding main-model $M_\M=(W,(\Sigma_a^\S)_{a\in\A},V)$, whose neighborhood functions are the actual effectivity functions of agents. %Clearly, not every main-model is induced from some cg-model, but only those whose neighborhood functions satisfy the conditions specified in Theorem \ref{theor:characterization}. 
We can thus interpret the formulas of $\La$ relative to a cg-model, via its associated main-model.

\begin{definition} Let $\M$ be a cg-model, $s$ a state in $\M$, and $\phi\in\La$. We write $\M,s\models\phi$ if $M_\M,s\models\phi$.
\end{definition}

\noindent Entailment and equivalence in \inqal, denoted $\models_\inqal$ and $\equiv_\inqal$ respectively, are defined in the same way as for \inqnla, but with respect to cg-models instead of main-models. Note that, being determined by a smaller class of models, the logic \inqal\ is an extension of \inqnla.

In the setting of concurrent game models, our modal formulas take on a specific significance. 
%in connection with the actions available to the agents and their ability to force or allow certain outcomes, and to determine certain aspects of the outcome. 
We will illustrate this by means of some examples (for the sake of space, we omit the simple calculations involved).
First, consider a modal formula of the form $\idia_a\aa$, where $\aa\in\Lad$. We have:
$$\M,w\models\idia_a\aa\iff\exists \tau_a\in\act(a,w):O_a(w,\tau_a)\subseteq|\aa|_\M
$$
%\begin{eqnarray*}
%\M,w\models\idia_a\aa&\iff& \exists s\in\Sigma_a^\S(w):s\models\aa\\
%&\iff& \exists s\in\Sigma_a^\S(w):s\subseteq|\aa|_\M\\
%&\iff& \exists \tau_a\in\act(a,w):O_a(w,\tau_a)\subseteq|\aa|_\M
%\end{eqnarray*}
\noindent
Thus, $\idia_a\aa$ expresses the fact that agent $a$ has an action available that guarantees the truth of $\aa$ (i.e., if that action is executed, the outcome will make $\aa$ true, regardless what the other agents do). In short, $\idia_a\aa$ says that $a$ can force $\aa$, recovering the standard semantics for ability familiar from \cite{Brown:88} and \cite{Pauly:02}.\footnote{In fact, it is possible to show that the $\idia$-fragment of our language is equi-expressive with the fragment of coalition logic that contains only modalities for individual agents. See \Sec 3 of \cite{Ciardelli:25neighborhood} for an analogous result in the case of \inqnl.}

As a second example, consider the modal formula $(\aa\Toa\bb)$, where $\aa,\bb\in\Lad$. We have:
$$\M,w\models(\aa\Toa\bb)\iff \forall\tau_a\in\act(a,w): O_a(w,\tau_a)\subseteq|\aa|_\M\text{ implies  }O_a(w,\tau_a)\subseteq|\beta|_\M$$
What this formula expresses is that, in order to force $\aa$, agent $a$ necessarily has to force $\beta$ as well. By negating such statements, we can express the fact that the agent can force certain outcomes without precluding others. 
Consider, e.g., the formula $\neg(\aa\Toa\neg\bb)$. We have:
$$\M,w\models\neg(\aa\Toa\neg\bb)\iff \exists\tau_a\in\act(a,w): O_a(w,\tau_a)\subseteq|\aa|_\M\text{ and }O_a(w,\tau_a)\cap|\beta|_\M\neq\emptyset$$
The formula expresses the fact that agent $a$ can perform an action which forces $\aa$ without precluding~$\beta$. More generally, a formula of the form $\neg(\aa\Toa\neg\bb_1\lori\dots\lori\neg\bb_n)$ expresses the fact that it is possible for $a$ to force $\aa$ without precluding any of $\bb_1,\dots,\bb_n$. In this way, the central properties that the Socially Friendly Coalition Logic of Goranko and Enqvist \cite{GorankoEnqvist:18} is designed to capture can be expressed in \inqal.

Let us now consider formulas involving the modalities $\ibox_a$ and $\Box_a$. First consider the case in which the argument is a declarative $\aa\in\Lad$. In this case, $\R(\aa)=\{\aa\}$, and so the two modalities coincide: by definition of $\Box_a$, we have $\Box_a\aa=\ibox_a\aa$. Recall from Section 2 that, in the case of actual effectivity functions, the union of the neighborhoods simply coincides with the set $O(w)$ of all outcomes possible at $w$: for any agent $a$, $\bigcup\Sigma_a^\S(w)=O(w)$. Using this fact, we obtain the following:
$$\M,w\models\Box_a\aa\iff\M,w\models\ibox_a\aa\iff O(w)\subseteq|\aa|_\M$$
Thus, regardless of the agent $a$, the (identical) formulas $\ibox_a\aa$ and $\Box_a\aa$ express the fact that $\aa$ is \emph{unavoidable}: it will be true at the next stage regardless what the agents do. 

When the argument is not a declarative, however, the modalities $\Box_a$ and $\ibox_a$ come apart. Let us illustrate this with the case of a polar question $?\aa$, where $\aa\in\Lad$. Since $\R(?\aa)=\{\aa,\neg\aa\}$, the formula $\Box_a{?\aa}$ amounts to $\ibox_a\aa\lor\ibox_a\neg\aa$, which says that either $\aa$ is unavoidably  true at the next stage, or it is unavoidably false. More formally:
$$\M,w\models \Box_a{?\aa}\iff\text{$\aa$ has the same truth value in all worlds in }O(w)$$
Thus, $\Box_a{?\aa}$ says that whether $\aa$ will be true at the next stage is \emph{predetermined}, i.e., settled \emph{a priori} regardless of the actions of the agents.
By contrast, for the modal formula $\ibox_a{?\aa}$ we have the following:
$$\M,w\models\ibox_a{?\aa}\iff \forall \tau_a\in\act(a,w): \text{$\aa$ has the same truth value in all worlds in }O_a(w,\tau_a)$$
Thus, what $\ibox_a{?\aa}$ expresses is that once we fix the action of $a$, the truth value of $\aa$ at the next stage is settled: the actions of other agents cannot affect it. These findings generalize: if $\phi$ denotes a question, then $\Box_a\phi$ expresses the fact that $\phi$ is settled \emph{a priori}, regardless of what the agents do, while $\ibox_a\phi$ expresses the fact that $\phi$ is settled once we fix the action of agent $a$. 

We can now see how the analysis of agentive determination we suggested in the introduction can be formalized in \inqal. Our guiding idea was this: at a certain stage in a process, an agent $a$ determines a question $\phi$ if (i) the answer to $\phi$ is not settled \emph{a priori} but (ii) it becomes settled once we fix the action of agent $a$. In our logic, (i) is captured by $\neg\Box_a\phi$, and (ii) by $\ibox_a\phi$. We can thus define a modality $\wbox_a$ capturing agentive determination in the following way:
$$\wbox_a\phi\;:=\;\neg\Box_a\phi\land \ibox_a\phi$$
For instance, the claim that agent $a$ determines whether $\aa$ will be true at the next stage is expressed by  $\wbox_a{?\aa}:=\neg\Box_a{?\aa}\land\ibox_a{?\aa}$, which says that the truth value of $\aa$ is not predetermined (i.e., not the same in all possible outcomes), but it is determined once we fix the action of $a$ (i.e., it is constant within each neighborhood for $a$). %, but that there are actions of $a$ yielding different truth values. 
This is, in my view, a very natural analysis of the determination claim.\footnote{Interestingly, exactly the same modal pattern, $\neg\Box_a\phi\land\ibox_a\phi$ is argued in inquisitive epistemic logic to capture the idea that an agent \emph{wonders about}, or is \emph{interested in}, a certain question \cite{CiardelliRoelofsen:15idel}. It is striking that such seemingly different notions across different modal domains plausibly share the same logical structure. Revealing this common structure is one of the payoffs of the formal analysis of question-oriented modal notions made possible by inquisitive modal logic.}

To illustrate this analysis further, consider the scenario from the introduction, where Alice ($a$) and Bob ($b$) are creating clay animals together. Let's say that there are three shapes that Alice is able to mould---cat, dog, and cow---and three colors available to Bob---red, blue, and green. A simple modeling of the scenario is one where Alice has three actions available to her: $\tau_{\text{cat}}$ (mould a cat), $\tau_{\text{dog}}$ (mould a dog), and $\tau_{\text{cow}}$ (mould a cow); similarly, Bob has three actions: $\tau_{\text{red}}$ (paint red), $\tau_{\text{blue}}$ (paint blue), $\tau_{\text{green}}$ (paint green). Each joint action by Alice and Bob, for instance $\tau_{\text{cat}}\tau_{\text{red}}$, results in a specific outcome, for instance $\out(w,\tau_{\text{cat}}\tau_{\text{red}})=\text{red cat}$. In total, there are $3\times 3=9$ possible outcomes (red cat, blue dog, etc.) which make up the total set $O(w)$. For each agent, fixing an action leaves us with only 3 possible outcomes; for instance, $O_a(w,\tau_{cat})=\{\text{red cat},\text{blue cat},\text{green cat}\}$, while $O_b(w,\tau_{red})=\{\text{red cat},\text{red dog},\text{red cow}\}$. The neighborhoods corresponding to these outcome sets for Alice and Bob are depicted in Figure \ref{fig:1}.

\begin{figure}[t]
\centering
\begin{tikzpicture}[>=latex,scale=.9]

 % --- Common rectangle size ---
 % width = 1.4, height = 4.6
 % (same size used everywhere)

 % Shape rectangles (vertical columns, identical size)
 \draw[opaque,rounded corners] (-2.2,2.3) rectangle (-0.8,-2.3);
 \draw[opaque,rounded corners] (-0.7,2.3) rectangle (0.7,-2.3);
 \draw[opaque,rounded corners] (0.8,2.3) rectangle (2.2,-2.3);

 % Nodes (3x3 grid)
 % Top row (red)
 \node at (-1.5,1.5) {\footnotesize \parbox{.6cm}{red\\[-.1cm]dog}};
 \node at (0,1.5)    {\footnotesize \parbox{.6cm}{red\\[-.1cm]cat}};
 \node at (1.5,1.5)  {\footnotesize \parbox{.6cm}{red\\[-.1cm]cow}};

 % Middle row (blue)
 \node at (-1.5,0) {\footnotesize \parbox{.6cm}{blue\\[-.1cm]dog}};
 \node at (0,0)    {\footnotesize \parbox{.6cm}{blue\\[-.1cm]cat}};
 \node at (1.5,0)  {\footnotesize \parbox{.6cm}{blue\\[-.1cm]cow}};

 % Bottom row (green)
 \node at (-1.5,-1.5) {\footnotesize \parbox{.6cm}{green\\[-.1cm]dog}};
 \node at (0,-1.5)    {\footnotesize \parbox{.6cm}{green\\[-.1cm]cat}};
 \node at (1.5,-1.5)  {\footnotesize \parbox{.6cm}{green\\[-.1cm]cow}};

 \node at (0,-3) {$\Sigma_a^\S(w)$};
\end{tikzpicture}
\hspace{3cm}
\begin{tikzpicture}[>=latex,scale=.9]

 % Color rectangles (horizontal rows, EXACT same size as above)
 \draw[opaque,rounded corners] (-2.3,2.2) rectangle (2.3,0.8);
 \draw[opaque,rounded corners] (-2.3,0.7) rectangle (2.3,-0.7);
 \draw[opaque,rounded corners] (-2.3,-0.8) rectangle (2.3,-2.2);

 % Same nodes
 \node at (-1.5,1.5) {\footnotesize \parbox{.6cm}{red\\[-.1cm]dog}};
 \node at (0,1.5)    {\footnotesize \parbox{.6cm}{red\\[-.1cm]cat}};
 \node at (1.5,1.5)  {\footnotesize \parbox{.6cm}{red\\[-.1cm]cow}};

 \node at (-1.5,0) {\footnotesize \parbox{.6cm}{blue\\[-.1cm]dog}};
 \node at (0,0)    {\footnotesize \parbox{.6cm}{blue\\[-.1cm]cat}};
 \node at (1.5,0)  {\footnotesize \parbox{.6cm}{blue\\[-.1cm]cow}};

 \node at (-1.5,-1.5) {\footnotesize \parbox{.6cm}{green\\[-.1cm]dog}};
 \node at (0,-1.5)    {\footnotesize \parbox{.6cm}{green\\[-.1cm]cat}};
 \node at (1.5,-1.5)  {\footnotesize \parbox{.6cm}{green\\[-.1cm]cow}};

 \node at (0,-3) {$\Sigma_b^\S(w)$};
\end{tikzpicture}
\caption{\label{fig:1}The neighborhoods (i.e., outcome sets) associated to Alice's actions (left) and Bob's action (right) in our toy example.}
\end{figure}

Now we can consider a language with six propositional atoms, \textsf{cat}, \textsf{dog}, \textsf{cow}, \textsf{red}, \textsf{blue}, \textsf{green}, with the obvious interpretation (e.g., \textsf{cat} expresses the proposition \quotes{the animal is a cat}). Using inquisitive disjunction, we can build two formulas \textsf{shape} and \textsf{color} expressing, respectively, the questions \quotes{what the shape is} and \quotes{what the color is}: 
$$\textsf{shape}:=(\textsf{cat}\lori\textsf{dog}\lori\textsf{cow})\qquad\textsf{color}:=(\textsf{red}\lori\textsf{blue}\lori\textsf{green})$$
An information state $s$ in our model supports the question \textsf{shape} just in case in all worlds in $s$, the animal has the same shape; similarly, $s$ supports \textsf{color} if in all worlds in $s$, the animal has the same color.

By embedding these questions under modalities we can now describe what aspects of the outcome Alice and Bob respectively determine through their actions: Alice determines the shape ($\wbox_a\textsf{shape}$) but not the color ($\neg\wbox_a\textsf{color}$), while Bob determines the color ($\wbox_b\textsf{color}$) but not the shape ($\neg\wbox_b\textsf{shape}$).

In order to see that $\wbox_a\textsf{shape}$ (i.e., $\neg\Box_a\textsf{shape}\land\ibox_a\textsf{shape}$) is true at our world $w$, we reason as follows. First, since the shape of the animal is not constant across all possible outcomes, we have $\M,O(w)\not\models\textsf{shape}$, which ensures $\M,w\models\neg\Box_a\textsf{shape}$; this captures the fact that in our scenario, the shape is not predetermined, but depends on the actions of the agents. Second, within each of the three outcome sets $O_a(w,\tau_{cat}),O_a(w,\tau_{dog}),O_a(w,\tau_{cow})$, corresponding to the three actions for $a$, the shape of the animal \emph{is} constant; therefore, each of these sets supports \textsf{shape}, ensuring $\M,w\models\ibox_a\textsf{shape}$; this captures the fact that once we fix Alice's action, the shape of the outcome is settled, regardless of what Bob does. 

By contrast, $\wbox_a\textsf{color}$ (i.e., $\neg\Box_a\textsf{color}\land\ibox_a\textsf{color}$) is false at $w$, since its second conjunct is false. Consider any outcome set for $a$, for instance $O_a(w,\tau_{cat})$. The color of the animal is not constant across this set, and so this set does not support $\textsf{color}$. Since not all the outcome sets for $a$ (in fact, none of them) support $\textsf{color}$, $\M,w\not\models\ibox_a\textsf{color}$. This reflects the fact even if we fix Alice's action, the color of the outcome is not settled---it depends in part (and in fact, in our case, completely) on what Bob does.

%%%%%%%%%%%%%%%%%%%%%%%%%%%%%%%%%%
%%%%%%%%%%%%%%%%%%%%%%%%%%%%%%%%%%

\section{Axiomatization and decidability}

We start by presenting an axiomatization of \inqnla, which is the basis for our axiomatization of \inqal.

The propositional axioms include instances of axioms for intuitionistic propositional logic, with $\lori$ in the role of intuitionistic disjunction, and all instances of the following, where $\aa\in\Lad$ and $\phi,\psi\in\La$:
\begin{itemize}
\item $\neg\neg\aa\to\aa$\hfill (Declarative double negation)
\item $(\aa\to\phi\lori\psi)\to(\aa\to\phi)\lori(\aa\to\psi)$\hfill (Split)
\end{itemize}
%
%\begin{description}
%\item[(\texttt{DDN})] $\neg\neg\aa\to\aa$
%\item[(\texttt{Split})] $(\aa\to\phi\lori\psi)\to(\aa\to\phi)\lori(\aa\to\psi)$
%\end{description}
The modal axioms include all instances of the following schemata, capturing the behavior of the modalities $\Toa$ as strict conditionals (see \cite{Ciardelli:25neighborhood} for discussion of these axioms in the uni-modal case and \cite{LitakVisser:18} for a more general study of strict conditionals on an intuitionistic basis):
\begin{itemize}
\item $(\phi\Toa\psi)\land(\psi\Toa\chi)\to(\phi\Toa\chi)$\hfill (Transitivity)
\item $(\phi\Toa\psi)\land(\phi\Toa\chi)\to(\phi\Toa(\psi\land\chi))$\hfill (Right conjunction)
\item $(\phi\Toa\chi)\land(\psi\Toa\chi)\to((\phi\lori\psi)\Toa\chi)$\hfill (Left disjunction)
\end{itemize}
The inference rules are \emph{modus ponens} ($\phi,\phi\to\psi/\psi$) and \emph{conditional necessitation} ($\phi\to\psi/\phi\Toa\psi$).

If $\psi\in\La$ is derivable in this system we write $\vdashn\psi$. For $\Phi,\Psi\subseteq\La$, we write $\Phi\vdashn\Psi$ in case for some finite $\Phi_0\subseteq\Phi$ and $\Psi_0\subseteq\Psi$ we have $\vdashn\bigwedge\Phi_0\to\Lori\Psi_0$. Instead of $\{\phi_1,\dots,\phi_n\}\vdashn\{\psi_1,\dots,\psi_m\}$ we write simply $\phi_1,\dots,\phi_n\vdashn\psi_1,\dots,\psi_m$. We write $\phi\preqn\psi$ if we have both $\phi\vdashn\psi$ and $\psi\vdashn\phi$.

The following soundness and strong completeness theorem generalizes the one for \inqnl. The proof is the obvious multi-agent adaptation of the one for \inqnl\ given in \cite{Ciardelli:25neighborhood}.

\begin{theorem}\label{theor:completeness-n} For all $\Phi\subseteq\La$ and $\psi\in\La$, $\Phi\models_{\inqnla}\psi\iff\Phi\vdashn\psi$.
\end{theorem}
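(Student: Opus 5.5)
The plan is the standard two-part argument: soundness by induction on derivations, and completeness via a canonical model built from the resolution normal form (Proposition~\ref{prop:normal form}), as in the uni-modal case of \cite{Ciardelli:25neighborhood}.

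\emph{Soundness.} The propositional axioms are valid in any support semantics that is persistent and contains the empty state. Declarative double negation holds because declaratives are truth-conditional. Split holds because if $\aa$ is truth-conditional, the states supporting $\aa$ inside $s$ are exactly the subsets of $|\aa|\cap s$, so there is a greatest such state. The modal axioms follow directly from the support clause for $\Toa$. This clause quantifies, for each $w\in s$, over $t\in\Sigma_a(w)$, so $\Toa$ behaves as a strict conditional restricted to the neighborhoods of $a$. Transitivity, Right conjunction and Left disjunction are then immediate. For conditional necessitation, if $\phi\to\psi$ is valid then every state supporting $\phi$ supports $\psi$, in particular every neighborhood. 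Persistence of support and the empty-state property are checked by a routine induction; persistence for $\Toa$ is trivial.

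\emph{Completeness.} First reduce the problem. Since every formula is equivalent, provably, to $\Lori\R(\phi)$, and the system proves $\phi\leftrightarrow\Lori\R(\phi)$, it suffices to handle declarative premises and a single declarative conclusion. The reduction runs as in \inqnl: Split together with the disjunction-property style prime theory arguments lets us move from $\Phi\nvdashn\psi$ to a set $\Gamma$ of declaratives with $\Gamma\nvdashn\psi$. Then build a canonical model:
\begin{itemize}
\item $W$ is the set of complete theories of declaratives, i.e.\ maximal consistent sets of declaratives, which behave classically thanks to Declarative double negation.
\item $V(p)=\{\Gamma: p\in\Gamma\}$.
\item For each agent $a$, $\Sigma_a(\Gamma)$ is the family of all non-empty states of the form $|\Delta|$, where $|\Delta|$ denotes the set of worlds containing $\Delta$.
\end{itemize}
The sets $\Delta$ range over the theories that are ``$a$-neighborhood descriptions'' at $\Gamma$. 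Following \cite{Ciardelli:25neighborhood}, these are the sets $\{\aa : (\phi\Toa\aa)\in\Gamma\}$ for a resolution $\phi$, or more precisely the sets $t$ such that for all $\phi,\psi$, if $(\phi\Toa\psi)\in\Gamma$ and $t\models\phi$, then $t\models\psi$, chosen so as to witness the failure of every $\Toa$-formula not in $\Gamma$.

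The core is a support lemma: for all $\phi$ and states $s$, $s\models\phi$ iff $\mathrm{Th}(s)\vdashn\phi$. Here $\mathrm{Th}(s)$ is the set of declaratives true at every world in $s$. The lemma is proved by induction on $\phi$. The Boolean and $\lori$ cases use the normal form and Split. The modal case $(\phi\Toa\psi)$ needs, for each $\Gamma$ with $(\phi\Toa\psi)\notin\Gamma$, a neighborhood that supports $\phi$ but not $\psi$. The approach is to reduce $\phi$ to a resolution $\aa$ via Left disjunction. We then take the state of worlds satisfying $\aa$ together with the consequences $\{\chi:(\aa\Toa\chi)\in\Gamma\}$. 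Right conjunction and Transitivity guarantee that this state respects every $\Toa$-formula in $\Gamma$, and that it fails to support $\psi$.

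\emph{Main obstacle.} I expect the modal step of the support lemma to be the main obstacle. The difficulty is defining the canonical neighborhoods so that one family works uniformly, both for formulas in $\Gamma$ (no counter-neighborhood may exist) and for formulas outside it (a witness must exist), while also keeping neighborhoods non-empty. The $\idia_a$ formulas, i.e.\ $\neg(\phi\Toa\bot)$, are what rule out empty witnesses. Since the agents do not interact in the axioms, this step is literally the uni-modal argument repeated per agent. Strong completeness then follows by compactness of $\vdashn$ together with the canonical model.
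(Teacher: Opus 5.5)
\textbf{Gap in the modal step (the witness construction).} Your overall architecture is the one the paper inherits from the uni-modal proof: soundness by induction, a canonical model on complete theories of declaratives, neighborhoods as the non-empty states $S$ respecting every $(\chi\Toa\xi)\in\Gamma$, and a support lemma. One small fix: the respecting condition must be stated proof-theoretically, as ``$\bigcap S\vdashn\chi$ implies $\bigcap S\vdashn\xi$''. Phrasing it with $t\models\phi$ is circular while $\Sigma_a$ is being defined.

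The genuine problem is the witness you propose when $(\phi\Toa\psi)\notin\Gamma$. You choose $\aa\in\R(\phi)$ with $(\aa\Toa\psi)\notin\Gamma$ and take the worlds satisfying $\aa$ and every $\chi$ with $(\aa\Toa\chi)\in\Gamma$. This state does fail to support $\psi$. It also respects the conditionals of $\Gamma$ with declarative consequents. It does \emph{not} respect those with inquisitive consequents. If the state supports $\phi'$ and $(\phi'\Toa\psi')\in\Gamma$, Right conjunction and Transitivity give only $(\aa\Toa\psi')\in\Gamma$. They never give $(\aa\Toa\bb)\in\Gamma$ for a particular $\bb\in\R(\psi')$, because $\Toa$ does not distribute over $\lori$ on the right. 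And the set of worlds where $\psi'$ is \emph{true} need not \emph{support} $\psi'$.

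Concretely, let $\Gamma$ contain $\ibox_a{?p}$, $\neg(\top\Toa p)$ and $\neg(\top\Toa\neg p)$. This is consistent: it holds at $w$ in the clay-animal model with $p=\textsf{cat}$. To witness $(\top\Toa p)\notin\Gamma$, your state is the set of worlds satisfying every $\chi$ with $(\top\Toa\chi)\in\Gamma$. It contains a $p$-world and a $\neg p$-world, since otherwise $(\top\Toa\neg p)$, resp.\ $(\top\Toa p)$, would be in $\Gamma$. So it supports $\top$ but not ${?p}$, and is not in $\Sigma_a(\Gamma)$ at all. A correct witness must lie inside $|\neg p|$, i.e.\ it must \emph{resolve} the question.

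\textbf{What is needed instead.} You need a simultaneous choice of resolutions for all triggered inquisitive consequents. This choice must stay consistent, avoid $\psi$, and be closed under the further conditionals that the chosen resolutions themselves trigger. Applying Right conjunction and Transitivity one formula at a time does not achieve this.

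The proof the paper defers to obtains this choice as follows; its depth-bounded version is the Existence Lemma~\ref{lemma:existence} in the appendix, and the arguments transfer verbatim to $\vdashn$ over all of $\La$. The key tool is the cut property of $\Togamma$ (Lemma~\ref{lemma:cut}). Its derivation in Figure~\ref{fig:derivation} is where Left disjunction does real work, together with the distributivity $\phi_1\land(\psi_2\lori\chi)\to\psi_2\lori(\phi_1\land\chi)$. From the cut property one gets the Splitting Lemma~\ref{lemma:splitting}, and the witness is built in three steps:
\begin{enumerate}
\item Extend $\{\phi\}$ and $\{\psi\}$ to a partition $\Left,\Right$ of the whole language with $\Left\not\Togamma\Right$.
\item Pick a resolution $\Delta\in\R(\Left)$ with $\Delta\not\vdashn\Right$.
\item Take the worlds containing $\Delta$.
\end{enumerate}
This state lies in $\Sigma_a(\Gamma)$ for two reasons: $\Left$ is closed under the conditionals in $\Gamma$, and $\Delta$ contains a resolution of each member of $\Left$. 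Since $\psi\in\Right$, the state fails to support $\psi$. Non-emptiness comes from $\Delta\not\vdashn\bot$, so, contrary to your remark, no $\idia_a$-formulas are involved; \inqnla\ has no such axioms. With this witness in place of yours, the rest of your plan goes through.
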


\noindent
To obtain a system for \inqal, we add three axiom schemata reflecting the properties of actual effectivity functions we identified in Theorem~\ref{theor:characterization}. For all declaratives $\aa$, all formulas $\phi_1,\dots,\phi_n$, all agents $a,b$, and all pairwise distinct agents $a_1,\dots,a_{n+1}$, the following are axioms:\footnote{\label{fn:single agent axiomatization}Recall that we assume that $\A$ contains at least two agents. If $\A$ contains a single agent $a$, an axiomatization of \inqal\ is obtained by following two axioms: (i) $\idia_a\top$ (existence of actions); (ii) $\ibox_a?{\aa}$ for any declarative $\aa$ (determinacy). The former captures the condition that $\Sigma_a(w)\neq\emptyset$, the latter the condition that each $s\in\Sigma_a(w)$ is a singleton. As mentioned in Footnote \ref{fn:single agent characterization}, these conditions guarantee that $\Sigma_a$ is the actual effectivity function of some CGS. The completeness proof for this case follows the one below, except that the proof of the Lemma \ref{lemma:key} needs to be adapted. We leave this as an exercise.}
\begin{itemize}
\item $\idia_a\top$\hfill (Existence of actions)
\item $\ibox_a\aa\leftrightarrow\ibox_b\aa\;$\footnote{Note that the restriction to declaratives in this axiom is crucial; for instance, $\ibox_a{?p}\leftrightarrow\ibox_b{?p}$ is not valid in \inqal. Alternatively, we may take as axioms all formulas of the form $\Box_a\phi\leftrightarrow\Box_b\phi$; in this case, no restriction is needed.} \hfill(Uniform range)
\item $\idia_{a_1}\phi_1\land\dots\land\idia_{a_n}\phi_n\,\to\,\neg{\idia_{a_{n+1}}}\neg(\phi_1\land\dots\land\phi_n)$ 
%for all formulas $\phi_1,\dots,\phi_n\in\La$ and distinct agents $a_1,\dots,a_n\in\A$
\hfill (Independence)
\end{itemize}
We write $\vdasha\psi$ if $\psi$ is derivable in this extended system. Similarly, the notations $\Phi\vdasha\Psi$ and $\phi\preqa\psi$ are defined as for $\vdashn$, but with reference to the extended axiom system.
%if we have $\vdasha(\phi\leftrightarrow\psi)$. 
Our task is now to prove that this extended system is sound and (weakly) complete for \inqal.\footnote{An interesting question that we will leave open is whether Theorem \ref{theor:completeness} can be extended to a strong completeness result.}

\begin{theorem}[Soundness and completeness for \inqal] \label{theor:completeness} For all $\phi\in\La$, $\models_{\inqal}\phi\iff\vdasha\phi$.
\end{theorem}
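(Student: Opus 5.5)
Soundness is routine. By Theorem~\ref{theor:completeness-n} the \inqnla\ axioms and rules are valid on all main-models, hence on those induced by cg-models. For the three new schemata we use the direction $1\Rightarrow2$ of Theorem~\ref{theor:characterization}. Existence of actions gives $\idia_a\top$. Uniform range gives $\bigcup\Sigma_a(w)=\bigcup\Sigma_b(w)$, and for a declarative $\aa$ we have $M,w\models\ibox_a\aa$ iff $\bigcup\Sigma_a(w)\subseteq|\aa|$; so the two sides agree. For Independence, suppose $s_i\in\Sigma_{a_i}(w)$ supports $\phi_i$ for $i\le n$, and $t\in\Sigma_{a_{n+1}}(w)$. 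Condition (c), applied with arbitrary neighborhoods for the remaining agents, gives $s_1\cap\dots\cap s_n\cap t\neq\emptyset$. By persistence this intersection supports $\phi_1\land\dots\land\phi_n$, so $t$ does not support $\neg(\phi_1\land\dots\land\phi_n)$.

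For completeness, suppose $\not\vdasha\phi$. By the normal form (Proposition~\ref{prop:normal form}) and the disjunction-property style reasoning standard in inquisitive logic, it suffices to refute a declarative, or to find a state not supporting $\phi$. The plan is a finite canonical construction. Fix $n=\md(\phi)$ and use Proposition~\ref{prop:finiteness}: $\Land$ is finite up to equivalence. Worlds are complete $\vdasha$-consistent theories restricted to $\Land$; I would use a depth-indexed layering, or simply the full canonical model of \inqnla\ restricted to theories containing all instances of the new axioms. From the completeness proof of \inqnl\ we reuse the idea that neighborhoods of $w$ are states $s$ (sets of worlds) that are ``maximal'' for some formula $\psi$ with $\neg(\psi\Toa\bot)$-type witnesses. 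Concretely, for each resolution-type, i.e.\ each set of formulas $\Gamma$ such that $\idia_a\bigwedge\Gamma\in w$, take $s_\Gamma=\{v\mid\ldots\}$ as in \cite{Ciardelli:25neighborhood}, written $\Togamma$-neighborhoods. This gives a finite main-model $M$ satisfying the truth lemma for $\Lan$. The quotient is finite, so the model is finite, and this yields decidability via the finite model property.

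The main obstacle is the key lemma: the main-model obtained this way satisfies conditions (a)--(c) of Theorem~\ref{theor:characterization}, possibly after modification, so that it is induced by a CGS. Direction $2\Rightarrow1$ of Theorem~\ref{theor:characterization} then yields a cg-model refuting $\phi$. Condition (a) follows from $\idia_a\top$. For (b), I would show via Uniform range that $\bigcup\Sigma_a(w)$ is the set of worlds satisfying all $\aa$ with $\ibox_a\aa\in w$, independently of $a$. This may require enlarging neighborhoods by adding suitable copies of worlds, preserving the truth lemma, since support of $\Toa$-formulas depends only on which formulas the neighborhoods support. For (c), given neighborhoods $s_1,\dots,s_k$, let $\phi_i$ be a formula characterizing $s_i$ up to modal depth $n-1$ (possible by finiteness). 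Apply Independence with $n=k-1$ to get that every $a_k$-neighborhood fails to support $\neg(\phi_1\land\dots\land\phi_{k-1})$. This yields a world in $s_k$ compatible with each $\phi_i$. To make the intersection literally nonempty, I would first try unravelling/duplicating worlds: replace each world by copies indexed by tuples of neighborhoods, so that compatibility at the level of formulas becomes actual overlap. Checking that this copying preserves the truth lemma (a bisimulation-style argument for $\Toa$ at depth $n$) is where I expect most of the work.
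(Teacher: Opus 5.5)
Your soundness argument is fine. For the two rules, the relevant fact is that modus ponens and conditional necessitation preserve validity within each single model, and that is what transfers to the subclass of cg-models.

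The gap is in completeness, in exactly the part that is new relative to \inqnl: showing that the canonical model satisfies (a)--(c) of Theorem~\ref{theor:characterization}. For (c) you only reach a world $v\in s_k$ satisfying characteristic formulas of $s_1,\dots,s_{k-1}$. You then propose duplicating worlds to turn this into literal overlap, but leave open whether the truth lemma survives. You also leave open whether (b) survives: adding copies to some $a_i$-neighborhoods changes $\bigcup\Sigma_{a_i}$, and the copies need neighborhoods of their own. So the key lemma is not actually established.

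The missing observation is that no copying is needed once you commit to the depth-layered model $W_n=\bigcup_{m\le n}\K_m$. There the neighborhoods of $\Gamma\in\K_m$ are non-empty subsets of the single finite layer $\K_{m-1}$. Distinct members of $\K_{m-1}$ are separated by declaratives of depth $\le m-1$, so every such neighborhood is closed: $S=\{\Delta\in\K_{m-1}\mid\bigcap S\subseteq\Delta\}$. Indeed, if $\Delta\notin S=\{\Theta_1,\dots,\Theta_\ell\}$, pick $\theta_j\in\Theta_j$ with $\neg\theta_j\in\Delta$; then $\theta_1\lor\dots\lor\theta_\ell\in\bigcap S\setminus\Delta$.

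The paper then shows $\bigcap S_1\cup\dots\cup\bigcap S_k\not\vdasha\bot$. Each $\aa_i\in\bigcap S_i$ yields $\idia_{a_i}\aa_i\in\Gamma$, by the definition of $\Sigma^n_{a_i}$ and completeness of $\Gamma$. So if $\aa_1,\dots,\aa_k\vdasha\bot$, one gets $\idia_{a_k}\neg(\aa_1\land\dots\land\aa_{k-1})\in\Gamma$, contradicting Independence. Lindenbaum then gives $\Delta\in\K_{m-1}$ extending every $\bigcap S_i$, and closedness puts $\Delta$ in every $S_i$.

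Your characteristic-formula argument closes the same way, but only in the layered model and with formulas of depth $m-1$, not $n-1$. For $\Gamma\in\K_m$ with $m<n$, the Independence instance must lie in $\L_\A^{!m}$ to belong to $\Gamma$. In your alternative \quotes{full canonical model}, the step from \quotes{$v$ satisfies the characteristic formula of $s_i$} to \quotes{$v\in s_i$} fails, because worlds are not determined by their depth-$(n-1)$ theories. The layering is what makes (c) work.

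For (b) you likewise need a lemma, not a possible enlargement of neighborhoods: the Range Lemma $\bigcup\Sigma_a^n(\Gamma)=\{\Gamma'\in\K_{m-1}\mid\forall\aa\in\Lad:\ibox_a\aa\in\Gamma\text{ implies }\aa\in\Gamma'\}$. Its $\supseteq$ direction is not automatic: for each such $\Gamma'$ one must produce a neighborhood containing it. The paper does this by showing $\emptyset\not\Togamma\{\neg\aa\mid\aa\in\Gamma'\}$, applying the splitting lemma, and taking $S^{m-1}_\Delta$ for a suitable resolution $\Delta$ of the left part. This requires a neighborhood map that admits such sets, e.g.\ the paper's choice of all non-empty $S\subseteq\K_{m-1}$ respecting every $(\psi\Toa\chi)\in\Gamma$. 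Your $s_\Gamma=\{v\mid\ldots\}$ leaves this unspecified. Given the Range Lemma, (b) follows at once from Uniform range, since $\ibox_a\aa$ and $\ibox_b\aa$ are provably equivalent and have the same depth.
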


\noindent For soundness, we just have to check that the axioms are valid and the inference rules preserve validity. We leave this as an exercise, noting only that the three extra axioms for \inqal\ owe their validity to the three corresponding properties of actual effectivity functions listed in Theorem \ref{theor:characterization}. Towards completeness, we adapt a finite canonical model construction from \cite{Ciardelli:25neighborhood} (in turn building on ideas from \cite{Ciardelli:14aiml}). 

\begin{definition} For $n\in\mathbb N$, an \emph{$n$-bounded complete theory of declaratives} (or $n$CTD for short), is a set of declaratives $\Gamma\subseteq\Land$ satisfying three conditions: (i) deductive closure relative to $\Land$: if $\Gamma\vdasha\aa$ and $\aa\in\Land$, then $\aa\in\Gamma$; (ii) consistency: $\bot\not\in\Gamma$; and (iii) completeness: for all $\aa\in\Land$, $\aa\in\Gamma$ or $\neg\aa\in\Gamma$. We denote the set of all $n$CTDs by $\K_n$. 
%\begin{itemize}
%\item deductive closure relative to $\Land$: if $\Gamma\vdasha\aa$ and $\aa\in\Land$, $\aa\in\Gamma$;
%\item consistency: $\bot\not\in\Gamma$;
%\item completeness relative to $\Land$: for all $\aa\in\Land$, $\aa\in\Gamma$ or $\neg\aa\in\Gamma$.
%\end{itemize}
\end{definition}

\noindent We now show that each $\K_n$ is a non-empty finite set, and that the sets $\K_n$ and $\K_m$ are disjoint for $n\neq m$. 

%For each $n$, the set $\K_n$ is non-empty. Indeed, the following Lindenbaum-type lemma ensures that any proof-theoretically consistent subset of $\Land$ can be extended to am $n$CTD. 

\begin{lemma}\label{lemma:Lindenbaum} If $\Delta\subseteq\Land$ and $\Delta\not\vdasha\bot$, then $\Delta\subseteq\Gamma$ for some $\Gamma\in\K_n$. In particular, $\K_n\neq\emptyset$.
\end{lemma}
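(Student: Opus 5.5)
The plan is a Lindenbaum-style argument that exploits the finiteness of $\Land$ up to provable equivalence. By Proposition~\ref{prop:finiteness}, $\Land$ splits into finitely many classes of $\models_{\inqnla}$-equivalence. By Theorem~\ref{theor:completeness-n}, semantically equivalent formulas are $\vdashn$-interderivable, and hence $\vdasha$-interderivable, since $\vdasha$ extends $\vdashn$. So I will fix finitely many representatives $\aa_1,\dots,\aa_m$ such that every $\aa\in\Land$ satisfies $\aa\preqa\aa_i$ for some $i$. Finiteness avoids any appeal to Zorn's lemma and makes the construction a simple finite induction. (One could instead run the usual countable Lindenbaum construction over an enumeration of $\Land$; the finiteness just makes it cleaner.)

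Starting from $\Delta_0:=\Delta$, I go through $i=1,\dots,m$. I set $\Delta_{i}:=\Delta_{i-1}\cup\{\aa_i\}$ if this set is $\vdasha$-consistent, and $\Delta_i:=\Delta_{i-1}\cup\{\neg\aa_i\}$ otherwise. The key step is to check that consistency is preserved. Suppose both $\Delta_{i-1},\aa_i\vdasha\bot$ and $\Delta_{i-1},\neg\aa_i\vdasha\bot$. By compactness of $\vdasha$ (built into its definition via finite subsets) and the deduction theorem, which holds since the only rules are modus ponens and necessitation applied to theorems, we get $\Delta_{i-1}\vdasha\neg\aa_i$ and $\Delta_{i-1}\vdasha\neg\neg\aa_i$, hence $\Delta_{i-1}\vdasha\bot$, a contradiction. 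This step uses only intuitionistic reasoning, which is available in the system.

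Finally I let $\Gamma:=\{\aa\in\Land\mid \Delta_m\vdasha\aa\}$ and verify the three conditions.

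Closure: if $\Gamma\vdasha\bb$ with $\bb\in\Land$, then $\Delta_m\vdasha\bb$ by transitivity of derivability, so $\bb\in\Gamma$.

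Consistency: $\Delta_m$ is consistent, so $\bot\notin\Gamma$.

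Completeness: given $\aa\in\Land$, pick $i$ with $\aa\preqa\aa_i$. Then $\Delta_m$ derives either $\aa_i$ or $\neg\aa_i$, and therefore derives either $\aa$ or $\neg\aa$. Since $\neg\aa\in\Land$ (negation adds no modal depth), one of $\aa,\neg\aa$ lies in $\Gamma$.

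Clearly $\Delta\subseteq\Gamma$. The ``in particular'' follows by taking $\Delta=\emptyset$, which requires $\not\vdasha\bot$. This is immediate from soundness: the frame with a single world $w$ and $\Sigma_a(w)=\{\{w\}\}$ for every agent satisfies conditions (a)--(c), so by Theorem~\ref{theor:characterization} it comes from a CGS, and it does not validate $\bot$.

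The main obstacle I expect is making sure the transfer of equivalence from \inqnla\ to $\vdasha$ is legitimate, that is, that Proposition~\ref{prop:finiteness} gives provable and not merely semantic equivalence. Theorem~\ref{theor:completeness-n} resolves this.
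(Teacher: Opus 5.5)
Your proposal is correct and is essentially the saturation argument the paper invokes; running it as a finite induction over representatives of $\Land$ modulo $\preqa$ (obtained from Proposition~\ref{prop:finiteness} and Theorem~\ref{theor:completeness-n}, exactly as in the paper's proof of Lemma~\ref{lemma:finiteness}) instead of over a countable enumeration of $\Land$ is only a cosmetic simplification. Your soundness argument for $\not\vdasha\bot$ via the one-world game also supplies a step the paper leaves implicit in the ``in particular'' clause.
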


\begin{proof} A simple adaptation of the usual saturation argument.
\end{proof}

%\noindent
%Further, each set $\K_n$ is finite, as we are now going to show.

\begin{lemma}\label{lemma:finiteness} Each set $\K_n$ is finite. 
\end{lemma}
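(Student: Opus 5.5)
The plan is to reduce finiteness of $\K_n$ to Proposition~\ref{prop:finiteness}, which says that $\Lan$ has only finitely many formulas up to $\equiv_{\inqnla}$-equivalence. The first step is to pass from semantic equivalence in \inqnla\ to provable equivalence in the extended system. If $\phi\equiv_{\inqnla}\psi$, then $\phi\models_{\inqnla}\psi$ and $\psi\models_{\inqnla}\phi$. By Theorem~\ref{theor:completeness-n}, this gives $\phi\vdashn\psi$ and $\psi\vdashn\phi$. Since the system for $\vdasha$ only adds axioms to the one for $\vdashn$, we conclude $\phi\preqa\psi$. Applying Proposition~\ref{prop:finiteness} to the declaratives in $\Land\subseteq\Lan$, we can then fix a finite set $D_n=\{\delta_1,\dots,\delta_m\}\subseteq\Land$ such that every $\aa\in\Land$ satisfies $\aa\preqa\delta_i$ for some $i$.

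The second step is to show that an $n$CTD $\Gamma$ is determined by its intersection $\Gamma\cap D_n$. Take $\Gamma,\Gamma'\in\K_n$ with $\Gamma\cap D_n=\Gamma'\cap D_n$, and let $\aa\in\Gamma$. Choose $\delta_i$ with $\aa\preqa\delta_i$. Deductive closure of $\Gamma$ relative to $\Land$ gives $\delta_i\in\Gamma$, so $\delta_i\in\Gamma'$ by assumption. Deductive closure of $\Gamma'$ then gives $\aa\in\Gamma'$. By symmetry $\Gamma=\Gamma'$, so the map $\Gamma\mapsto\Gamma\cap D_n$ is injective from $\K_n$ into $\wp(D_n)$. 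Hence $|\K_n|\le 2^m$, and $\K_n$ is finite.

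The only delicate point is the first step. Proposition~\ref{prop:finiteness} speaks of semantic equivalence in \inqnla, while closure of an $n$CTD is defined syntactically via $\vdasha$. Bridging the two requires completeness of the base system (Theorem~\ref{theor:completeness-n}), which is available, together with the observation that $\vdasha$ extends $\vdashn$.

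In this argument, consistency and completeness of $n$CTDs play no role; only deductive closure relative to $\Land$ is used.
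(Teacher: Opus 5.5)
Your proof is correct and follows the paper's own argument. Both transfer finiteness of $\Lan/\equiv_{\inqnla}$ to $\Land/\preqa$ via Theorem~\ref{theor:completeness-n} and the fact that $\vdasha$ extends $\vdashn$, then note that a deductively closed $n$CTD is determined by which equivalence classes it meets, and your injection $\Gamma\mapsto\Gamma\cap D_n$ makes that last step explicit.
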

\begin{proof} Since $\vdasha$ extends $\vdashn$, the equivalence relation $\preqn$ refines $\preqa$. By Theorem \ref{theor:completeness-n}, $\preqn$ coincides with the semantic equivalence relation $\equiv_{\inqnla}$. By Prop.\ \ref{prop:finiteness}, the quotient of $\Lan/\equiv_{\inqnla}$ is finite, and a fortiori, so is the quotient $\Land/\preqa$. Since an $n$CTD is a subset of $\Land$ which is deductively closed, it is fully determined by the equivalence classes of its elements modulo $\preqa$, i.e., it is fully determined by a subset of the quotient $\Land/\preqa$. Since these subsets are finitely many, so are the $n$CTDs.
\end{proof}

\begin{lemma}\label{lemma:uniqueness} $\K_n\cap\K_m=\emptyset$ for $n\neq m$.
\end{lemma}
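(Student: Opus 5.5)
Without loss of generality let $n<m$. The plan is to exhibit a declarative of modal depth $m$ (but not $n$) that must belong to every $m$CTD, and then observe that no $n$CTD can contain it, since every element of an $n$CTD lies in $\Land$. The only thing to check is that such a formula is derivable and has modal depth exactly $m$; then membership in $\Gamma\in\K_m$ is forced by deductive closure relative to $\mathcal{L}_{\mathcal{A}}^{!m}$.

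Concretely, fix an agent $a$ and define $\top_0:=\top$ and $\top_{j+1}:=\ibox_a\top_j=(\top\Toa\top_j)$. Each $\top_j$ is a declarative: $\top=\neg\bot$ is built from $\bot$ and $\to$, and $\top\Toa\top_j$ is a modal formula. The modal depth satisfies $\md(\top_0)=0$ and $\md(\top_{j+1})=\max(\md(\top),\md(\top_j))+1=j+1$, so $\md(\top_m)=m$. For derivability, proceed by induction on $j$. The case $\vdasha\top_0$ holds by intuitionistic logic ($\bot\to\bot$). Given $\vdasha\top_j$, we get $\vdasha\top\to\top_j$, and conditional necessitation yields $\vdasha\top\Toa\top_j$, i.e., $\vdasha\top_{j+1}$. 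Hence $\vdasha\top_m$, and so $\top_m\in\Gamma$ for every $\Gamma\in\K_m$.

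Now suppose $\Gamma\in\K_n\cap\K_m$. Then $\Gamma\subseteq\Land$, so every element of $\Gamma$ has modal depth at most $n<m$. But $\top_m\in\Gamma$ has modal depth $m$, a contradiction. (If $\K_m$ were empty the claim would be trivial, but Lemma~\ref{lemma:Lindenbaum} shows it is not; in any case the argument does not need this.) There is no real obstacle. The one point to be careful about is that the sets $\Land$ are defined syntactically by modal depth rather than up to equivalence, so a formula of depth exactly $m$ genuinely lies outside $\Land$ even though it is equivalent to $\top$.
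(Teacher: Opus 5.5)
Your proof is correct and is essentially the paper's argument: the paper also uses the derivable formulas $\ibox_a^h\top$, which have modal depth exactly $h$, and observes that for $\Gamma\in\K_n$ one has $\ibox_a^h\top\in\Gamma$ iff $h\le n$, so that $n$ can be read off from $\Gamma$ as the largest such $h$. You use only the half of this that is needed: the theorem $\ibox_a^m\top$ lies in every $m$CTD but, for $n<m$, not in $\Land$. You also spell out the derivability and modal-depth checks that the paper leaves implicit.
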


\begin{proof} Let $a$ be an arbitrary agent and let $\ibox_a^h\top$ abbreviate $\ibox_a\dots\ibox_a\top$ with $h$ occurrences of $\ibox_a$. If $\Gamma\in\K_n$, by deductive closure relative to $\Land$ we have $\ibox_a^h\top\in\Gamma\iff h\le n$, and so $n=\max\{h\mid \ibox_a^h\top\in\Gamma\}$. As a consequence, if $\Gamma\in \K_n\cap \K_m$ then $n=\max\{h\mid \ibox_a^h\top\in\Gamma\}=m$.
\end{proof}

\noindent
We now define for each $n\in\mathbb{N}$ a canonical model suitable for formulas in $\Lan$. % based on $m$CDTs for $m\le n$.

\begin{definition}
For $n\in\mathbb{N}$, we define the main-model $M_n=(W_n,(\Sigma_{a}^n)_{a\in\A},V_n)$ as follows:  $W_n=\bigcup_{m\le n}\K_m$; $V_n(p)=\{\Gamma\in W_n\mid p\in\Gamma\}$; finally, the neighborhood map $\Sigma_{a}^n$ is defined as follows:

\begin{itemize}
\item if $\Gamma\in\K_0$ then $\Sigma_{a}^n(\Gamma)=\{\K_0\}$
\item if $\Gamma\in\K_{m+1}$ then 
$\Sigma_{a}^n(\Gamma)=\{S\subseteq\K_m, S\neq\emptyset\mid \text{for all }(\psi\To_a\!\chi)\in\Gamma:%\\
%\phantom{\Sigma_{a}^n(\Gamma)=\{S\subseteq\K_h, S\neq\emptyset\mid} \bigcap S\vdash\psi\text{ implies }\bigcap S\vdash\chi\}
\bigcap S\vdasha\psi\text{ implies }\bigcap S\vdasha\chi\}$
\end{itemize}
\end{definition}

\noindent
For any $n\in\mathbb{N}$, the model $M_n$ is finite by Lemma \ref{lemma:finiteness}. The following four lemmas play a key role in the completeness proof. The proofs are simple adaptations of those of the corresponding results for \inqnl\ (Lemmas 6.10, 6.12, 6.15, and 6.16 in \cite{Ciardelli:25neighborhood}), We provide the details in Appendix A for completeness.

\begin{lemma}[Intersection Lemma]\label{lemma:intersection}~\\ 
For a set $\Delta\subseteq\Land$, define $S_\Delta^n=\{\Gamma\in\K_n\mid\Delta\subseteq\Gamma\}$. For every $\phi\in\Lan$ we have $\Delta\vdasha\phi\iff \bigcap S_\Delta^n\vdasha\phi$.\footnote{To be fully precise, for $S\subseteq\K_n$, we define $\bigcap S=\{\aa\in\Land\mid \aa\in\Gamma\text{ for all }\Gamma\in S\}$. This definition implies that $\bigcap\emptyset=\Land$, ensuring that the lemma holds also when we have $\Delta\vdasha\bot$, and therefore $S_\Delta^n=\emptyset$. In this case, the sets of formulas $\Delta$ and $\bigcap S_\Delta^n=\bigcap \emptyset=\Land$ indeed derive the same formulas from $\Lan$: all of them.} 
In particular, since $S_\emptyset^n=\K_n$, for every $\phi\in\Lan$ we have $\vdasha\phi\iff\bigcap\K_n\vdasha\phi$.
\end{lemma}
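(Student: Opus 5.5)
The argument has two directions, and I would reduce the case of an arbitrary $\phi\in\Lan$ to the case of declaratives using the normal form (Proposition~\ref{prop:normal form}) together with the Split axiom.

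The right-to-left direction is immediate. Every element of $\bigcap S_\Delta^n$ is derivable from $\Delta$. To see this, take $\aa\in\bigcap S_\Delta^n$ and suppose $\Delta\not\vdasha\aa$. Then $\Delta\cup\{\neg\aa\}$ is consistent: otherwise $\Delta\vdasha\neg\neg\aa$, and Declarative double negation would give $\Delta\vdasha\aa$. Since $\neg\aa\in\Land$, Lemma~\ref{lemma:Lindenbaum} yields some $\Gamma\in\K_n$ with $\Delta\cup\{\neg\aa\}\subseteq\Gamma$. Then $\Gamma\in S_\Delta^n$, so $\aa\in\Gamma$, which contradicts consistency. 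Hence $\Delta\vdasha\bigcap S_\Delta^n$, and derivability transfers. If $S^n_\Delta=\emptyset$, then $\Delta$ is inconsistent (again by Lemma~\ref{lemma:Lindenbaum}), so $\Delta$ derives everything, consistent with the footnote's convention.

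For the left-to-right direction, first take $\phi=\aa$ declarative with $\Delta\vdasha\aa$. Every $\Gamma\in S^n_\Delta$ extends $\Delta$ and is deductively closed relative to $\Land$. Since $\aa\in\Lan$ is declarative, it lies in $\Land$, so $\aa\in\Gamma$ for every such $\Gamma$, and thus $\aa\in\bigcap S^n_\Delta$. For general $\phi\in\Lan$, Proposition~\ref{prop:normal form} together with Theorem~\ref{theor:completeness-n} gives $\phi\preqn\Lori\R(\phi)$. Resolutions do not increase modal depth, so each $\aa\in\R(\phi)$ is in $\Land$. By compactness of $\vdasha$, $\Delta\vdasha\phi$ yields a finite $\Delta_0\subseteq\Delta$ with $\vdasha\bigwedge\Delta_0\to\Lori\R(\phi)$. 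Since $\bigwedge\Delta_0$ is a declarative, Split applied repeatedly gives $\vdasha\Lori_{\aa\in\R(\phi)}(\bigwedge\Delta_0\to\aa)$.

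This last step is the main obstacle. A derivable inquisitive disjunction does not in general let us pick a derivable disjunct, so we need the disjunction property for $\vdasha$ on closed theorems. I would try to obtain it semantically or through the inquisitive intuitionistic basis. The difficulty is that the \inqal\ axioms are schemata, not rules, so the standard argument may not go through directly.

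To avoid needing the disjunction property, I would instead argue pointwise over $S^n_\Delta$. Suppose no disjunct $\aa\in\R(\phi)$ is derivable from $\bigcap S^n_\Delta$. Then for each $\aa$ there is a $\Gamma_\aa\in S^n_\Delta$ with $\aa\notin\Gamma_\aa$, so $\neg\aa\in\Gamma_\aa$. Now consider the declarative $\bb=\bigvee_{\aa\in\R(\phi)}\neg\aa$ taken classically, with $\lor$ defined via $\neg,\land$. Each $\Gamma_\aa$ contains $\bb$, but not all $\Gamma$ need contain it, so this gives no contradiction. The fix is to use that $\R(\phi)$ is finite and that each $\Gamma\in S^n_\Delta$ derives $\Lori\R(\phi)$. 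By Split applied to a complete $\Gamma$, combined with $\vdasha\gg\lori\neg\gg$ for the relevant declaratives, $\Gamma$ contains some resolution $\aa_\Gamma$. Therefore $\bigcap S^n_\Delta$ contains the classical disjunction $\bigvee_{\Gamma}\aa_\Gamma$, but this only reaches a classical disjunction and does not reconstruct $\phi$.

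So the honest route is as follows. We have $\Delta\vdasha\phi$. Write $\Delta\vdasha\phi$ as $\vdasha\delta\to\phi$ with $\delta=\bigwedge\Delta_0\in\Land$. Then $(\delta\to\phi)$ is equivalent to $\Lori_f\bigwedge_\aa(\ldots)$, and every resolution of $\delta\to\phi$ is a declarative of depth $\le n$. I would show that the finitely many declaratives $\gg_1,\dots,\gg_m$ forming $\R(\delta\to\phi)$ cannot all fail in $\bigcap\K_n$, which again needs the disjunction property. I expect this to be the crux, and I would establish it by the argument of Lemma~6.10 in \cite{Ciardelli:25neighborhood}, adapted to the added axioms.
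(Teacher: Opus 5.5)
Your proof of left-to-right ($\Delta\vdasha\phi\Rightarrow\bigcap S^n_\Delta\vdasha\phi$) is never completed. You end by calling a disjunction property ``the crux'' and defer to Lemma 6.10 of \cite{Ciardelli:25neighborhood}, which is just the \inqnl\ version of the lemma you are proving. The trouble is that you try to prove something stronger than needed, namely that some resolution $\aa\in\R(\phi)$ is derivable from $\Delta$ and hence lies in $\bigcap S^n_\Delta$.

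The missing observation is that $\Delta\subseteq\bigcap S^n_\Delta$. This holds because $\Delta\subseteq\Land$ and every $\Gamma\in S^n_\Delta$ contains $\Delta$ by definition; it also covers $S^n_\Delta=\emptyset$, since $\bigcap\emptyset=\Land$. So the finite $\Delta_0\subseteq\Delta$ with $\vdasha\bigwedge\Delta_0\to\phi$, which you get directly from the definition of $\Delta\vdasha\phi$, is also a finite subset of $\bigcap S^n_\Delta$. By monotonicity this witnesses $\bigcap S^n_\Delta\vdasha\phi$. This one-line argument is exactly the paper's; no normal forms, Split, or disjunction property are needed.

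Your worry that the disjunction property might fail is also unfounded. All modal axioms and the \inqal-specific axioms are declaratives, so the property holds for $\vdasha$ by induction on derivations (Lemma~\ref{lemma:proto split}).

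Your right-to-left argument is correct, and it is more economical than the paper's. You apply Declarative double negation and Lemma~\ref{lemma:Lindenbaum} to each member of $\bigcap S^n_\Delta$, which are already declaratives in $\Land$. This shows that every element of $\bigcap S^n_\Delta$ is derivable from $\Delta$, and the result then transfers by transitivity of $\vdasha$. Read your ``$\Delta\vdasha\bigcap S^n_\Delta$'' memberwise: under the paper's multi-conclusion convention it would mean something weaker.

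The paper instead argues by contradiction on $\phi$ itself. That forces it first to extract a resolution $\aa\in\R(\phi)$ with $\bigcap S^n_\Delta\vdasha\aa$ via Lemma~\ref{lemma:provable split}, i.e.\ via the disjunction property. Your route shows that $\bigcap S^n_\Delta$ is exactly the set of $\Land$-consequences of $\Delta$. Once the easy direction is repaired as above, the whole lemma follows without any splitting lemma.
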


\begin{lemma}[Existence Lemma]\label{lemma:existence}~\\ For $m\le n$, if $\Gamma\in\K_m$ and $\neg(\phi\Toa\psi)\in\Gamma$, then there is $S\in\Sigma_a^n(\Gamma)$ such that $\bigcap S\vdasha\phi$ and $\bigcap S\not\vdasha\psi$. 
\end{lemma}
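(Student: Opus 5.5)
The plan is to reduce to a single declarative $\aa$ of modal depth $\le k$ that is as strong as possible while still failing to $a$-force $\psi$ according to $\Gamma$, and then to take $S=S^k_{\{\aa\}}$.

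\textbf{Setup.} Since $\neg(\phi\Toa\psi)$ has modal depth at least $1$, we must have $m=k+1$ for some $k$, so $\Gamma\in\K_{k+1}$ and $\phi,\psi\in\L^k_\A$. The definition of $\Sigma_a^n(\Gamma)$ then requires $S\subseteq\K_k$ non-empty such that for every $(\chi\Toa\theta)\in\Gamma$, $\bigcap S\vdasha\chi$ implies $\bigcap S\vdasha\theta$. Two facts follow from the strict-conditional axioms and will be used repeatedly:
\begin{itemize}
\item If $\vdasha\chi\to\chi'$, then $\vdasha\chi\Toa\chi'$ by conditional necessitation.
\item Combined with Transitivity and Right conjunction, this gives: if $\Gamma$ contains $\chi\Toa\beta_1,\dots,\chi\Toa\beta_r$ and $\beta_1\land\dots\land\beta_r\vdasha\theta$, then $(\chi\Toa\theta)\in\Gamma$.
\end{itemize}
Here we use that all these modal formulas lie in $\Lad$ with depth $\le k+1$, so $\Gamma$ is deductively closed for them and decides each of them.

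\textbf{Choice of $\aa$.} Consider the set $C$ of declaratives $\bb\in\L^{!k}_\A$ with $\bb\vdasha\phi$ and $\neg(\bb\Toa\psi)\in\Gamma$.
\begin{itemize}
\item $C\ne\emptyset$. By Proposition \ref{prop:normal form} (provable in $\vdashn$, hence in $\vdasha$), $\phi\preqa\Lori\R(\phi)$, and resolutions of $\phi$ lie in $\L^{!k}_\A$. If $(\gg\Toa\psi)\in\Gamma$ for every $\gg\in\R(\phi)$, then Left disjunction and Transitivity (through $\phi\Toa\Lori\R(\phi)$) give $(\phi\Toa\psi)\in\Gamma$. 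This contradicts consistency, so by completeness of $\Gamma$ some resolution lies in $C$.
\item A strongest element. By Lemma \ref{lemma:finiteness}'s argument, $\L^{!k}_\A$ is finite modulo $\preqa$. Starting from any element of $C$, we repeatedly replace $\bb$ by $\bb\land\gg$ whenever $\gg\in\L^{!k}_\A$, $\bb\not\vdasha\gg$ and $\neg(\bb\land\gg\Toa\psi)\in\Gamma$. This strictly strengthens $\bb$ modulo equivalence, so it terminates.
\end{itemize}
We obtain $\aa\in C$ with the \emph{maximality property}: for all $\gg\in\L^{!k}_\A$, if $\neg(\aa\land\gg\Toa\psi)\in\Gamma$ then $\aa\vdasha\gg$.

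Put $S=S^k_{\{\aa\}}$. By the Intersection Lemma \ref{lemma:intersection}, $\bigcap S\vdasha\xi\iff\aa\vdasha\xi$ for all $\xi\in\L^k_\A$. Then:
\begin{itemize}
\item $\bigcap S\vdasha\phi$, since $\aa\in C$.
\item $\bigcap S\not\vdasha\psi$. Otherwise $\aa\vdasha\psi$ gives $(\aa\Toa\psi)\in\Gamma$, a contradiction.
\item $S\ne\emptyset$. Otherwise $\aa\vdasha\bot\vdasha\psi$, which fails by the previous point.
\end{itemize}

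\textbf{Neighborhood condition (main obstacle).} The difficulty is that $\theta$ may be a question, so $(\aa\Toa\theta)\in\Gamma$ need not give $(\aa\Toa\bb)\in\Gamma$ for any single declarative $\bb$. Maximality is exactly what handles this. Let $(\chi\Toa\theta)\in\Gamma$ with $\aa\vdasha\chi$.
\begin{enumerate}
\item From $\vdasha\aa\Toa\chi$ and Transitivity, $(\aa\Toa\theta)\in\Gamma$. With $\aa\Toa\aa$ and Right conjunction, $(\aa\Toa\aa\land\theta)\in\Gamma$.
\item Write $\theta\preqa\Lori_i\bb_i$ with $\bb_i\in\R(\theta)\subseteq\L^{!k}_\A$. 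Suppose $(\aa\land\bb_i\Toa\psi)\in\Gamma$ for all $i$. Then Left disjunction yields $(\aa\land\theta\Toa\psi)\in\Gamma$, after passing through $\aa\land\theta\preqa\Lori_i(\aa\land\bb_i)$ via necessitation and Transitivity.
\item Transitivity with step 1 then gives $(\aa\Toa\psi)\in\Gamma$, which is impossible.
\item Hence $\neg(\aa\land\bb_i\Toa\psi)\in\Gamma$ for some $i$. By maximality $\aa\vdasha\bb_i$, so $\aa\vdasha\theta$, i.e., $\bigcap S\vdasha\theta$.
\end{enumerate}
Thus $S\in\Sigma_a^n(\Gamma)$. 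The only delicate bookkeeping is checking that every modal formula used stays within depth $k+1$, so that $\Gamma$ decides it.
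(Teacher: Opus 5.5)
Your proof is correct, and it takes a genuinely different route from the paper's. The paper argues as in the infinitary \inqnl{} setting. From $\{\phi\}\not\Togamma\{\psi\}$ it uses the Cut Lemma (Lemma~\ref{lemma:cut}, proved by the derivation in Figure~\ref{fig:derivation}) to run a Lindenbaum-style saturation, the Splitting Lemma (Lemma~\ref{lemma:splitting}). This partitions $\L_\A^{k}$ into $\Left\ni\phi$ and $\Right\ni\psi$ with $\Left\not\Togamma\Right$. The paper then picks a resolution $\Delta\in\R(\Left)$ with $\Delta\not\vdasha\Right$ via Lemma~\ref{lemma:provable specification} and sets $S=S^{k}_\Delta$; the neighborhood condition follows because $\Left$ is closed under the conditionals in $\Gamma$. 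You instead use the finiteness of $\L_\A^{!k}$ modulo $\preqa$ to select a $\vdasha$-strongest declarative $\aa$ with $\aa\vdasha\phi$ and $\neg(\aa\Toa\psi)\in\Gamma$. Your maximality property, combined with Left disjunction over $\R(\theta)$, does the work that the partition and the choice of a resolution of $\Left$ do in the paper. Your argument is shorter and avoids the Cut Lemma, the Splitting Lemma and the resolution-of-sets machinery. The price is that it relies essentially on the finiteness of the depth-bounded construction (finitely many atoms, bounded modal depth). Unlike the paper's argument, it would therefore not carry over to an infinitary canonical model of the kind needed for strong completeness. In the present finite setting you could push it further and also recover the Range Lemma. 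Take $\Gamma'$ as there and let $\gg\in\Gamma'$ be a conjunction of representatives of the $\preqa$-classes in $\Gamma'$. Then $\ibox_a\neg\gg\notin\Gamma$, so the Existence Lemma yields $S\in\Sigma_a^n(\Gamma)$ with $\bigcap S\not\vdasha\neg\gg$. Any $\Gamma''\in S$ containing $\gg$ must equal $\Gamma'$. So the Splitting Lemma could be dropped from the completeness proof altogether.
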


\begin{lemma}[Range Lemma]\label{lemma:range}~\\ For all $\Gamma\in\K_{m}$ with $m>0$ we have 
$\bigcup\Sigma_a^n(\Gamma)=\{\Gamma'\in\K_{m-1}\mid\forall\aa\in\Lad:\ibox_a\aa\in\Gamma\text{ implies }\aa\in\Gamma'\}$.
\end{lemma}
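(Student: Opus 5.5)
We prove the two inclusions separately. Write $m=k+1$, so $\Gamma\in\K_{k+1}$, and let $R_\Gamma=\{\Gamma'\in\K_k\mid\forall\aa\in\Lad:\ibox_a\aa\in\Gamma\Rightarrow\aa\in\Gamma'\}$. Note that any $\aa$ with $\ibox_a\aa\in\Gamma\subseteq\Lad^{!(k+1)}$ has modal depth at most $k$, so the condition on $\Gamma'$ only involves formulas of $\Lad^{!k}$.

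\emph{Inclusion $\subseteq$.} Let $S\in\Sigma_a^n(\Gamma)$ and $\Gamma'\in S$; then $S\subseteq\K_k$. Suppose $\ibox_a\aa\in\Gamma$, i.e.\ $(\top\Toa\aa)\in\Gamma$. Since $\bigcap S\vdasha\top$ trivially, the definition of $\Sigma_a^n(\Gamma)$ gives $\bigcap S\vdasha\aa$. As $\aa\in\Lad^{!k}$ and $\Gamma'\supseteq\bigcap S$ is deductively closed relative to $\Lad^{!k}$, we get $\aa\in\Gamma'$. Hence $\Gamma'\in R_\Gamma$.

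\emph{Inclusion $\supseteq$.} This is the main step. Given $\Gamma'\in R_\Gamma$, we must find a neighborhood containing it. The plan is a proof by contradiction using the Existence Lemma. Since $\K_k$ is finite (Lemma~\ref{lemma:finiteness}), $\Gamma'$ is axiomatized within $\Lad^{!k}$ by a single declarative $\gamma$, the conjunction of representatives of its members; then $\bigcap\{\Gamma'\}=\Gamma'$ and, by completeness of $\Gamma'$, any CTD in $\K_k$ containing $\gamma$ equals $\Gamma'$. First, $\neg\gamma\in\Lad^{!k}$ and $\Gamma'\not\ni\neg\gamma$, so by the hypothesis $\ibox_a\neg\gamma\notin\Gamma$. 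Completeness of $\Gamma$ then yields $\neg\ibox_a\neg\gamma\in\Gamma$, i.e.\ $\neg(\top\Toa\neg\gamma)\in\Gamma$.

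By the Existence Lemma there is $S\in\Sigma_a^n(\Gamma)$ with $\bigcap S\not\vdasha\neg\gamma$. It remains to show $\Gamma'\in S$. Since $S\subseteq\K_k$, if every $\Delta\in S$ contained $\neg\gamma$ (each $\Delta$ being complete, it contains $\gamma$ or $\neg\gamma$), then $\neg\gamma\in\bigcap S$, a contradiction. So some $\Delta\in S$ contains $\gamma$, whence $\Delta=\Gamma'$ and $\Gamma'\in\bigcup\Sigma_a^n(\Gamma)$.

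The only delicate point is the characteristic formula $\gamma$. It relies on the finiteness of $\Lad^{!k}$ modulo $\preqa$ (via Prop.~\ref{prop:finiteness} and Theorem~\ref{theor:completeness-n}, as in Lemma~\ref{lemma:finiteness}), and on the fact that $\neg\gamma$ stays within modal depth $k$, which is what allows the hypothesis on $\Gamma'$ to be applied to it.
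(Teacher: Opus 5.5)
Your proof is correct. The $\subseteq$ direction is the same as the paper's, but for $\supseteq$ you take a different route.

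The paper does not use the Existence Lemma as a black box. From the hypothesis on $\Gamma'$ it derives $\emptyset\not\Togamma\{\neg\aa\mid\aa\in\Gamma'\}$. It then uses the Splitting Lemma to place all these (infinitely many) formulas $\neg\aa$ in $\Right$, and re-runs the Existence-Lemma construction to obtain a resolution $\Delta$ of $\Left$ with $S^{m-1}_\Delta\in\Sigma^n_a(\Gamma)$. Finally it shows $\Delta\cup\Gamma'\not\vdasha\bot$, so that $\Gamma'\in S^{m-1}_\Delta$ by maximality.

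You instead compress $\Gamma'$ into a single characteristic declarative $\gamma$ of modal depth at most $m-1$. You can get $\gamma$ either from finitely many $\preqa$-representatives, as you say, or from one separating formula for each of the finitely many other members of $\K_{m-1}$, as the paper does at the end of the Representation Lemma. This turns the infinitary condition on $\Gamma'$ into the single formula $\neg(\top\Toa\neg\gamma)\in\Gamma$, to which the Existence Lemma applies directly. There is no circularity, since the Existence Lemma is proved without the Range Lemma.

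Your argument is shorter and more modular. However, it relies essentially on the finiteness of $\K_{m-1}$, and hence on the finite set of atoms and the depth-bounded construction. The paper's argument uses no finiteness assumption, only compactness-style reasoning through the Splitting and Lindenbaum lemmas. It would therefore go through unchanged in an infinite canonical model, for instance in an attempt at the strong completeness result left open here, where no single formula characterizing $\Gamma'$ need exist.
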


\begin{lemma}[Support Lemma]\label{lemma:support}~\\ For all $m\le n$, all non-empty states $S\subseteq\K_m$, and all formulas $\phi\in\L_\A^m$: $M_n,S\models\phi\iff\bigcap S\vdasha\phi$.
\end{lemma}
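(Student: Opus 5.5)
We will prove the Support Lemma by induction on the structure of $\phi\in\L_\A^m$, for fixed $n$ and simultaneously for all $m\le n$ and all non-empty $S\subseteq\K_m$. Throughout we write $\bigcap S$ for the set of declaratives of $\L_\A^{!m}$ common to all members of $S$. Three facts about $\bigcap S$ will be used repeatedly. First, since each $\Gamma\in\K_m$ is deductively closed relative to $\L_\A^{!m}$, so is $\bigcap S$. Second, if $\bigcap S\vdasha\aa$ for a declarative $\aa\in\L_\A^{!m}$, then $\aa\in\bigcap S$. Third, $\bigcap S=\bigcap\{\Gamma\}$ for $S=\{\Gamma\}$.

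\textbf{Atoms, $\bot$, conjunction.} For atoms, $S\subseteq V_n(p)$ iff $p\in\Gamma$ for all $\Gamma\in S$, which holds iff $\bigcap S\vdasha p$ (using closure). For $\bot$, both sides fail: $S\neq\emptyset$ by assumption, and $\bigcap S$ is consistent because it is contained in a consistent theory. Conjunction is immediate from the induction hypothesis.

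\textbf{Inquisitive disjunction.} The right-to-left direction is the nontrivial one. Suppose $\bigcap S\vdasha\phi\lori\psi$ but $\bigcap S\not\vdasha\phi$ and $\bigcap S\not\vdasha\psi$. I will pass to resolutions via Prop.~\ref{prop:normal form}, whose provable counterpart is available from Theorem~\ref{theor:completeness-n}. By compactness there is a single declarative $\gamma\in\bigcap S$ with $\vdasha\gamma\to\Lori\R(\phi\lori\psi)$. The Split axiom then yields $\gamma\vdasha\aa$ for some resolution $\aa$, which lies in $\R(\phi)$ or $\R(\psi)$. This gives $\bigcap S\vdasha\phi$ or $\bigcap S\vdasha\psi$, a contradiction. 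So the disjunction property holds for theories of this form, and the case follows.

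\textbf{Implication.} Here I use persistence together with the Intersection Lemma~\ref{lemma:intersection}. For the left-to-right direction, suppose $\bigcap S\not\vdasha\phi\to\psi$. Reduce $\phi$ to its resolutions; Split and Prop.~\ref{prop:normal form} give a resolution $\aa\in\R(\phi)$ with $\bigcap S,\aa\not\vdasha\psi$. Let $T=\{\Gamma\in S\mid\aa\in\Gamma\}$, which is the set $S^m_{\bigcap S\cup\{\aa\}}$. The set $T$ is non-empty, since $\bigcap S\cup\{\aa\}$ is consistent and by Lemma~\ref{lemma:Lindenbaum}. A small check shows that every element of $S^m_{\bigcap S}$ is indeed in $S$; this uses completeness of the members together with the fact that, by finiteness, $S$ is definable by a declarative. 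The Intersection Lemma then gives $\bigcap T\vdasha\aa$, hence $\bigcap T\vdasha\phi$, but $\bigcap T\not\vdasha\psi$. By the induction hypothesis $T\models\phi$ and $T\not\models\psi$, so $S\not\models\phi\to\psi$. The converse direction uses the induction hypothesis on subsets, together with the fact that $\bigcap T\supseteq\bigcap S$ for $T\subseteq S$.

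\textbf{Modal case $\phi\Toa\psi$.} Here $\phi,\psi\in\L_\A^{m-1}$ and $m>0$. For any $\Gamma\in S\subseteq\K_m$, the neighborhoods in $\Sigma^n_a(\Gamma)$ are non-empty subsets of $\K_{m-1}$, so the induction hypothesis applies to them. Since $\phi\Toa\psi$ is a declarative, $\bigcap S\vdasha\phi\Toa\psi$ iff $(\phi\Toa\psi)\in\Gamma$ for every $\Gamma\in S$. It therefore suffices to prove the claim at single worlds $\Gamma$.

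If $(\phi\Toa\psi)\in\Gamma$ and $U\in\Sigma^n_a(\Gamma)$ with $U\models\phi$, then the induction hypothesis gives $\bigcap U\vdasha\phi$. The definition of $\Sigma^n_a$ yields $\bigcap U\vdasha\psi$, and the induction hypothesis again gives $U\models\psi$. If instead $(\phi\Toa\psi)\notin\Gamma$, then completeness gives $\neg(\phi\Toa\psi)\in\Gamma$. The Existence Lemma~\ref{lemma:existence} then provides a $U$ that supports $\phi$ but not $\psi$.

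The main obstacle is the implication case, specifically the claim that restricting $S$ to the theories containing $\aa$ yields exactly the set $S^m_\Delta$ to which the Intersection Lemma applies. I would first try to show directly that any non-empty $S\subseteq\K_m$ equals $S^m_{\bigcap S}$. This should hold because distinct members of the finite set $\K_m$ are separated by declaratives, and finite disjunctions ($\lor$) of their characteristic formulas lie in $\bigcap S$.
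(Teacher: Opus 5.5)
Your modal case is exactly the paper's argument, and your atom, $\bot$, conjunction and implication cases are sound. The gap is in the inquisitive disjunction case.

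From $\vdasha\gamma\to\Lori\R(\phi\lori\psi)$, the Split axiom and modus ponens only show that the inquisitive disjunction $(\gamma\to\aa_1)\lori\dots\lori(\gamma\to\aa_k)$ is a theorem. They do not show that one of the disjuncts $\gamma\to\aa_i$ is a theorem. No axiom can supply this step. What you need is the disjunction property of the calculus $\vdasha$ itself, which is a meta-theorem.

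In the paper this is Lemma~\ref{lemma:proto split} (if $\vdasha\chi$ then $\vdasha\aa$ for some $\aa\in\R(\chi)$), proved by induction on derivations. The induction relies on three facts:
\begin{enumerate}
\item every non-propositional axiom, including the three new \inqal\ axioms, is a declarative;
\item conditional necessitation only outputs declaratives;
\item modus ponens preserves the property because of the shape of $\R(\psi\to\chi)$.
\end{enumerate}
Its consequence Lemma~\ref{lemma:provable split}, applied to the set of declaratives $\bigcap S$, yields $\bigcap S\vdasha\aa$ for some $\aa\in\R(\phi)\cup\R(\psi)$, hence $\bigcap S\vdasha\phi$ or $\bigcap S\vdasha\psi$. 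Theorem~\ref{theor:completeness-n} does not help here, since it concerns $\vdashn$ rather than $\vdasha$. With this lemma in place of the appeal to Split, your case goes through. It then coincides with the standard argument the paper cites for the connective cases, and your modal case is identical to the one the paper spells out.

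Some smaller remarks on the implication case. What you call the main obstacle can be bypassed. Let $T=\{\Gamma\in S\mid\aa\in\Gamma\}$ and $\bb\in\bigcap T$. Then every $\Gamma\in S$ contains $\aa\to\bb$: it contains $\bb$ if $\aa\in\Gamma$, and $\neg\aa$ otherwise by completeness. So every member of $\bigcap T$ is derivable from $\bigcap S\cup\{\aa\}$. Hence $\bigcap S,\aa\not\vdasha\psi$ directly gives $\bigcap T\not\vdasha\psi$, and also $T\neq\emptyset$, since $\bot\in\bigcap\emptyset$. This needs neither the finiteness of $\K_m$, nor the identity $S=S^m_{\bigcap S}$, nor the Intersection Lemma. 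Your route through these is nonetheless correct.

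In the converse direction, treat $T=\emptyset$ separately (it trivially supports $\psi$), because your induction hypothesis only covers non-empty states. Finally, Split is not needed to find $\aa$: the normal form together with $\lori$-elimination suffices.
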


\noindent
The part of the proof which is genuinely novel, and crucial for our purposes, lies in showing that the canonical models $M_n$ so constructed are induced by some corresponding concurrent game models.

%satisfy conditions (a)-(c) of Theorem \ref{theor:characterization}, and so, they are induced by corresponding concurrent game models.

\begin{lemma}[Representation Lemma] \label{lemma:key} For each $n\in\mathbb{N}$ there is a CGM $\M_n$ such that $M_{\M_n}=M_n$.
\end{lemma}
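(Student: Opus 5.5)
The plan is to show that the neighborhood frame $(W_n,(\Sigma_a^n)_{a\in\A})$ underlying $M_n$ satisfies conditions (a)--(c) of Theorem~\ref{theor:characterization}. Since $\A$ has $k>1$ agents, that theorem then yields a CGS $\S_n$ with $\F_{\S_n}=(W_n,(\Sigma_a^n)_{a\in\A})$, and we set $\M_n=(\S_n,V_n)$. No choice is needed, since $W_n$ is finite by Lemma~\ref{lemma:finiteness} and non-empty by Lemma~\ref{lemma:Lindenbaum}. Note that all neighborhoods of $M_n$ are non-empty by definition, so $M_n$ is a legitimate main-model. For $\Gamma\in\K_0$ we have $\Sigma_a^n(\Gamma)=\{\K_0\}$ for every agent $a$, so (a)--(c) hold trivially, because $\K_0\neq\emptyset$. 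So fix $\Gamma\in\K_{m+1}$ with $m+1\le n$.

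\emph{(a) Existence of actions.} The axiom $\idia_a\top$ has depth $1\le m+1$, so by deductive closure $\neg(\top\Toa\bot)\in\Gamma$. The Existence Lemma~\ref{lemma:existence} then gives some $S\in\Sigma_a^n(\Gamma)$, so $\Sigma_a^n(\Gamma)\neq\emptyset$.

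\emph{(b) Uniform range.} By the Range Lemma~\ref{lemma:range}, $\bigcup\Sigma_a^n(\Gamma)$ consists of those $\Gamma'\in\K_m$ containing every $\aa$ with $\ibox_a\aa\in\Gamma$. Only $\aa\in\L_\A^{!m}$ matter here, since $\Gamma$ contains only formulas of depth at most $m+1$. For such $\aa$, the formula $\ibox_a\aa\leftrightarrow\ibox_b\aa$ is an axiom in $\L_\A^{!m+1}$. Hence deductive closure gives $\ibox_a\aa\in\Gamma\iff\ibox_b\aa\in\Gamma$, and the two unions coincide.

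\emph{(c) Independence.} This is the main step. Take $S_i\in\Sigma_{a_i}^n(\Gamma)$ for $i\le k$. Using finiteness of $\K_m$ and of $\L_\A^{!m}/\preqa$, I would first build for each $i$ a characteristic declarative $\chi_i\in\L_\A^{!m}$ with $S_i=\{\Gamma'\in\K_m\mid\chi_i\in\Gamma'\}$. Concretely, pick for each $\Gamma'\in\K_m$ a declarative $\delta_{\Gamma'}$ equivalent to the finite conjunction of representatives of $\Gamma'$ that singles out $\Gamma'$, and set $\chi_i=\bigvee_{\Gamma'\in S_i}\delta_{\Gamma'}$. Then $\bigcap S_i\vdasha\chi_i$.

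Next I claim $\idia_{a_i}\chi_i\in\Gamma$. Otherwise, by completeness of $\Gamma$, $(\chi_i\To_{a_i}\bot)\in\Gamma$. By the definition of $\Sigma_{a_i}^n$ this forces $\bigcap S_i\vdasha\bot$. That is impossible, because $S_i\neq\emptyset$ and $\bigcap S_i$ is contained in a consistent $m$CTD.

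Now suppose for contradiction that $S_1\cap\dots\cap S_k=\emptyset$. Then no $m$CTD contains $\chi_1\land\dots\land\chi_k$. By Lemma~\ref{lemma:Lindenbaum} this conjunction is inconsistent, so $\vdasha\chi_k\to\neg(\chi_1\land\dots\land\chi_{k-1})$. The modality $\idia_{a_k}$ is monotone: from conditional necessitation and Transitivity, $\vdasha\aa\to\bb$ yields $\vdasha\idia_{a_k}\aa\to\idia_{a_k}\bb$. Hence $\idia_{a_k}\neg(\chi_1\land\dots\land\chi_{k-1})\in\Gamma$. On the other hand, the Independence axiom with the distinct agents $a_1,\dots,a_{k-1},a_k$, applied to $\idia_{a_1}\chi_1,\dots,\idia_{a_{k-1}}\chi_{k-1}\in\Gamma$, gives $\neg\idia_{a_k}\neg(\chi_1\land\dots\land\chi_{k-1})\in\Gamma$. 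All formulas involved have depth at most $m+1$, so deductive closure applies, and $\Gamma$ is inconsistent. This contradiction establishes (c).

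The delicate points are the construction of the characteristic formulas $\chi_i$ and the depth bookkeeping, namely keeping every formula inside $\L_\A^{!m+1}$ so that closure of $\Gamma$ can be used.
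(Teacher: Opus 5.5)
Your proof is correct and follows the paper's argument. Conditions (a) and (b) are handled exactly as in the paper, and (c) uses the same ingredients: the Independence axiom applied to the formulas $\idia_{a_i}\chi_i$ that the definition of $\Sigma_{a_i}^n$ forces into $\Gamma$, together with Lindenbaum and finiteness. The only difference is the order of the steps. You first make each $S_i$ definable by a single declarative $\chi_i$ and then refute $S_1\cap\dots\cap S_k=\emptyset$ directly. The paper first shows that $\bigcap S_1\cup\dots\cup\bigcap S_k$ is consistent and extends it by Lindenbaum to a bounded complete theory $\Delta$. Only then does it use a disjunction of separating formulas, which is the same definability idea, to show $\Delta\in S_i$ for each $i$.
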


\begin{proof}
It suffices to show that the maps $(\Sigma_a^n)_{a\in\A}$ of our canonical model satisfy the conditions (a)-(c) of Theorem \ref{theor:characterization}. Consider a world $\Gamma\in W_n$. We have $\Gamma\in\K_m$ for some $m\le n$. If $m=0$ then $\Sigma_a^n(\Gamma)=\{\K_0\}$ for every agent $a$ and conditions (a)-(c) are obviously satisfied. So, we may assume $m>0$.

\medskip\noindent
(a) \emph{Existence of actions.} Since $\Gamma$ contains the axiom $\idia_a\top$, which is short for $\neg(\top\Toa\bot)$, the existence of some $S\in\Sigma_a^n(\Gamma)$ follows directly from Lemma \ref{lemma:existence}.

\medskip\noindent
(b) \emph{Uniform range.} Take any agents $a,b\in\A$. For any declarative $\aa$, $\ibox_a\aa\leftrightarrow\ibox_b\aa$ is an axiom, so $\ibox_a\aa\preqa\ibox_b\aa$. Since the formulas $\ibox_a\aa$ and $\ibox_b\aa$ also have the same modal depth, $\Gamma$ contains one iff it contains the other. Now using Lemma \ref{lemma:range} for both agents $a$ and $b$ we have:
\begin{eqnarray*}
\bigcup\Sigma_a^n(\Gamma)&= & \{\Gamma'\in\K_{m-1}\mid\forall\aa\in\Lad:\ibox_a\aa\in\Gamma\text{ implies }\aa\in\Gamma'\}\\
&= & \{\Gamma'\in\K_{m-1}\mid\forall\aa\in\Lad:\ibox_b\aa\in\Gamma\text{ implies }\aa\in\Gamma'\}\quad =\quad \bigcup\Sigma_b^n(\Gamma)
\end{eqnarray*}

\noindent
(c) \emph{Independence.} Take any neighborhoods $S_1\in\Sigma_{a_1}^n(\Gamma), \dots, S_k\in\Sigma_{a_k}^n(\Gamma)$. We must prove $S_1\cap\dots\cap S_k\neq\emptyset$. We start by proving the following claim:
$$\qquad\qquad\bigcap S_1 \cup \dots \cup \bigcap S_k\not\vdasha\bot\qquad\quad(*)$$
Towards a contradiction, suppose this is false. Then there are formulas $\aa_1\in\bigcap S_1,\dots,\aa_k\in\bigcap S_k$ such that $\aa_1,\dots,\aa_k\vdasha\bot$ (it suffices to consider a single $\aa_i$ from each $\bigcap S_i$, since $\bigcap S_i$ is closed under conjunction).

By definition of $\Sigma_{a_i}^n$, each $S_i$ is non-empty and thus $\bigcap S_i\not\vdasha\bot$. 
Since $S_i\in\Sigma_{a_i}^n(\Gamma)$ and $\bigcap S_i\vdasha\aa_i$ but $\bigcap S_i\not\vdasha\bot$, again by definition of $\Sigma_{a_i}^n$ it follows that $(\aa_i\To_{a_i}\bot)\not\in\Gamma$. 
Also, since $\Gamma\in\K_m$, by definition of $\Sigma_{a_i}^n(\Gamma)$ we have $S_i\subseteq\K_{m-1}$, which means that $\text{md}(\aa_i)\le m-1$, and thus $\text{md}(\aa_i\To_{a_i}\bot)\le m$. 
Since $\Gamma\in\K_m$ and $(\aa_i\To_{a_i}\bot)\in\L_\A^{!m}$, from $(\aa_i\To_{a_i}\bot)\not\in\Gamma$ we can conclude by the completeness condition on $\Gamma$ that $\neg(\aa_i\To_{a_i}\bot)\in\Gamma$, that is, %by the completeness condition on CTDs. we have 
$\idia_{a_i}\aa_i\in\Gamma$.
So, for each $i\le k$ we have $\idia_{a_i}\aa_i\in\Gamma$. 
But now, by deductive closure relative to $\L_{\A}^{!m}$, $\Gamma$ contains both the axiom
$$\idia_1\aa_1\land\dots\land\idia_{k-1}\aa_{k-1}\to\neg\idia_k\neg(\aa_1\land\dots\land\aa_{k-1})$$
and its antecedent. So, $\Gamma$ must contain the consequent, $\neg\idia_k\neg(\aa_1\land\dots\land\aa_{k-1})$. 

On the other hand, from  $\aa_1,\dots,\aa_k\vdasha\bot$ we have that $\aa_k\vdasha\neg(\aa_1\land\dots\land\aa_{k-1})$, and since $\aa_k\in\bigcap S_k$, also  $\neg(\aa_1\land\dots\land\aa_{k-1})\in\bigcap S_k$. Reasoning again as in the previous paragraph, from this and the fact that $S_k\in\Sigma_{a_k}^n(\Gamma)$ we may conclude that $\Gamma$ contains $\idia_k\neg(\aa_1\land\dots\land\aa_{k-1})$.

So, $\Gamma$ contains both $\idia_k\neg(\aa_1\land\dots\land\aa_{k-1})$ and the negation of this formula; by deductive closure, it must contain $\bot$. But this is a contradiction, since $\Gamma$ is consistent by assumption. Thus, $(*)$ is true.

Thus, $\bigcap S_1 \cup \dots \cup \bigcap S_k$ is a consistent subset of $\L_\A^{!m-1}$. By Lemma \ref{lemma:Lindenbaum} there is a $\Delta\in\K_{m-1}$ with $\bigcap S_1\cup\dots\cup\bigcap S_k\subseteq \Delta$. To conclude, we show that $\Delta\in S_1\cap \dots\cap S_k$, which implies that $S_1\cap \dots\cap S_k\neq\emptyset$.

Take any $i\le k$. We know that $\bigcap S_i\subseteq\Delta$ and we want to prove $\Delta\in S_i$. Towards a contradiction, suppose that $\Delta\not\in S_i$. By Lemma \ref{lemma:finiteness}, we know that $S_i$ is a finite set, 
$S_i=\{\Theta_1,\dots,\Theta_\ell\}$ for some $\ell$ and $\Theta_1,\dots,\Theta_\ell\in\K_{m-1}$. For $j\le\ell$, $\Theta_j\neq\Delta$ and therefore we can find $\theta_j\in\L_{\A}^{!m-1}$ with $\theta_j\in\Theta_j$ and $\neg\theta_j\in\Delta$. Now let $\theta=\theta_1\lor\dots\lor\theta_\ell$. We have $\theta\in\bigcap S_i$ but $\theta\not\in\Delta$, which is a contradiction since $\bigcap S_i\subseteq\Delta$.
\end{proof}

%To establish this, we show that the three specific axioms of \inqal\ ensure that the canonical model $M_n$ satisfies conditions (a)-(c) of our Theorem \ref{theor:characterization}. The details are spelled out in Appendix \ref{app:characterization}. 

\noindent
Finally, we can now put everything together and establish completeness.

\begin{proof}[Proof of Theorem \ref{theor:completeness}, left-to-right.] Suppose $\not\vdasha\phi$ and let $n=\md(\phi)$. By Lemma \ref{lemma:intersection}, we have that $\bigcap \K_n\not\vdasha\phi$, which by Lemma \ref{lemma:support} implies $M_n,\K_n\not\models\phi$. By Lemma \ref{lemma:key}, there is a CGM $\M_n$ that induces $M_n$, and so we have $\M_n,\K_n\not\models\phi$. Since $\phi$ can be falsified in some CGM, $\not\models_{\inqal}\phi$.\end{proof}

% and suppose $\R(\phi)=\{\aa_1,\dots,\aa_h\}$; note that by definition of resolutions, $\md(\aa_i)\le n$ for each $i\le h$. By Lemma \ref{lemma:provable nf}, for $i\le h$ we have $\not\vdasha\aa_i$; by the declarative double negation axiom, this implies $\not\vdasha\neg\neg\aa_i$, which  amounts to $\neg\aa_i\not\vdasha\bot$. Since $\neg\aa_i\not\vdasha\bot$ and $\neg\aa_i\in\Land$, by Lemma \ref{lemma:Lindenbaum} we can find a $\Gamma_i\in\K_n$ with $\neg\aa_i\in\Gamma_i$. Now let $S=\{\Gamma_1,\dots,\Gamma_h\}$. For $i\le h$ we have $\bigcap S\not\vdasha\aa_i$. (If we had $\bigcap S\vdasha\aa_i$, then since $\bigcap S\subseteq\Gamma_i$, also $\Gamma_i\vdasha\aa_i$; but since $\Gamma_i\ni\neg\aa_i$, it would follow that $\Gamma_i\vdasha\bot$, whence $\bot\in\Gamma_i$ by deductive closure, contradicting the consistency of $\Gamma_i$.) Hence, by Lemma \ref{lemma:support}, we have $M_n,S\not\models\aa_i$ for $1\le i\le h$. By Normal Form (Prop.\ \ref{prop:normal form}), it follows that $M_n,S\not\models\phi$. By Lemma \ref{lemma:key}, there is a CGM $\M_n$ that induces $M_n$, and so we have $\M_n,S\not\models\phi$. Since $\phi$ can be falsified in some state of some CGM, we have $\not\models_{\inqal}\phi$.\end{proof}

\noindent
Note that the proof yields, for any $\phi$ which is not provable in \vdasha, a \emph{finite} countermodel. %the set $\P$ of atoms occurring in $\phi$ is finite, therefore the main-model $M_n$ is finite, and so is the cg-model $\M_n$, which is based on the same set of worlds. 
Therefore, our proof also establishes the finite model property of \inqal. In turn, the finite model property together with our (recursive) axiomatization yields the decidability of \inqal. 

\begin{corollary}[Finite model property] If $\not\models_{\inqal}\phi$, $\phi$ can be refuted within a finite cg-model.
\end{corollary}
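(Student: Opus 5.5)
The plan is to read the finite model property off the completeness proof just given. Suppose $\not\models_{\inqal}\phi$. By the soundness direction of Theorem~\ref{theor:completeness}, $\not\vdasha\phi$. Set $n=\md(\phi)$. From here I will follow the left-to-right direction of the completeness proof and check at each step that the resulting countermodel is finite.

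By Lemma~\ref{lemma:intersection}, $\bigcap\K_n\not\vdasha\phi$. Lemma~\ref{lemma:Lindenbaum} gives $\K_n\neq\emptyset$, so Lemma~\ref{lemma:support} applies to the non-empty state $\K_n\subseteq W_n$ and yields $M_n,\K_n\not\models\phi$. By Lemma~\ref{lemma:key}, there is a cg-model $\M_n$ with $M_{\M_n}=M_n$, hence $\M_n,\K_n\not\models\phi$. It then remains to verify finiteness: the worlds of $\M_n$ form $W_n=\bigcup_{m\le n}\K_m$, a finite union of finite sets by Lemma~\ref{lemma:finiteness}.

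The only real obstacle is making sure that the cg-model from the Representation Lemma has the same (finite) set of worlds. This is what the construction in the proof of Theorem~\ref{theor:characterization} does: it builds $\S=(W,\act,\out)$ on the given $W$ itself, requiring only a group structure and a well-ordering on $W$. For finite $W$ both exist without choice, e.g.\ a cyclic group $\mathbb{Z}/|W|$ transported along any enumeration. The action sets $\act(a,w)=\Sigma_a(w)\times W$ are then finite too, although finiteness of actions is not even needed for the notion of finite cg-model (finitely many worlds).

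The last point to check is the set of atoms. The paper fixes a finite $\P$; if instead $\P$ were arbitrary, I would first restrict to the finitely many atoms occurring in $\phi$. The truth of $\phi$ depends only on the valuation of those atoms, so this changes nothing and keeps $W_n$ finite via Proposition~\ref{prop:finiteness}.
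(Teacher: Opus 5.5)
Your proposal is correct and follows the paper's own argument: soundness turns $\not\models_{\inqal}\phi$ into $\not\vdasha\phi$, and the completeness construction then refutes $\phi$ at the state $\K_n$ of the cg-model $\M_n$. That model's set of worlds $W_n=\bigcup_{m\le n}\K_m$ is finite by Lemma~\ref{lemma:finiteness}. Your extra check on the construction in Theorem~\ref{theor:characterization} is harmless but unnecessary, since $M_{\M_n}=M_n$ already forces $\M_n$ to have exactly the worlds $W_n$.
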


%\noindent
%Together, our (recursive) axiomatization of \inqal\ and the finite model property yield decidability. 

\begin{corollary}[Decidability] The problem of deciding if a given $\phi\in\La$ is valid in $\inqal$ is decidable. 
\end{corollary}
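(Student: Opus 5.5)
The plan is to reduce decidability to the two facts already established: the axiomatization $\vdasha$ is recursive and sound and complete for \inqal\ (Theorem~\ref{theor:completeness}), and \inqal\ has the finite model property (the preceding corollary). The procedure runs two semi-decision procedures in parallel on input $\phi\in\La$. The first enumerates all derivations in the axiom system for $\vdasha$ and halts with ``valid'' if one ends in $\phi$. The second enumerates all finite cg-models together with their states and halts with ``not valid'' if one of them fails to support $\phi$. By completeness, a valid $\phi$ eventually makes the first search halt. By the finite model property, an invalid $\phi$ eventually makes the second halt. Soundness ensures that at most one of the two can succeed, so the answer is always correct.

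The first step is to check that the set of axioms is decidable. The axiom schemata are recognizable by pattern matching, given that declarativeness is a syntactic property checkable from the grammar of \Lad. Since $\A$ and $\P$ are finite, the instances of Independence with pairwise distinct agents $a_1,\dots,a_{n+1}$ are also recognizable, because $n+1\le k$. Consequently, the set of theorems is recursively enumerable.

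The second step, which I expect to need the most care, is making the model search effective. A finite CGS may have infinitely many presentations, so I would instead enumerate finite multi-agent neighborhood models $(W,(\Sigma_a)_{a\in\A},V)$ with $W=\{1,\dots,N\}$, together with a valuation restricted to the finitely many atoms of $\phi$. I would keep only those satisfying conditions (a)--(c) of Theorem~\ref{theor:characterization}, which can be checked by brute force on a finite structure. By that theorem, these are exactly the frames induced by CGSs, and since \inqal-support only depends on the induced main-model $M_\M$, nothing is lost. The proof of the theorem even produces a finite CGS when $W$ is finite: the group structure can be taken to be $\mathbb{Z}_N$ and the action sets are finite. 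Finally, support of $\phi$ at each of the finitely many states $s\subseteq W$ is computed by recursion on the clauses of the semantics, each of which quantifies only over finite sets. Hence the set of non-valid formulas is recursively enumerable, and decidability follows.

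An alternative that avoids the parallel search is to note that the countermodel produced in the completeness proof is $M_n$ with $n=\md(\phi)$, whose size is bounded by the finite quotient of Prop.~\ref{prop:finiteness}. This would give a computable size bound, but the parallel search argument suffices and is the one I would present.
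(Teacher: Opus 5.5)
Your proposal is correct and follows the paper's route. The paper also derives decidability from the recursive axiomatization (Theorem~\ref{theor:completeness}) together with the finite model property, via the standard two-sided search. Your use of conditions (a)--(c) of Theorem~\ref{theor:characterization}, to make the enumeration of finite countermodels effective, spells out a detail the paper leaves implicit.
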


%stated by the following analogue of the standard truth-lemma.

%%%%%%%%%%%%%%%%%%%%%%%%%%%%%%%%%%%%
%%%%%%%%%%%%%%%%%%%%%%%%%%%%%%%%%%%%

%%%%%%%%%%%%%%%%%%%%%%%%%%%%%%%%%%%%
%%%%%%%%%%%%%%%%%%%%%%%%%%%%%%%%%%%%

\section{Connection with Socially Friendly Coalition Logic}

In this section we relate \inqal\ to Socially Friendly Coalition Logic (\textsf{SFCL}) \cite{GorankoEnqvist:18}, a generalization of coalition logic \cite{Pauly:02} developed by Goranko and Enqvist based on instantial neighborhood logic \cite{Benthem:17}. Modal formulas in \textsf{SFCL} have the form $[C](\sigma;\pi_1,\dots,\pi_n)$ where $C\subseteq\A$ and $\sigma,\pi_1,\dots,\pi_n$ are formulas. The primitive connectives are $\neg$ and $\lor$. To compare it with \inqal, we restrict to the individual-agent fragment of \textsf{SFCL}, where $C$ is a singleton $\{a\}$, which we may identify with the agent $a$.\footnote{The connection discussed in this section extends straighforwardly to one between full \textsf{SFCL} and the natural generalization of \inqal\ with coalitions. The translations would work in the same way, and in particular, the same exponential blowup discussed below would result when translating a formula from the coalitional version of \inqal\ to \textsf{SFCL}.}
Formulas are interpreted relative to CGMs in a standard truth-conditional fashion. The clause for modal formulas is as follows:
$$\M,w\Vdash[a](\sigma;\pi_1,\dots,\pi_n)\iff \exists\tau_a\in\act(a,w):O_a(w,\tau_a)\subseteq|\sigma|_\M\text{ and }O_a(w,\tau_a)\cap|\pi_i|_\M\neq\emptyset\text{ for }i\le n$$
Based on the observations in \Sec 4, we can define a translation $(\cdot)^*$ from the individual-agent fragment of \textsf{SFCL} to \inqal\ in the following way: $p^*=p$; $(\neg\phi)^*=\neg\phi^*$; $(\phi\lor\psi)^*=\phi^*\lor\psi^*$; and, finally:
$$[a](\sigma;\pi_1,\dots,\pi_n)^*=\neg(\sigma^*\Toa \neg\pi_1^*\lori\dots\lori\neg\pi_n^*)$$
It is easy to check that the translation $\sigma^*$ of a \textsf{SFCL}-formula is always a declarative with the same truth conditions as $\sigma$. Translating in the opposite direction, from \inqal\ to \textsf{SFCL}, is more tricky. First, since the language of \textsf{SFCL} includes only statements, we can only expect to faithfully translate declaratives, and not arbitrary formulas. Moreover, even though $\phi\Toa\psi$ is a declarative, the formulas $\phi$ and $\psi$ need not be declarative, and so, they need not have a translation. Still, we can cook up a translation via their resolutions. We define a translation from the declarative fragment of \inqal\ to \textsf{SFCL} as follows: $p^\star=p$; $\bot^\star=(p\land\neg p)$ for an arbitrary $p\in\P$; $(\aa\land\bb)^\star=\aa^\star\land\bb^\star$; $(\aa\to\bb)^\star=\neg(\aa^\star\land\neg\bb^\star)$; and, crucially,
$$(\phi\Toa\psi)^\star=\bigwedge_{i=1}^n\neg[a](\aa_i^\star;\neg\bb_1^\star,\dots,\neg\bb_m^\star)$$
where $\{\aa_1,\dots,\aa_n\}=\R(\phi)$ and $\{\bb_1,\dots,\bb_m\}=\R(\psi)$. One can prove that for any declarative $\aa\in\Lad$, $\aa$ and its translation $\aa^*$ have the same truth conditions. The proof is identical to the one given for the translation of \inqnl\ into instantial neighborhood logic (Proposition 10.2 in \cite{Ciardelli:25neighborhood}).

In sum, \inqal\ and \textsf{SFCL} are equi-expressive with respect to statements about individual agents. Still, the way things are expressed in these logics is rather different. Note, in particular, that the number of resolutions $\R(\phi)$ of a formula $\phi$ can grow exponentially relative to the size of $\phi$, and thus, so can the length of the translation of a modal formula containing $\phi$. It is plausible to conjecture that this blowup is unavoidable, and so, that \inqal\ is in general exponentially more succinct than \textsf{SFCL}. It is also worth noting that there is currently no complete axiomatization of \textsf{SFCL} (an axiomatization is presented in \cite{GorankoEnqvist:18}, but by the authors' admission (p.c.) the completeness proof contains a mistake). %So, there is currently no complete axiomatization of \textsf{SFCL}. 
It can be hoped that our Theorem \ref{theor:characterization} can be used to establish completeness at least for the individual-agent fragment of \textsf{SFCL}.

%%%%%%%%%%%%%%%%%%%%%%%%%%%%%%%%%%%%
%%%%%%%%%%%%%%%%%%%%%%%%%%%%%%%%%%%%

\section{Conclusions and future work}
We have motivated and investigated \inqal, a logic that allows us to reason not only about what agents can \emph{force} through their actions, but also about what they \emph{enable} or \emph{determine}. An obvious goal for future work is to generalize \inqal\ from individual agents to coalitions. Semantically, this is straightforward. The difficulty lies in establishing a characterization theorem analogous to Theorem \ref{theor:characterization}, which played a crucial role in our proof of completeness and decidability. A further important goal is to extend \inqal\ with temporal operators, yielding an inquisitive version of strategic multi-agent logics like \textsf{ATL} \cite{Alur:02,GorankoDrimmelen:06}. In a different direction, it would be interesting to ask if \inqal, or an extension, can regiment claims to the effect that an agent can partly influence (though perhaps not fully determine) the answer to a question.

\paragraph{Acknowledgements.} Funding from the European Research Council (ERC) under the Horizon Europe research and innovation programme (Project InqML, Grant Agreement No.\ 101116774) is gratefully acknowledged. Thanks to Valentin Goranko for detailed discussions on this topic, and to four anonymous reviewers for precious comments and suggestions.

%%%%%%%%%%%%%%%%%%%%%%%%%%%%%%%%%%%%
%%%%%%%%%%%%%%%%%%%%%%%%%%%%%%%%%%%%
%%%%%%%%%%%%%%%%%%%%%%%%%%%%%%%%%%%%
%%%%%%%%%%%%%%%%%%%%%%%%%%%%%%%%%%%%

\appendix

\section{Appendix: Proofs of standard results}

\subsection{Preliminaries}

We start with some basic results about derivability and resolutions. Essentially the same results appear in many completeness results for inquisitive propositional logic \cite{Ciardelli:23book} and inquisitive modal logics \cite{Ciardelli:14aiml,Ciardelli:18aiml,Ciardelli:25neighborhood}. In each case, we outline the proof and provide a reference where an analogous proof is spelled out.

\begin{lemma}[Provable normal form] \label{lemma:provable nf} For all $\phi\in\La$, $\phi\preqa\Lori\R(\phi)$.
\end{lemma}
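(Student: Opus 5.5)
We prove $\phi\preqa\Lori\R(\phi)$ by induction on the construction of $\phi$. Since $\vdasha$ extends $\vdashn$, it suffices to derive each step in the base system. Alternatively, we could argue semantically: by Proposition~\ref{prop:normal form} the two formulas are equivalent in \inqnla, so Theorem~\ref{theor:completeness-n} gives $\phi\preqn\Lori\R(\phi)$, and hence $\phi\preqa\Lori\R(\phi)$. That shortcut settles the lemma at once. Still, a direct syntactic proof is worth outlining, since the appendix is meant to be self-contained.

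In the base case, $\phi$ is an atom, $\bot$, or a modal formula $(\psi\Toa\chi)$. Then $\R(\phi)=\{\phi\}$, and the claim is trivial.

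For the inductive steps, write $\R(\phi)=\{\aa_1,\dots,\aa_n\}$ and $\R(\psi)=\{\bb_1,\dots,\bb_m\}$, and assume $\phi\preqa\Lori_i\aa_i$ and $\psi\preqa\Lori_j\bb_j$. Three cases remain.
\begin{itemize}
\item For $\lori$, the claim follows immediately from the associativity and commutativity of intuitionistic disjunction.
\item For $\land$, intuitionistic distributivity of $\land$ over $\lori$ gives $\phi\land\psi\preqa\Lori_{i,j}(\aa_i\land\bb_j)$.
\item For $\to$, we first replace $\phi$ and $\psi$ by their normal forms; this is a congruence step in intuitionistic logic. Next, $(\Lori_i\aa_i)\to\chi$ is intuitionistically equivalent to $\bigwedge_i(\aa_i\to\chi)$. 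Then the Split axiom, applicable because each $\aa_i$ is declarative, gives $\aa_i\to\Lori_j\bb_j\preqa\Lori_j(\aa_i\to\bb_j)$; the converse direction is intuitionistically valid. Finally, distributing the conjunction over the resulting inquisitive disjunctions yields $\Lori_{f}\bigwedge_i(\aa_i\to\bb_{f(i)})$, where $f$ ranges over functions $\R(\phi)\to\R(\psi)$. This is exactly $\Lori\R(\phi\to\psi)$.
\end{itemize}

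The main obstacle is the implication case. It is the only point where a non-intuitionistic principle is needed, namely Split, and there we must check that Split is applied only with a declarative antecedent, which holds since resolutions are declaratives. The remaining bookkeeping, distributing a finite conjunction of finite inquisitive disjunctions into a disjunction indexed by choice functions, is routine intuitionistic reasoning by induction on $n$.
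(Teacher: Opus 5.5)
Your direct syntactic induction is essentially the paper's proof. The paper defers to the standard argument (Lemma 4.3.8 in Ciardelli's book), which is an induction on $\phi$ using only intuitionistic reasoning plus Split in the implication case, exactly as you outline. Your semantic shortcut via Proposition~\ref{prop:normal form} and Theorem~\ref{theor:completeness-n} is formally valid, since the paper imports that theorem, but it only relocates the work: the completeness proof for \inqnla\ itself relies on the $\vdashn$-version of this very lemma, so the syntactic induction is the argument that actually establishes the result.
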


\begin{proof} This uses only the propositional component of the axiomatization. See Lemma 4.3.8 in \cite{Ciardelli:23book}.
\end{proof}

\begin{lemma}\label{lemma:proto split} For all $\phi\in\La$, if $\vdasha\phi$ then $\vdasha\aa$ for some $\aa\in\R(\phi)$.
\end{lemma}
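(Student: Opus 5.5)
Since Lemma \ref{lemma:proto split} sits in the appendix of standard results, I will prove it syntactically, by induction on the length of derivations, rather than through the completeness machinery. The canonical model construction itself relies on this lemma, so a semantic argument would be circular. The plan follows the analogous proof for inquisitive propositional and modal logics (cf.\ the \emph{disjunction property} arguments in \cite{Ciardelli:23book,Ciardelli:25neighborhood}). I will prove the stronger statement: \emph{for every derivable $\phi$, some resolution $\aa\in\R(\phi)$ is derivable}. The argument is by induction on the derivation of $\phi$ in the system for $\vdasha$.

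First, the base cases. For the new \inqal\ axioms (Existence of actions, Uniform range, Independence) and the modal axioms (Transitivity, Right conjunction, Left disjunction), I need to compute their resolutions and exhibit a derivable one. Each of these is built from modal formulas $\phi\Toa\psi$ using $\land$ and $\to$, with $\lori$ occurring only inside modal operators, so each is a declarative with $\R(\chi)=\{\chi\}$ and there is nothing to show. This covers $\idia_a\top$, the biconditional $\ibox_a\aa\leftrightarrow\ibox_b\aa$, and the Independence formula, whose $\phi_i$ appear only under $\idia$. The remaining base cases are the intuitionistic axioms, Declarative double negation, and Split. For these I will show, schema by schema, how to choose the function $f$ in the definition of $\R(\phi\to\psi)$ so that the resulting conjunction of implications between resolutions is itself an instance of an intuitionistic or classical-declarative tautology, and hence derivable. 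For example:
\begin{itemize}
\item For $\phi\to\phi\lori\psi$, take $f$ to be the identity into $\R(\phi)\subseteq\R(\phi\lori\psi)$.
\item For $\phi\land\psi\to\phi$, map $\aa\land\bb$ to $\aa$.
\item For Split, for each resolution $\bigwedge(\aa\to f(\aa))$ of the antecedent, note that $\R(\aa)=\{\aa\}$, so $f$ picks a single resolution of $\phi\lori\psi$, lying in $\R(\phi)$ or in $\R(\psi)$. Accordingly, map it to $\aa\to f(\aa)$, viewed as a resolution of $\aa\to\phi$ or of $\aa\to\psi$.
\end{itemize}
This case analysis is routine but is the longest part. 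The key auxiliary fact is that all the purely propositional reasoning among declaratives is available, since declaratives behave classically thanks to Declarative double negation.

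Second, the rules.
\begin{itemize}
\item \emph{Conditional necessitation.} The conclusion $\phi\Toa\psi$ is itself a declarative, so it is its own unique resolution and is derivable. Hence the induction hypothesis is not even needed.
\item \emph{Modus ponens.} This is the main obstacle. From $\vdasha\phi$ and $\vdasha\phi\to\psi$, the induction hypothesis gives a derivable $\aa\in\R(\phi)$ and a derivable $\bigwedge_{\aa'\in\R(\phi)}(\aa'\to f(\aa'))$. Conjunction elimination then yields $\aa\to f(\aa)$, and modus ponens yields $f(\aa)\in\R(\psi)$, as required.
\end{itemize}
The subtlety is that the induction hypothesis must apply to $\phi\to\psi$ with its resolutions computed exactly as in the definition. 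This is fine because the induction is on derivations, and the hypothesis is stated for arbitrary formulas.

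The delicate point is therefore the base case for the intuitionistic axioms involving $\to$ and $\lori$ nested, such as $(\phi\to\chi)\to((\psi\to\chi)\to(\phi\lori\psi\to\chi))$. Here one must carefully compose the choice functions. The plan is as follows. A resolution of the antecedent corresponds to a function $g$ on $\R(\phi)$, and a resolution of $\psi\to\chi$ corresponds to a function $h$ on $\R(\psi)$. Map these to the resolution of $\phi\lori\psi\to\chi$ given by $g\cup h$ on $\R(\phi)\cup\R(\psi)$. Then check that the resulting formula is a propositional tautology over declarative atoms, hence derivable.
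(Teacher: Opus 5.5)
Your proposal is correct and follows the paper's proof. The paper also argues by induction on derivations: the modal and \inqal\ axioms and Conditional Necessitation are trivial because a declarative is its own unique resolution, the propositional axioms are handled by choosing suitable resolution functions, and Modus Ponens is handled by extracting the conjunct $\aa\to f(\aa)$ from the derivable resolution of the implication.

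One small repair is needed in the disjunction-elimination case. There $g\cup h$ need not be a function when $\R(\phi)\cap\R(\psi)\neq\emptyset$ (e.g.\ $\phi=\psi$). Instead, take $k(\gamma)=g(\gamma)$ for $\gamma\in\R(\phi)$ and $k(\gamma)=h(\gamma)$ otherwise; the resulting formula is then derivable exactly as you say.
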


\begin{proof} It suffices to check that this property holds for axioms and is preserved by inference rules. When an axiom is a declarative $\aa$, the claim is trivially true since $\R(\aa)=\{\aa\}$. Since all our modal axioms are declaratives, the only axioms we need to check are the propositional ones. This is a simple and standard exercise (see the proof of Lemma 5.6 in \cite{Ciardelli:18aiml}). 
As for the inference rules, if $\phi$ was obtained by Conditional Necessitation then $\phi$ is declarative and the claim is trivially true. If $\phi$ was obtained by Modus Ponens from $\psi$ and $\psi\to\phi$, then by induction hypothesis some resolutions $\aa\in\R(\psi)$ and $\bb\in\R(\psi\to\phi)$ are derivable. By definition of resolutions of an implication, $\bb$ is a conjunction with one conjunct of the form $\aa\to\gamma$ where $\gamma\in\R(\phi)$. Since $\aa$ and $\aa\to\gamma$ are derivable, $\gamma$ is derivable by Modus Ponens.
\end{proof}

\begin{lemma}\label{lemma:provable split} If $\Gamma\subseteq\Lad$ and $\Gamma\vdasha\phi$, then $\Gamma\vdasha\aa$ for some $\aa\in\R(\phi)$
\end{lemma}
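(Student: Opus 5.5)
The idea is to reduce the claim to Lemma \ref{lemma:proto split}, using the fact that $\Gamma$ consists of declaratives. Suppose $\Gamma\vdasha\phi$. By definition of $\vdasha$ between sets, there is a finite $\Gamma_0=\{\gg_1,\dots,\gg_h\}\subseteq\Gamma$ with $\vdasha\gg\to\phi$, where $\gg:=\gg_1\land\dots\land\gg_h$. (If $\Gamma_0$ is empty, take $\gg=\top$.) Since the $\gg_i$ are declaratives, so is $\gg$, and hence $\R(\gg)=\{\gg\}$.

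Next we compute the resolutions of the implication $\gg\to\phi$. By the clause for implication, functions $f:\R(\gg)\to\R(\phi)$ correspond simply to choices of a single $\aa\in\R(\phi)$. So
$$\R(\gg\to\phi)=\{\gg\to\aa\mid\aa\in\R(\phi)\},$$
where each member is literally a one-conjunct conjunction, i.e.\ the formula $\gg\to\aa$ itself.

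Now apply Lemma \ref{lemma:proto split} to the theorem $\vdasha\gg\to\phi$. This yields some resolution of $\gg\to\phi$ that is derivable, that is, $\vdasha\gg\to\aa$ for some $\aa\in\R(\phi)$. Since $\Gamma_0\subseteq\Gamma$, this gives $\Gamma\vdasha\aa$ by the definition of set-derivability, which is the claim.

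The only delicate point is bookkeeping. We have to make sure that the definition of $\Phi\vdasha\Psi$, applied with the singleton $\Psi=\{\phi\}$, yields an implication from the conjunction of finitely many premises. We also have to check the degenerate case of an empty $\Psi_0$. In that case $\vdasha\gg\to\bot$, and then $\vdasha\gg\to\aa$ for any $\aa\in\R(\phi)$ by ex falso, since $\R(\phi)$ is non-empty. Everything substantive is carried by Lemma \ref{lemma:proto split}, which has already been established.
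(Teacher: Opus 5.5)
Your proof is correct and follows the paper's own argument: take a finite conjunction $\gg$ of premises from $\Gamma$, note that $\R(\gg\to\phi)=\{\gg\to\aa\mid\aa\in\R(\phi)\}$ because $\gg$ is declarative, and apply Lemma \ref{lemma:proto split}. Your treatment of the degenerate cases (empty set of premises, empty disjunction) is bookkeeping that the paper leaves implicit.
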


\begin{proof} If $\Gamma\vdasha\phi$, this means that there is a finite subset $\Gamma_0\subseteq\Gamma$ such that $\bigwedge\Gamma_0\to\phi$ is derivable. Let $\gamma=\bigwedge\Gamma_0$. Since $\Gamma$ is a set of declaratives, $\gamma$ is a declarative, so $\R(\gamma)=\{\gamma\}$. Then, by definition of resolutions $\R(\gamma\to\phi)=\{\gamma\to\aa\mid\aa\in\R(\phi)\}$. By the previous lemma, since $\gamma\to\phi$ is derivable, some particular resolution $\gamma\to\aa$ is derivable. This implies that $\Gamma\vdasha\aa$, and since $\aa\in\R(\phi)$ we are done.
\end{proof}

\begin{definition}[Resolutions for sets] A \emph{resolution function} for a set $\Phi\subseteq\La$ is a function $f$ assigning to each $\phi\in\Phi$ a corresponding resolution $\aa\in\R(\phi)$. The image of $\Phi$ under a resolution function is called a \emph{resolution} of $\Phi$. More formally, the set of resolutions of $\phi$ is defined in the following way:
$\R(\Phi)=\{f[\Phi]\mid f\text{ a resolution function for }\Phi\}$. Note that if $\Gamma\in\Phi$, then for each $\phi\in\Phi$ there is some resolution $\aa\in\R(\phi)$ with $\aa\in\Gamma$.
\end{definition}

\begin{lemma}\label{lemma:provable specification} If $\Phi,\Psi\subseteq\La$ and $\Phi\not\vdasha\Psi$, then $\Gamma\not\vdasha\Psi$ for some $\Gamma\in\R(\Phi)$.
\end{lemma}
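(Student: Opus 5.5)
We prove the contrapositive: assuming $\Gamma\vdasha\Psi$ for every $\Gamma\in\R(\Phi)$, we show $\Phi\vdasha\Psi$. The idea is that every resolution of $\Phi$ already contains a finite part deriving $\Psi$. We then assemble these finite parts, via Provable normal form (Lemma \ref{lemma:provable nf}) and distribution of $\land$ over $\lori$, into a derivation from a finite subset of $\Phi$.

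First we reduce to the case where $\Phi$ is finite, which is the step needing care. Suppose $\Phi$ is infinite. Each resolution function $f$ yields a finite set $\Gamma_f\subseteq f[\Phi]$ and a finite $\Psi_f\subseteq\Psi$ with $\vdasha\bigwedge\Gamma_f\to\Lori\Psi_f$. We need one finite $\Phi_0\subseteq\Phi$ that works for all $f$ at once. Two routes are available.

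\emph{Compactness route.} Give each $\R(\phi)$ the discrete topology; each is finite and non-empty. The space $\prod_{\phi\in\Phi}\R(\phi)$ of resolution functions is then compact by Tychonoff. Fix an $f$, and let $\Phi_f\subseteq\Phi$ be a finite set with $\Gamma_f\subseteq f[\Phi_f]$. Every $g$ agreeing with $f$ on $\Phi_f$ satisfies $\Gamma_f\subseteq g[\Phi]$, so $g[\Phi]\vdasha\Psi_f$. These basic open sets $\{g\mid g\restriction\Phi_f=f\restriction\Phi_f\}$ cover the space, so finitely many suffice. Let $\Phi_0$ be the union of the corresponding finitely many $\Phi_f$, and let $\Psi_0$ be the union of the corresponding $\Psi_f$. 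Both are finite, and every resolution function restricted to $\Phi_0$ derives $\Lori\Psi_0$.

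\emph{Syntactic route.} Alternatively, one can argue syntactically with a Lindenbaum-style construction. If $\Phi$ is countable, enumerate it and build the resolution choices $f(\phi_1),f(\phi_2),\dots$ stepwise, at each step preserving the invariant that $\{f(\phi_1),\dots,f(\phi_i)\}\cup\{\phi_{i+1},\dots\}\not\vdasha\Psi$. This invariant is maintainable by the finite case below. At the end, $f[\Phi]\not\vdasha\Psi$ because derivations are finite. This route is closer to the paper's style, but it needs countability of the language, which holds since $\P$ and $\A$ are finite.

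Then we settle the finite case, $\Phi_0=\{\phi_1,\dots,\phi_r\}$. By Lemma \ref{lemma:provable nf}, each $\phi_i\preqa\Lori\R(\phi_i)$. Hence $\bigwedge\Phi_0$ is provably equivalent to $\Lori_{f}\bigwedge_i f(\phi_i)$, using intuitionistic distributivity of $\land$ over $\lori$; here $f$ ranges over the resolution functions for $\Phi_0$. By hypothesis each $\bigwedge_i f(\phi_i)\vdasha\Lori\Psi_0$. Disjunction elimination for $\lori$, which is intuitionistically valid, then gives $\bigwedge\Phi_0\vdasha\Lori\Psi_0$. Hence $\Phi\vdasha\Psi$, which contradicts the assumption.

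The main obstacle is the passage from infinite to finite $\Phi$, in either form. The remainder is routine propositional reasoning with the intuitionistic base of the axiomatization.
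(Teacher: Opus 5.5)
Your proposal is correct, and your syntactic route is essentially the paper's argument. The paper's one-line proof (``repeated application of Lemma~\ref{lemma:provable nf}, by standard properties of $\lori$'', pointing to Lemma 4.3.7 of Ciardelli's book) is this enumeration: each $\phi_{i+1}$ is replaced by a resolution chosen so that $\Psi$ stays underivable, and finiteness of derivations handles the limit.

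One step deserves more care than ``maintainable by the finite case below''. At stage $i$ only $\phi_{i+1}$ is being resolved, while the context $\Theta=\{f(\phi_1),\dots,f(\phi_i)\}\cup\{\phi_{i+2},\phi_{i+3},\dots\}$ is infinite. What you need is: if $\Theta\cup\{\alpha\}\vdasha\Psi$ for every $\alpha\in\R(\phi_{i+1})$, then $\Theta\cup\{\phi_{i+1}\}\vdasha\Psi$. This follows in two moves. First, pool the finitely many finite subsets of $\Theta$ and $\Psi$ witnessing these derivations, one for each $\alpha$. Then apply $\lori$-elimination together with $\phi_{i+1}\vdasha\alpha_1\lori\dots\lori\alpha_k$, where $\R(\phi_{i+1})=\{\alpha_1,\dots,\alpha_k\}$.

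Your Tychonoff route is genuinely different. It first extracts a single finite $\Phi_0\subseteq\Phi$ and $\Psi_0\subseteq\Psi$ that work uniformly for all resolution functions, and then finishes with one distributivity computation. This cleanly separates the combinatorial part from the propositional part and needs no enumeration of $\Phi$. The cost is that it relies on compactness of $\prod_{\phi\in\Phi}\R(\phi)$: K\"onig's lemma for countable $\Phi$, and a choice principle in general. The enumeration argument, by contrast, is elementary and uses only the countability of $\La$, which is guaranteed here since $\P$ and $\A$ are finite. It also matches the Lindenbaum and splitting-style constructions used elsewhere in the paper's completeness proof.
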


\begin{proof} Repeated application of Lemma \ref{lemma:provable nf}, by standard properties of $\lori$. See Lemma 4.3.7 in \cite{Ciardelli:23book}. 
\end{proof}

\subsection{Intersection Lemma (Lemma \ref{lemma:intersection})}

Consider a set of declaratives $\Delta\subseteq\Land$ and its corresponding set of $n$-bounded complete extensions, $S_\Delta^n=\{\Gamma\in\K_n\mid\Delta\subseteq\Gamma\}$. We want to show that for all $\phi\in\Lan$:
$\Delta\vdasha\phi\iff\bigcap S_\Delta^n\vdasha\phi$.

The direction $\Rightarrow$ is obvious since $\Delta\subseteq\bigcap S_\Delta^n$. For the converse, suppose for a contradiction that for some $\phi$ we had $\bigcap S_\Delta^n\vdasha\phi$ but $\Delta\not\vdasha\phi$. Since $\bigcap S_\Delta^n$ is a set of declaratives, by Lemma \ref{lemma:provable split} we have $\bigcap S_\Delta^n\vdasha \aa$ for some $\aa\in\R(\phi)$. Since $\Delta\not\vdasha\phi$, it follows by Lemma \ref{lemma:provable nf} that $\Delta\not\vdasha\aa$. By the axiom $\neg\neg\aa\to\aa$, this implies $\Delta\not\vdasha\neg\neg\aa$, and therefore $\Delta\cup\{\neg\aa\}\not\vdasha\bot$. Since $\phi\in\Lan$ we have $\aa\in\Land$ and thus also $\Delta\cup\{\neg\aa\}\subseteq\Land$. 
So by Lemma \ref{lemma:Lindenbaum} there is $\Gamma\in\K_n$ such that $\Delta\cup\{\neg\aa\}\subseteq\Gamma$. Now since $\Gamma\in S_\Delta^n$ and $\bigcap S_\Delta^n\vdasha\aa$ we have $\Gamma\vdasha\aa$. So we have $\Gamma\vdasha\neg\aa$ and $\Gamma\vdasha\aa$, whence $\Gamma\vdasha\bot$ and, by deductive closure relative to $\Land$, $\bot\in\Gamma$. But this is a contradiction since $\Gamma\in\K_n$ is consistent by assumption.\hfill$\Box$

%%%%%%%%%%%%%%%%%%%%%%%%%%%%%%%%%%%%
%%%%%%%%%%%%%%%%%%%%%%%%%%%%%%%%%%%%

\subsection{Existence Lemma (Lemma \ref{lemma:existence})}

\noindent
We follow closely the proof of Lemma 6.12 in \cite{Ciardelli:25neighborhood}, adapting it to our multi-agent setting and to our modal depth-bounded canonical model construction. We first establish some preliminary results. 

Let $\Gamma$ be an $n$CTD, with $n>0$. Given two sets $\Phi,\Psi\subseteq\L_\A^{n-1}$, we write ${\Phi\Togamma\Psi}$ if there are finite subsets $\Phi_0\subseteq\Phi$ and $\Psi_0\subseteq\Psi$ such that the formula $\bigwedge\Phi_0\Toa\Lori\Psi_0$
%$(\phi_1\land\dots\land\phi_h)\Toa(\psi_1\lori\dots\lori\psi_m)$
 is in $\Gamma$. Note the following fact.

\begin{lemma}\label{lemma:vdash togamma} If $\Phi,\Psi\subseteq\L_\A^{n-1}$ and $\Phi\vdasha\Psi$, then $\Phi\Togamma\Psi$.
\end{lemma}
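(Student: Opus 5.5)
The argument will be a direct derivation. Suppose $\Phi\vdasha\Psi$ with $\Phi,\Psi\subseteq\L_\A^{n-1}$. By the definition of $\vdasha$ for sets, there are finite $\Phi_0\subseteq\Phi$ and $\Psi_0\subseteq\Psi$ such that
\[
\vdasha\textstyle\bigwedge\Phi_0\to\Lori\Psi_0 .
\]
(If $\Phi_0$ is empty we read $\bigwedge\Phi_0$ as $\top$. If $\Psi_0$ is empty we read $\Lori\Psi_0$ as $\bot$; alternatively, we may enlarge $\Psi_0$ by an arbitrary element of $\Psi$, since $\bot\vdasha\psi$. The degenerate case $\Psi=\emptyset$ can be handled the same way once we read the empty disjunction as $\bot$.)

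We first apply the rule of conditional necessitation to the derivable implication. This gives
\[
\vdasha\textstyle\bigwedge\Phi_0\Toa\Lori\Psi_0 .
\]

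Next we check that this formula belongs to $\Gamma$. It is a declarative, being of the form $(\phi\Toa\psi)$. Its modal depth is at most $n$, because both $\bigwedge\Phi_0$ and $\Lori\Psi_0$ lie in $\L_\A^{n-1}$, and applying $\Toa$ adds exactly one to the depth. So the formula lies in $\Land$ and is provable outright, hence $\Gamma\vdasha$ it. Since $\Gamma$ is an $n$CTD, deductive closure relative to $\Land$ puts it in $\Gamma$, and therefore $\Phi\Togamma\Psi$ by definition.

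The only point needing care is the bookkeeping with modal depth and with empty conjunctions or disjunctions. This is what guarantees that the necessitated formula falls within $\Land$, so that the closure condition on $\Gamma$ applies; there is no real obstacle beyond it.
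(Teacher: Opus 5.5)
Your proposal is correct and is essentially the paper's own proof: take finite $\Phi_0\subseteq\Phi$ and $\Psi_0\subseteq\Psi$ witnessing $\Phi\vdasha\Psi$, apply Conditional Necessitation, note that the resulting modal declarative has modal depth at most $n$ and so lies in $\Land$, and conclude by deductive closure of $\Gamma$. Your remarks on empty conjunctions and disjunctions are harmless bookkeeping that the paper leaves implicit.
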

 
\begin{proof}  If $\Phi\vdasha\Psi$, there are finite subsets $\Phi_0\subseteq\Phi$ and $\Psi_0\subseteq\Psi$ such that $\vdasha\bigwedge\Phi_0\to\Lori\Psi_0$. By Conditional Necessitation, also $\vdasha\bigwedge\Phi_0\Toa\Lori\Psi_0$. Since $\Phi,\Psi\subseteq\L_\A^{n-1}$, the modal depth of the formula $\Phi_0\Toa\Lori\Psi_0$ is at most $n$, and so by deductive closure, this formula must be in $\Gamma$, witnessing $\Phi\Togamma\Psi$.
 \end{proof}
\noindent
Importantly, the relation $\Togamma$ also enjoys the following  cut-like property.

\begin{lemma}\label{lemma:cut} For any two sets $\Phi,\Psi\subseteq\L_\A^{n-1}$ and formula $\chi\in\L_\A^{n-1}$: 
$\Phi\cup\{\chi\}\Togamma\Psi$ and $\Phi\Togamma\Psi\cup\{\chi\}$ implies $\Phi\Togamma\Psi$.
\end{lemma}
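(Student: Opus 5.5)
The plan is to unfold the definition of $\Togamma$ into two formulas of $\Gamma$ and combine them using the strict-conditional axioms (Transitivity, Right conjunction, Left disjunction) together with closure of $\Gamma$ under derivability for formulas in $\L_\A^{!n}$. Here $\Gamma$ is an $n$CTD, as in the definition of $\Togamma$.

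From $\Phi\cup\{\chi\}\Togamma\Psi$ we obtain finite $\Phi_1\subseteq\Phi$ and $\Psi_1\subseteq\Psi$ such that one of the following is in $\Gamma$:
\[
\textstyle\bigwedge\Phi_1\land\chi\Toa\Lori\Psi_1
\quad\text{or}\quad
\bigwedge\Phi_1\Toa\Lori\Psi_1 .
\]
The second case arises when $\chi$ is not among the chosen antecedents. In that case we are done at once, so assume the first. Similarly, $\Phi\Togamma\Psi\cup\{\chi\}$ gives finite $\Phi_2\subseteq\Phi$ and $\Psi_2\subseteq\Psi$ with
\[
\textstyle\bigwedge\Phi_2\Toa\Lori\Psi_2\lori\chi\in\Gamma ,
\]
and again, if $\chi$ does not occur we are done. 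Set $\phi:=\bigwedge(\Phi_1\cup\Phi_2)$ and $\psi:=\Lori(\Psi_1\cup\Psi_2)$; both lie in $\L_\A^{n-1}$, so every modal formula built below has depth at most $n$.

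First I adjust antecedents and consequents by monotonicity. Conditional Necessitation gives the theorems $\phi\Toa\bigwedge\Phi_2$, $\phi\land\chi\Toa\bigwedge\Phi_1\land\chi$, $\Lori\Psi_1\Toa\psi$ and $\Lori\Psi_2\Toa\psi$. By Lemma~\ref{lemma:vdash togamma}-style reasoning (closure of $\Gamma$ under $\vdasha$ within depth $n$), these are in $\Gamma$. Transitivity then yields
\[
\phi\Toa\psi\lori\chi\in\Gamma
\quad\text{and}\quad
\phi\land\chi\Toa\psi\in\Gamma .
\]

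The key step is to chain these two. Conditional Necessitation gives $\phi\Toa\phi$, and Right conjunction then gives $\phi\Toa\phi\land(\psi\lori\chi)$. Next I need $\phi\land(\psi\lori\chi)\Toa\psi$. Intuitionistically, $\phi\land(\psi\lori\chi)\vdash(\phi\land\psi)\lori(\phi\land\chi)$, so by necessitation and Transitivity it suffices to have
\[
(\phi\land\psi)\lori(\phi\land\chi)\Toa\psi .
\]
This follows by Left disjunction from $\phi\land\psi\Toa\psi$ (necessitation) and $\phi\land\chi\Toa\psi$ (already in $\Gamma$). A final Transitivity gives $\phi\Toa\psi\in\Gamma$, which witnesses $\Phi\Togamma\Psi$.

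The main obstacle is bookkeeping rather than conceptual. I must make sure that every intermediate formula is a declarative of depth at most $n$, so that closure of $\Gamma$ relative to $\L_\A^{!n}$ applies; it is, since each is of the form $(\cdot\Toa\cdot)$ with arguments in $\L_\A^{n-1}$. I also need the degenerate cases, where $\chi$ is absent or one of $\Phi_i$, $\Psi_i$ is empty, to be covered by the conventions $\bigwedge\emptyset=\top$ and $\Lori\emptyset=\bot$.
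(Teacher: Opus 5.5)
Your proposal is correct and is essentially the paper's own derivation (Figure~\ref{fig:derivation}): you use Right conjunction to get $\phi\Toa\phi\land(\psi\lori\chi)$, then distribution, Left disjunction and Transitivity, with Conditional Necessitation handling the weakening of antecedents and consequents. One small slip: to obtain $\phi\Toa\psi\lori\chi$ by Transitivity you need $\Lori\Psi_2\lori\chi\Toa\psi\lori\chi$ rather than $\Lori\Psi_2\Toa\psi$, but this is again an instance of Conditional Necessitation, so nothing breaks.
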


\begin{proof}
Suppose $\Phi\cup\{\chi\}\Togamma\Psi$ and $\Phi\Togamma\Psi\cup\{\chi\}$. This means that there are finite sets of formulas $\Phi_0,\Phi_1\subseteq\Phi$ and $\Psi_0,\Psi_1\subseteq\Psi$ such that $\Gamma$ contains the following formulas:
$$(\chi\land\bigwedge\Phi_0)\Toa\Lori\Psi_0\qquad\qquad \bigwedge\Phi_1\Toa(\chi\lori\Lori\Psi_1)$$
We prove that $\Gamma$ must contain $\bigwedge(\Phi_0\cup\Phi_1)\Toa \Lori(\Psi_0\cup\Psi_1)$, thus witnessing ${\Phi\Togamma\Psi}$. To ease notation, we spell out the details for the case in which the relevant sets are all singletons $\Phi_0=\{\phi_0\},\Phi_1=\{\phi_1\},\Psi_0=\{\psi_0\},\Psi_1=\{\psi_1\}$, but the general case is analogous. 

So, we know that $\Gamma$ contains the formulas $(\phi_1\land\chi\Toa\psi_1)$ and $(\phi_2\Toa\psi_2\lori\chi)$, and we want to show that it contains $(\phi_1\land\phi_2\Toa\psi_1\lori\psi_2)$. Since this formula is a declaratives with modal depth $\le n$, and since $\Gamma$ is closed under deduction relative to such formulas, it suffices to show that:
$$(\phi_1\land\chi\Toa\psi_1),(\phi_2\Toa\psi_2\lori\chi)\vdasha(\phi_1\land\phi_2\Toa\psi_1\lori\psi_2)$$
%First note that the following formula is provable in the propositional component of the proof system using the standard axioms for $\land$ and $\lori$:
%\begin{equation}\label{distribution}
%\phi_1\land(\psi_2\lori\chi)\to\psi_2\lori(\phi_1\land\chi)
%\end{equation}
A derivation is given in Figure \ref{fig:derivation}. In the derivation, we indicate explicitly only the modal axioms and rules involved in the reasoning, omitting reference to propositional axioms. We write $(\texttt{MP})$ to indicate \emph{modus ponens} and $(\texttt{CN})$ for \emph{conditional necessitation}. For simplicity, we use the formulas $(\phi_1\land\chi\Toa\psi_1)$ and $(\phi_2\Toa\psi_2\lori\chi)$ as if they were premises; this is legitimate since we will not use the conditional necessitation rule (\texttt{CN}) on these formulas or anything inferred from them. Rewriting the argument with the relevant formulas used throughout as conditional antecedents is tedious but straightforward.\end{proof}

\begin{figure}[t]
\begin{mdframed}
\begin{tabular}{lll}
1. & $\phi_1\land\chi\,\Toa\,\psi_1$ & (premise)\\
2. & $\phi_2\,\Toa\,\psi_2\lori\chi$ & (premise)\\
3. & $\phi_1\land\phi_2\Toa\phi_1$ & (\texttt{CN}) from axiom $\phi_1\land\phi_2\to\phi_1$\\
4. & $\phi_1\land\phi_2\Toa\phi_2$ & (\texttt{CN})  from axiom $\phi_1\land\phi_2\to\phi_2$\\
5. & $\phi_1\land\phi_2\Toa\psi_2\lori\chi$ & (\texttt{MP}) from Transitivity, 4, 2\\
6. & $\phi_1\land\phi_2\Toa\phi_1\land(\psi_2\lori\chi)$ & (\texttt{MP}) from Right Conjunction, 3, 5\\
7. & $\phi_1\land(\psi_2\lori\chi)\Toa\psi_2\lori (\phi_1\land\chi)$ & (\texttt{CN}) from prop.\ validity $\phi_1\land(\psi_2\lori\chi)\to\psi_2\lori(\phi_1\land\chi)$\\
8. & $\phi_1\land\phi_2\Toa\psi_2\lori (\phi_1\land\chi)$ & (\texttt{MP}) from Transitivity, 6, 7\\
9. & $\psi_1\Toa\psi_1\lori\psi_2$ & (\texttt{CN}) from axiom $\psi_1\to\psi_1\lori\psi_2$\\
10. & $\psi_2\Toa\psi_1\lori\psi_2$ & (\texttt{CN}) from axiom $\psi_2\to\psi_1\lori\psi_2$\\
11. & $\phi_1\land\chi\Toa\psi_1\lori\psi_2$ & (\texttt{MP}) from Transitivity, 1, 9\\
12. & $\psi_2\lori (\phi_1\land\chi)\Toa \psi_1\lori\psi_2$ & (\texttt{MP}) from Left Disjunction, 10, 11\\
13. & $\phi_1\land\phi_2\,\Toa\,\psi_1\lori\psi_2$ & (\texttt{MP}) from Transitivity, 8, 12
\end{tabular}
\caption{\label{fig:derivation}Sketch of a proof showing $(\phi_1\land\chi\Toa\psi_1),(\phi_2\Toa\psi_2\lori\chi)\vdasha(\phi_1\land\phi_2\Toa\psi_1\lori\psi_2)$}
\end{mdframed}
\end{figure}

\begin{lemma}[Splitting lemma]\label{lemma:splitting} Let $\Gamma$ be a $n$CTD with $n>0$ and take $\Phi,\Psi\subseteq\L_\A^{n-1}$ with $\Phi\not\Togamma\Psi$. The set $\L_\A^{n-1}$ can be partitioned into sets $\Left$ and $\Right$ such that $\Phi\subseteq\Left$, $\Psi\subseteq\Right$, and $\Left\not\Togamma\Right$.
\end{lemma}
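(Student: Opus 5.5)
This is a Lindenbaum-style argument. The key facts are that $\L_\A^{n-1}$ is countable (indeed finite up to $\preqa$, though countability is all we need) and the cut property of Lemma~\ref{lemma:cut}.

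Fix an enumeration $\chi_0,\chi_1,\chi_2,\dots$ of $\L_\A^{n-1}$. We build increasing sequences of sets $\Phi=\Left_0\subseteq\Left_1\subseteq\dots$ and $\Psi=\Right_0\subseteq\Right_1\subseteq\dots$, all contained in $\L_\A^{n-1}$, maintaining the invariant $\Left_i\not\Togamma\Right_i$. The invariant holds at stage $0$ by hypothesis.

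At stage $i$, consider $\chi_i$. If $\Left_i\cup\{\chi_i\}\not\Togamma\Right_i$, set $\Left_{i+1}=\Left_i\cup\{\chi_i\}$ and $\Right_{i+1}=\Right_i$. Otherwise set $\Left_{i+1}=\Left_i$ and $\Right_{i+1}=\Right_i\cup\{\chi_i\}$. In this second case the invariant is preserved by Lemma~\ref{lemma:cut}: if we had both $\Left_i\cup\{\chi_i\}\Togamma\Right_i$ and $\Left_i\Togamma\Right_i\cup\{\chi_i\}$, we would get $\Left_i\Togamma\Right_i$, contradicting the invariant.

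Now put $\Left'=\bigcup_i\Left_i$ and $\Right'=\bigcup_i\Right_i$. Every formula is placed on at least one side, so $\Left'\cup\Right'=\L_\A^{n-1}$. Since $\Togamma$ is defined via finite subsets, any witness to $\Left'\Togamma\Right'$ would already lie in some $\Left_i,\Right_i$ and break the invariant. Hence $\Left'\not\Togamma\Right'$.

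It remains to obtain a genuine partition, since elements of $\Phi\cap\Psi$ or elements added at stage $0$ might appear on both sides. Set $\Left=\Left'$ and $\Right=\L_\A^{n-1}\setminus\Left'$. Then $\Right\subseteq\Right'$, because every formula outside $\Left'$ was put into $\Right'$. As $\Togamma$ is monotone in both arguments, $\Left\not\Togamma\Right$ follows. We also have $\Phi\subseteq\Left$.

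To see $\Psi\subseteq\Right$, we must check that $\Left\cap\Psi=\emptyset$. This is the one delicate point, and it is handled by Lemma~\ref{lemma:vdash togamma}. Suppose $\chi\in\Left'\cap\Right'$. Then $\{\chi\}\vdasha\{\chi\}$, so $\{\chi\}\Togamma\{\chi\}$, and by monotonicity $\Left'\Togamma\Right'$, which is impossible. So $\Left'\cap\Right'=\emptyset$, and therefore $\Left'$ and $\Right'$ already form a partition with $\Psi\subseteq\Right'$. I expect no real obstacle beyond this disjointness check and the routine finiteness-of-witnesses argument.
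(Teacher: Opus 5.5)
Your proof is correct and follows the paper's argument essentially step for step. It uses the same enumeration-driven construction, preserves the invariant $\Left_i\not\Togamma\Right_i$ via Lemma~\ref{lemma:cut}, takes the limit by finiteness of witnesses, and gets disjointness from $(\chi\Toa\chi)\in\Gamma$, which is exactly what Lemma~\ref{lemma:vdash togamma} yields from $\chi\vdasha\chi$. The only difference is that your detour through $\L_\A^{n-1}\setminus\Left'$ is unnecessary: once you have shown $\Left'\cap\Right'=\emptyset$, the sets $\Left'$ and $\Right'$ already form the required partition.
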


\vspace{-.5cm}
\begin{proof}
Fix an enumeration $(\chi_i)_{i\in\mathbb N}$ of $\L_\A^{n-1}$. % (we assume $\L_\To$ is countable, though this is not strictly necessary). 
Define a sequence of sets $(\Left_i)_{i\in\mathbb{N}}$ and $(\Right_i)_{i\in\mathbb{N}}$ as follows:
\begin{itemize}
\item $\Left_0=\Phi, \Right_0=\Psi$
\item if $\Left_i\cup\{\chi_i\}\not\Togamma\Right_i$ we let $\Left_{i+1}:=\Left_i\cup\{\chi_i\}$ and $\Right_{i+1}=\Right_i$
\item if $\Left_i\cup\{\chi_i\}\Togamma\Right_i$ we let $\Left_{i+1}:=\Left_i$ and $\Right_{i+1}=\Right_i\cup\{\chi_i\}$
\end{itemize}
We show by induction on $i$ that $\Left_i\not\Togamma\Right_i$. For $n=0$ this is true by assumption. Now suppose this is true for $i$ and consider $i+1$. If $\Left_i\cup\{\chi_i\}\not\Togamma\Right_i$, the claim is obvious by definition of $\Left_{i+1}$ and $\Right_{i+1}$. So, suppose $\Left_i\cup\{\chi_i\}\Togamma\Right_i$. Since by induction hypothesis $\Left_i\not\Togamma\Right_i$, Lemma \ref{lemma:cut} implies $\Left_i\not\Togamma\Right_i\cup\{\chi_i\}$, which by definition amounts to $\Left_{i+1}\not\Togamma\Right_{i+1}$.

Now let $\Left=\bigcup_{i\in\mathbb{N}}\Left_i$ and $\Right=\bigcup_{i\in\mathbb{N}}\Right_i$. By construction, $\Phi\subseteq\Left$ and $\Psi\subseteq\Right$. We have $\Left\not\Togamma\Right$, otherwise there would be an $i\in\mathbb{N}$ such that $\Left_i\Togamma\Right_i$, contrary to what we just saw. Moreover, $\Left$ and $\Right$ form a partition of $\L_\A^{n-1}$. By construction, every formula of $\L_\A^{n-1}$ occurs in either set. Moreover, no formula cannot occur in both: to see why, suppose for a contradiction that for some $\chi\in\L_\A^{n-1}$ we have $\chi\in\Left\cap\Right$; since $\chi\Toa\chi$ is a valid declarative with modal depth $\le n$, and since $\Gamma$ is deductively closed with respect to such formulas, we would need to have $(\chi\Toa\chi)\in\Gamma$, contradicting $\Left\Togamma\Right$.
\end{proof}

\noindent
With these preliminaries at hand, we are now ready to complete the proof of the existence lemma.

\medskip\noindent\emph{Proof of Lemma \ref{lemma:existence}.}
Let $\Gamma$ be an $n$CTD with $\neg(\phi\Toa\psi)\in\Gamma$. This implies $n>0$ (otherwise $\Gamma$ would contain only propositional formulas) and furthermore $\phi$ and $\psi$ must be in $\L_\A^{n-1}$. Since $\Gamma$ is consistent, $(\phi\Toa\psi)\not\in\Gamma$, and so $\{\phi\}\not\Togamma\{\psi\}$.
Now extend $\{\phi\}$ and $\{\psi\}$ to sets $\Left$ and $\Right$ as in the previous lemma.
 
Since $\Left\not\Togamma\Right$, by Lemma \ref{lemma:vdash togamma} we have $\Left\not\vdasha\Right$. 
By Lemma \ref{lemma:provable specification} there is a set $\Delta\in\R(\Left)$ with $\Delta\not\vdasha\Right$. Since $\Left\subseteq\L_\A^{n-1}$, we have $\Delta\subseteq\L_\A^{!n-1}$. 
We can now take $S=S_\Delta^{n-1}=\{\Gamma'\in\K_{n-1}\mid \Delta\subseteq\Gamma'\}$. 
We need to verify that (i) $\bigcap S\vdasha\phi$, (ii) $\bigcap S\not\vdasha\psi$ and (iii) $S\in\Sigma_a^n(\Gamma)$.

\begin{itemize}
\item For (i), we have $\phi\in\Left$. Since $\Delta\in\R(\Left)$, for some $\aa\in\R(\phi)$ we have $\aa\in\Delta$. By Lemma \ref{lemma:provable nf}, $\Delta\vdasha\phi$, and thus, since $\phi\in\L_\A^{n-1}$, by 
Lemma \ref{lemma:intersection} also $\bigcap S\vdasha\phi$.

\item For (ii), we have $\psi\in\Right$. Since $\Delta\not\vdasha\Right$, also $\Delta\not\vdasha\psi$. Since $\psi\in\L_\A^{n-1}$, Lemma \ref{lemma:intersection} gives $\bigcap S\not\vdasha\psi$.

\item For (iii), first note that since $\Delta\not\vdasha\Right$, we have $\Delta\not\vdasha\bot$, so by Lemma \ref{lemma:Lindenbaum}, $S\neq\emptyset$. Next, suppose $(\chi\Toa\xi)\in\Gamma$ and $\bigcap S\vdasha\chi$. We need to show that $\bigcap S\vdasha\xi$. Since $(\chi\Toa\xi)\in\Gamma$ and $\Gamma\in\K_n$ we have $\chi,\xi\in\L_\A^{n-1}$.
Since $\bigcap S\vdasha\chi$ and $\chi\in\L_\A^{n-1}$, Lemma \ref{lemma:intersection} gives $\Delta\vdasha\chi$. Since by construction $\Delta\not\vdasha\Right$, it follows that $\chi\not\in\Right$, and since $\Right$ and $\Left$ partition the set $\L_\A^{n-1}$ we have $\chi\in\Left$. Now we must have $\xi\in\Left$ as well, for if we had $\xi\in\Right$ it would follow from $(\chi\Toa\xi)\in\Gamma$ that $\Left\Togamma\Right$, contrary to what we know. Since $\xi\in\Left$ and $\Delta\in\R(\Left)$, for some $\aa\in\R(\xi)$ we have $\aa\in\Delta$, so by Lemma \ref{lemma:provable nf}, $\Delta\vdasha\xi$. Finally, since $\xi\in\L_\A^{n-1}$, Lemma \ref{lemma:intersection} implies $\bigcap S\vdasha\xi$, as desired.\hfill$\Box$
\end{itemize}

%%%%%%%%%%%%%%%%%%%%%%%%%%%%%%%%%%%%
%%%%%%%%%%%%%%%%%%%%%%%%%%%%%%%%%%%%

\subsection{Range Lemma (Lemma \ref{lemma:range})}

We follow the proof of Lemma 6.15 in \cite{Ciardelli:25neighborhood}, adapting it to our multi-agent and depth-bounded setting. 

Consider a $\Gamma\in\K_{m}$ with $m>0$. We want to show the identity: 
$$\bigcup\Sigma_a^n(\Gamma)=\{\Gamma'\in\K_{m-1}\mid\forall\aa\in\Lad:\ibox_a\aa\in\Gamma\text{ implies }\aa\in\Gamma'\}$$

\begin{proof} $(\subseteq)$ Suppose $\Gamma'\in\bigcup\Sigma_a^n(\Gamma)$, that is, $\Gamma'\in S$ for some $S\in\Sigma_a^n(\Gamma)$. By definition of $\Sigma_a^n$, since $\Gamma\in\K_m$ we have $\Gamma'\in\K_{m-1}$.
Now let $\aa$ be a declarative and suppose $\ibox_a\aa\in\Gamma$, that is, $(\top\Toa\aa)\in\Gamma$. Since $S\in\Sigma_a^n(\Gamma)$ and $\bigcap S\vdasha\top$, it follows that $\bigcap S\vdasha\aa$. Since $\Gamma'\in S$, we have $\bigcap S\subseteq\Gamma'$, and so also $\Gamma'\vdasha\aa$. Since $\ibox_a\aa\in\Gamma$ and $\Gamma\in\K_m$, it follows that $\aa\in\L_\A^{!m-1}$. Since $\Gamma'\vdasha\aa$ and $\Gamma'$ is closed under deduction relative to $\L_\A^{!m-1}$, we have $\aa\in\Gamma'$.

$(\supseteq)$ Consider $\Gamma'\in\K_{m-1}$ and suppose for all $\aa\in\Lad$, $\ibox_a\aa\in\Gamma\text{ implies }\aa\in\Gamma'$. We must show that $\Gamma'\in S$ for some $S\in\Sigma_a^n(\Gamma)$.
First, we claim that $\emptyset\not\Togamma\{\neg\aa\mid\aa\in\Gamma'\}$. Towards a contradiction, suppose not: then there are $\aa_1,\dots,\aa_n\in\Gamma'$ such that $(\top\Toa\neg\aa_1\lori\dots\lori\neg\aa_n)\in\Gamma$. Since $\neg\aa_1\lori\dots\lori\neg\aa_n\vdasha\neg(\aa_1\land\dots\land\aa_n)$, by Conditional Necessitation and Transitivity we have $(\top\Toa\neg\aa_1\lori\dots\lori\neg\aa_n)\vdasha(\top\Toa\neg(\aa_1\land\dots\land\aa_n))$. Since $(\top\Toa\neg(\aa_1\land\dots\land\aa_n))\in\L_\A^{m}$ and $\Gamma$ is deductively closed with respect to $\L_\A^m$, we conclude $(\top\Toa\neg(\aa_1\land\dots\land\aa_n))\in\Gamma$, that is, $\ibox_a\neg(\aa_1\land\dots\land\aa_n)\in\Gamma$. By our assumption on $\Gamma'$, we must have $\neg(\aa_1\land\dots\land\aa_n)\in\Gamma'$. But this is impossible, since each $\aa_i$ is in $\Gamma'$ and $\Gamma'$ is consistent.

We have thus established the claim $\emptyset\not\Togamma\{\neg\aa\mid\aa\in\Gamma'\}$. By Lemma \ref{lemma:splitting}, we can partition the language $\L_\A^{m-1}$ into sets $\Left,\Right$ with $\Left\not\Togamma\Right$ and $\{\neg\aa\mid\aa\in\Gamma'\}\subseteq\Right$. Reasoning as in the previous lemma, we can find a 
$\Delta\in\R(\Left)$ with $\Delta\not\vdash\Right$, and we can show that the corresponding set of $m-1$-bounded complete extensions $S_\Delta^{m-1}$ is in $\Sigma_a^n(\Gamma)$. We now claim that $\Gamma'\in S_\Delta^{m-1}$. To show this, it suffices to show that $\Delta\cup\Gamma'\not\vdasha\bot$: if this holds, it follows by Lemma \ref{lemma:Lindenbaum} that there is a $\Gamma''\in\K_{m-1}$ with $\Delta\cup\Gamma'\subseteq\Gamma''$. Since two bounded elements of $\K_{m-1}$ cannot be properly included in one another, we must have $\Gamma'=\Gamma''$, and therefore $\Delta\subseteq\Gamma'$, showing that $\Gamma'\in S_\Delta^{m-1}$ as desired.

So, towards a contradiction, suppose $\Delta\cup\Gamma'\vdasha\bot$. Since $\Gamma'$ is closed under conjunction, this means that there is a single formula $\aa\in\Gamma'$ such that $\Delta\cup\{\aa\}\vdasha\bot$, and so, $\Delta\vdasha\neg\aa$. But this is impossible, since by construction $\neg\aa\in\Right$ and $\Delta\not\vdash\Right$. 

To conclude, we have found a state $S_\Delta^{m-1}$ such that $\Gamma'\in S_\Delta^{m-1}$ and $S_\Delta^{m-1}\in\Sigma_a^n(\Gamma)$, thus showing that $\Gamma'\in\bigcup \Sigma_a^n(\Gamma)$, as required.
\end{proof}

%%%%%%%%%%%%%%%%%%%%%%%%%%%%%%%%%%%%
%%%%%%%%%%%%%%%%%%%%%%%%%%%%%%%%%%%%

\subsection{Support Lemma (Lemma \ref{lemma:support})}

We must show that for all $m\le n$, all  formulas $\phi\in\L_\A^m$, and all non-empty states $S\subseteq\K_m$ we have 
$$M_n,S\models\phi\iff\bigcap S\vdasha\phi$$
The proof is by induction on $\phi$, simultaneously for all $S\subseteq\K_m$. The cases for atoms and connectives are standard (see the proof of Lemma 4.3.15 in \cite{Ciardelli:23book}). We spell out the inductive step for a modal formula $\phi=(\psi\Toa\chi)$. Note that since we are assuming $\phi\in\L_\A^m$ we have $m>0$ and $\psi,\chi\in\L_\A^{m-1}$. 

Suppose $\bigcap S\vdasha(\psi\Toa\chi)$. We must show $M_n,S\models(\psi\Toa\chi)$. For this, take an arbitrary world $\Gamma\in S$ and a state $T\in\Sigma_a^n(\Gamma)$ with $M_n,T\models\psi$. We need to show that $M_n,T\models\chi$.
By definition of $\Sigma_a^n$ we have $T\subseteq\K_{m-1}$. By induction hypothesis on $\psi$, from $M_n,T\models\psi$ we obtain $\bigcap T\vdasha\psi$. Since $\Gamma\in S$ we have $\bigcap S\subseteq\Gamma$, and since $\bigcap S\vdasha(\psi\Toa\chi)$ also $\Gamma\vdash(\psi\Toa\chi)$. Since $(\psi\Toa\chi)\in\L_\A^{!m}$ and $\Gamma\in\K_m$, it follows that $(\psi\Toa\chi)\in\Gamma$. By definition of $\Sigma_a^n$, from $T\in \Sigma_a^n(\Gamma)$, $(\psi\Toa\chi)\in\Gamma$, and $\bigcap T\vdasha\psi$ we can conclude $\bigcap T\vdasha\chi$. Finally, by induction hypothesis on $\chi$, this gives $M_n,T\models\chi$, as required. 

For the converse, suppose $\bigcap S\not\vdasha(\psi\Toa\chi)$. Then there is some $\Gamma\in S$ such that $(\psi\Toa\chi)\not\in\Gamma$. Since $(\psi\Toa\chi)\in\L_\A^{m}$ and $\Gamma\in\K_m$, by completeness we have $\neg(\psi\Toa\chi)\in\Gamma$. By the Existence Lemma (Lemma \ref{lemma:existence}) there is a state $T\in\Sigma_a^m(\Gamma)$ such that $\bigcap T\vdasha\psi$ and $\bigcap T\not\vdasha\chi$. By induction hypothesis on $\psi$ and $\chi$, this means that $M_n,T\models\psi$ and $M_n,T\not\models\chi$. Hence, $M_n,S\not\models(\psi\Toa\chi)$.\hfill$\Box$

%%%%%%%%%%%%%%%%%%%%%%%%%%%%%%%%%%%%
%%%%%%%%%%%%%%%%%%%%%%%%%%%%%%%%%%%%

%%%%%%%%%%%%%%%%%%%%%%%%%%%%%%%%%%%%
%%%%%%%%%%%%%%%%%%%%%%%%%%%%%%%%%%%%

\newpage
\nocite{*}
\bibliographystyle{eptcs}
\bibliography{aiml26-with-dois}

%\section{Proof of the Representation Lemma (Lemma \ref{lemma:key})}
%\label{app:characterization}
%

%$$\infer[(\texttt{MP})]{\psi}{\phi & \phi\to\psi}\qquad\qquad \infer[(\texttt{CN})]{\phi\To\psi}{\phi\to\psi}$$

%%%%%%%%%%%%%%%%%%%%%%%%%%%%%%%%%%%%%%%%%%
%%%%%%%%%%%%%%%%%%%%%%%%%%%%%%%%%%%%%%%%%%

%\bibliographystyle{eptcs}
%\bibliography{aiml26}

%%%%%%%%%%%%%%%%%%%%%%%%%%%%%%%%%%%%%%%%%%
%%%%%%%%%%%%%%%%%%%%%%%%%%%%%%%%%%%%%%%%%%
%%%%%%%%%%%%%%%%%%%%%%%%%%%%%%%%%%%%%%%%%%
%%%%%%%%%%%%%%%%%%%%%%%%%%%%%%%%%%%%%%%%%%

\end{document}